\documentclass{article}
\usepackage{amsmath,amssymb,amsthm}
\usepackage{centernot}
\usepackage{xcolor,cite}
\usepackage{graphicx}
\usepackage{subfigure}
\usepackage{multirow,enumerate}
\usepackage{fullpage}
\usepackage[titletoc,title]{appendix}

\usepackage{tikz}
\usetikzlibrary{arrows,shapes,trees,calc}

\def\PP{\mathbb P}

\def\RR{\mathbb R}

\def\cE{\mathcal E}

\def\b1{\mathbf 1}

\def\pi{{u}}

\newcommand{\hide}[1]{}

\newtheorem{theorem}{Theorem}

\newtheorem{proposition}{Proposition}

\newtheorem{claim}{Claim}
\newtheorem{remark}{Remark}
\newtheorem{conjecture}{Conjecture}
\newtheorem*{observation1}{Observation 1}
\newtheorem*{observation2a}{Observation 2A}
\newtheorem*{observation2b}{Observation 2B}
\newtheorem*{observation3a}{Observation 3A}
\newtheorem*{observation3b}{Observation 3B}

\title{A three-person deterministic graphical game\\ without Nash equilibria
\thanks{The first author thanks the National Science Foundation for partial support (Grant and IIS-1161476). The second author was partially funded by the Russian Academic Excellence Project `5-100'.
The work of the third author is supported in part by the Slovenian Research Agency (I$0$-$0035$, research program P$1$-$0285$, research projects N$1$-$0032$, J$1$-$5433$, J$1$-$6720$, J$1$-$6743$, and J$1$-$7051$). The work for this paper was done in the framework of
bilateral projects between Slovenia and the USA, partially financed by the Slovenian Research Agency (BI-US/$16$--$17$--$027$ and BI-US/$16$--$17$--$030$).}}

\author{
Endre Boros\thanks {MSIS \& RUTCOR, Business School, Rutgers University,
100 Rockafellar Road, Piscataway NJ  08854;
endre.boros@rutgers.edu}
\and
Vladimir Gurvich\thanks {MSIS \& RUTCOR,  Business School, Rutgers University,
100 Rockafellar Road, Piscataway NJ  08854 and
Research University: Higher School of Economics (HSE), Moscow, Russia; vladimir.gurvich@rutgers.edu}
\and
Martin Milani\v{c}\thanks {UP IAM and UP FAMNIT, University of Primorska, Koper, Slovenia; martin.milanic@upr.si}
\and
Vladimir Oudalov\thanks
{Salient Management Company, 15 Roszel Rd.~suite 109, Princeton NJ 08540-6248; oudalov@gmail.com}
\and
Jernej Vi\v{c}i\v{c}\thanks {UP IAM and UP FAMNIT, University of Primorska, Koper, Slovenia; jernej.vicic@upr.si}}

\date{\today}

\begin{document}

\maketitle

\begin{abstract}
We give an example of a three-person {deterministic graphical} game that has no Nash equilibrium in pure stationary strategies.
The game has seven positions, four outcomes (a unique cycle and three terminal positions), and
its normal form is of size $2 \times 2 \times 4$ only.
Thus, our example strengthens significantly the one obtained in 2014 by Gurvich and Oudalov;
the latter has four players, five terminals, and a $2\times 4 \times 6 \times 8$ normal form.
Furthermore, our example is minimal with respect to the number of players.
Both examples are tight but not Nash-solvable. Such examples were known since 1975, but
they were not related to {deterministic graphical} games.
Moreover, due to the small size of our example, we can strengthen it further by showing that
it has no Nash equilibrium not only in pure but also in independently mixed strategies, for both Markovian and a priori evaluations.

\medskip

{\bf Keywords:}
deterministic graphical multi-person game, perfect information,
Nash equilibrium, directed cycle, terminal position, pure stationary strategy.
\end{abstract}

\section{Introduction}
\label{s0}

{In 1950, Nash introduced his fundamental concept of equilibrium for \hbox{$n$-person} games \cite{Nas50, Nas51}.
In this paper we study the existence of a Nash equilibrium (NE) in pure and in independently mixed stationary strategies
in the so-called {\em deterministic graphical multi-person (DGMP) games}.
These are $n$-person positional games with perfect information, without positions of chance, and with terminal payoffs.
Such games are modeled by a finite directed graph that may have directed cycles.
The reader can find definitions and more details in Section \ref{s0a}.

There are two well-studied classes of DGMP games
for which the existence of an NE in pure stationary strategies is known:
two-person zero-sum games and acyclic games.
In the former case an NE is called a saddle point.
This case has a long history going back to pioneering works
by Zermelo~\cite{Zer1912}, K\"onig~\cite{Kon27}, and Kalm\'ar~\cite{Kal28}.
Further results were obtained in the much more general context of stochastic games,
for which Liggett and Lippman~\cite{LL69} corrected the original proof by Gillette~\cite{Gil57} and thus established the existence of a saddle point in pure stationary uniformly optimal strategies. The last property means that these strategies do not depend on the initial position; the corresponding equilibrium is called {\em subgame perfect}. The interested reader may find the necessary definitions and statements in~\cite{Was90, SW01, AHMS10,Gil57,LL69}.

After 1950, it became natural to ask whether we can extend these existence results to $n$-person games,
replacing the concept of a saddle point by the more general concept of an NE:

\begin{itemize}
\item[($Q$):] Does every DGMP game have an NE in pure stationary strategies?
\end{itemize}

The existence of a subgame perfect NE is out of question.
The corresponding examples were obtained already for the two-person case~\cite{AGH10}; see also \cite{BEGM11}.

\medskip

For acyclic games Kuhn~\cite{Kuh50, Kuh53} and Gale~\cite{Gal53} suggested the method of backward induction, which
answers question ($Q$) in the positive providing a subgame perfect NE in pure stationary strategies.
However this method is not applicable when the game contains cycles.
The interested reader can find more details in~\cite{AHMS10,Kuh50, Kuh53,Gal53,Mou83,Ewe01}.

Note, however, that the same position can appear several times, e.g., in chess.
In other words, the directed graph modeling a DGMP game may have (directed) cycles.
Thus, question ($Q$) remained open for DGMP games with cycles.
A positive answer was conjectured in~\cite[Conjecture 1]{BGMS07}, yet,
Gurvich and Oudalov disproved this conjecture in~\cite{GO14A, Gur15}
\footnote{One of the authors of~\cite{GO14A} refused to co-author paper~\cite{Gur15} published in Russia.},
thereby answering  ($Q$) in the negative.
Their example has $n=4$ players, $p=5$ terminals, a unique cycle, which is of length $5$, and
the normal form of size $2 \times 4 \times 6 \times 8$.

In this paper, we give a much smaller example: the game has only seven positions, $n = p = 3$,
there is a unique cycle, which is of length $2$, and the normal form of size $2 \times 2 \times 4$.
The small size of this example allows us to strengthen the result further by proving that
the obtained game has no NE not only in pure but also in independently mixed strategies,
for both a priori and Markovian evaluations.

\medskip

These two examples represent also an important strengthening of some other examples obtained long time ago.
It easily follows from results of~\cite[Section 3]{BG03} that game forms corresponding to DGMP games
have the so-called {\em tightness property}, which is closely related to Nash-solvability; see Section~\ref{s3} for more details.
For the two-person case they are just equivalent;
for three players, however, tightness does not imply Nash-solvability~\cite{Gur75,Gur88}
(nor vice versa~\cite{Gur88}). Yet, all these old examples do not arise from DGMP games, while the two new examples do.

Results of~\cite{Gur75,BG03} imply that our $3$-player example is minimal with respect to the number of players.
In other words, for any two-person DGMP game an NE in pure stationary strategies
exists not only in the zero-sum case but in general too.
(However, the property of subgame perfectness can only be added in the first case.)

\medskip

The paper is structured as follows. In Section~\ref{s0a} we introduce the basic concepts, definitions, and notation.
In Section~\ref{example} we give our main example, a three-person DGMP game with no NE in pure stationary strategies (Proposition~\ref{prop:no-NE-in-pure-strategies}).
We strengthen this result further in Section~\ref{independently-mixed}
by introducing payoffs such that the obtained games have no NE not only in pure but
also in independently mixed strategies, for both a priori and Markovian evaluations.
The proofs of these two claims are given in Appendix.
In Section~\ref{s3}, we show that our example is minimal with respect to the number of players and
compare it with tight but not Nash-solvable game forms from \cite{Gur75,Gur88}.
We conclude the paper with some open ends and conjectures in Section~\ref{conclusion}.}

\section{Basic definitions}
\label{s0a}

{ In~1990, Washburn~\cite{Was90} defined a {\em deterministic graphical (DG) game}
as a two-person zero-sum positional game with perfect information, without positions of chance, and with terminal payoffs.
We extend this concept and introduce the notion of a {\em deterministic graphical multi-person (DGMP) game}, which
may be not zero-sum, even in the case of two players.}
Such a game is modeled by a {\em finite directed graph} (digraph)  $G = (V,E)$  whose vertices are partitioned into
$n+1$  subsets, $V = V_1 \cup \ldots \cup V_n \cup V_T$. A vertex  $v \in V_i$  is interpreted
as a {\em position} controlled by player  $i \in I = \{1, \ldots, n\}$, while
a vertex  $v \in V_T = \{a_1, \ldots, a_p\}$  is a {\em terminal position}
(or just a {\em terminal}, for short), which has no outgoing edges.
A directed edge  $(v, v')$  is interpreted as a (legal) {\em move} from position  $v$  to  $v'$.
We also fix an initial position $v_0 \in V \setminus V_T$.

A {\em pure stationary strategy} of a player  $i \in I$  is a mapping that assigns a move $(v, v')$  to each position  $v \in V_i$.
This concept was introduced by Kalm\'ar~\cite{Kal28} under the name `tactic in the strict sense' (see~\cite{SW01}).
In the literature on stochastic games and Markovian decision processes such strategies are called `pure stationary strategies', see, e.g.,~\cite{MO70}. The term {\em pure positional strategies} is also common in the literature~\cite{BG09,AHMS10,BEGM11}.\footnote{In almost all sections of this paper, we restrict ourselves and the players to their pure stationary strategies.
The history dependent strategies will not be considered at all, while the mixed ones will only be considered in Section~\ref{independently-mixed}. In other sections words ``pure stationary'' may be given in parentheses or omitted.}
Once each player $i \in I$ chooses a (pure stationary) strategy  $s^i$, we obtain a collection
$s = (s^i \mid i \in I)$  of strategies, called a {\em strategy profile}.
In a DGMP game modeled by a digraph, to define a strategy profile
we have to choose, for each vertex  $v \in V \setminus V_T$, a single outgoing arc $(v, v') \in E$.
A strategy profile $s$ can be also viewed as a subset of the arcs, $s \subseteq E$.
Clearly, in the subgraph $G_s = (V,s)$ corresponding to a strategy profile  $s$  there is a unique walk that
begins at the initial vertex  $v_0$  and either ends in a terminal or results in a cycle.
In other words,  $s$  uniquely defines a {\em play}  $P(s)$  that begins in  $v_0$  and
either ends in a terminal $a \in V_T$  or forms a ``lasso'', that is, consists of a
(possibly empty) initial path and a cycle repeated infinitely.
This follows from the assumption that  $s$  consists of stationary strategies.

{We also assume that all cycles of  $G$  are equivalent, that is,
they form a unique outcome  $c$  of the game.
Thus, the game has  $q = p+1$  outcomes that form the set  $A = \{a_1, \ldots, a_p, c\}$.
This is in agreement with Washburn's~\cite{Was90} assumption that
all infinite plays result in zero payoff for every player.}

\medskip
To each pair consisting of a player  $i \in I$  and outcome $a \in A$
we assign a {\em payoff} value $u(i, a)$ specifying the profit of player  $i$ if outcome $a$ is realized.
The corresponding mapping  $u : I \times A \to \RR$  is called the {\em payoff function}.
We will write  $u_i(a)$ instead of $u(i, a)$.

A {\em DGMP game in positional form} is a quadruple  $(G, D, u, v_0)$, where $G = (V, E)$  is a digraph,
$D$  is a partition $V = V_1 \cup \ldots \cup V_n \cup V_T$ of the set of positions,
$u: I \times A \to \RR$  is a payoff function, and $v_0 \in V$  is a fixed initial position.
The triplet  $(G, D, v_0)$  is called a {\em DGMP game structure}.

\begin{remark}
Standardly, DGMP games studied in this paper can also be represented by trees.
However, in the presence of cycles such trees are infinite.
Thus, using the tree representation would not be convenient.
\end{remark}

For a vertex  $v \in V$ let us denote by $d^+_v$ its out-degree.
Then, player  $i$  has $\prod_{v\in V_i}d^+_v$  many different strategies. We denote this set of strategies by
$$S_i=\left\{s^i_j : j=1,\ldots,\prod_{v\in V_i}d^+_v\right\}\,.$$
This way we obtain a {\em game form}, that is, a mapping  $g : S \rightarrow A$, where $S = S_1 \times  \ldots \times S_n$
is the Cartesian product of the sets of strategies of players  $i \in I$.
By this definition,  $g(s) = c$  when  $P(s)$  is an infinite play and  $g(s) = a$  when  $P(s)$  terminates in  $a \in V_T$.
The {\em normal form} of a  DGMP game $(G, D, u, v_0)$ is defined as the pair $(g, u)$, where
$g$ is the game form of the DGMP game structure $(G, D, v_0)$ and $u: I \times A \to \RR$ is a payoff function.
Let us note that game forms and games in normal form are of independent interest
and the former can be studied without underlying DGMP game structures.
{It is important to understand that not every game form is a normal form of a DGMP game structures;
for example, the latter must be tight; see Section 5 for the definition.}

\smallskip
Given a game  $(g, u)$ in normal form, let us fix a strategy profile
$s = (s^i \; | \; i \in I) \in S$  and a player $i_0 \in I$.
Then, we can represent  $s$  as  $(s^{i_0}, s^{-i_0})$, where
$s^{-i_0}$  is the collection of strategies in $s$ of the other $n-1$ players $i \in I \setminus \{i_0\}$.
We say that player  $i_0$  {\em can improve on $s$}  if (s)he has a strategy  $\bar{s}^{i_0}$  such that
$$u_{i_0}(g((\bar{s}^{i_0}, s^{-i_0}))) > u_{i_0}(g((s^{i_0}, s^{-i_0}))).$$
A strategy profile  $s$  is called a \emph{Nash equilibrium} (an NE) if no player can improve on $s$.
In other words,  $s$  is an NE if no player $i \in I$  can profit by replacing his/her strategy  $s^i$  in  $s$
by a new strategy  $\bar{s}^i$, assuming that the other $n-1$ players keep their strategies from  $s$  unchanged.
A game form $g$ is called {\em Nash-solvable} (NS) if for every payoff function
$u:I \times A \to \RR$ the corresponding game $(g,u)$ has an NE.

Note that shifting the payoff of any player by a constant does not change the set of NEs of the game.
In particular, we may assume without loss of generality that  $\min_{a \in A}u_i(a) = 0$  for all players $i\in I$.
Furthermore, let us also note that, by the above definition, one does not need ties, $u_i(a) = u_i(a')$, to construct NE-free examples.

\section{Main example}
\label{example}

We now describe our main example:
a three-person DGMP game $(G, D, u, v_0)$ in positional form without an NE in pure stationary strategies.
Figure~\ref{fig0}(a) shows the corresponding DGMP game structure $(G, D, v_0)$, where
the initial position $v_0$ is indicated by the square.
The set of outcomes is  $A = \{a_1,a_2,a_3,c\}$, where, as usual, $c$ represents the cyclic outcome
(the unique infinite play).
The payoff function can be any function $u:I\times A\to \mathbb{R}$ satisfying the inequalities
listed in Figure~\ref{fig0}(b). The game in normal form is given in Figure~\ref{fig0}(c).
It is constructed from the positional form in accordance with the definitions of Section~\ref{s0a}.

\begin{figure}[h!]
\centering
\includegraphics[width=0.92\textwidth]{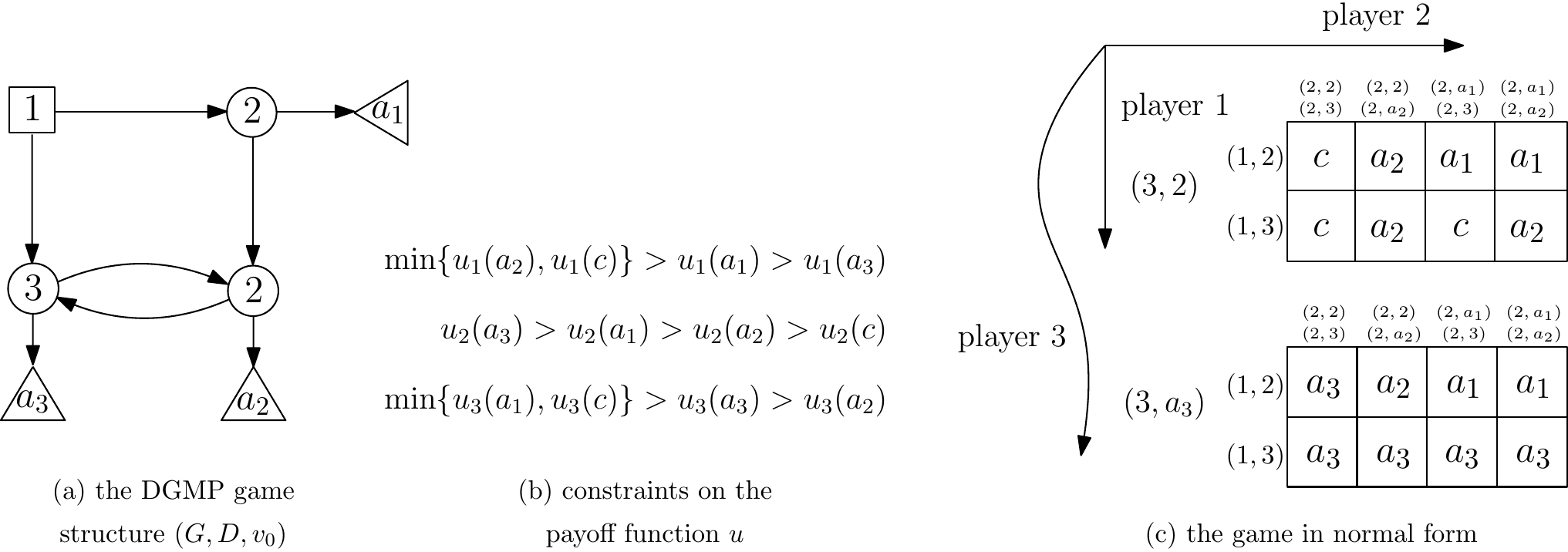}
\caption{Our main example: an NE-free three-person DGMP game in positional and normal forms.}\label{fig0}
\end{figure}

\begin{proposition}\label{prop:no-NE-in-pure-strategies}
The three-person DGMP game in Figure~\ref{fig0} has no NE in pure stationary strategies.
\end{proposition}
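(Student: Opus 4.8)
The plan is to prove the proposition by a finite inspection of all $16$ strategy profiles, reading the outcome of each profile off the game form table in Figure~\ref{fig0}(c) and exhibiting, for each one, a single player with a profitable deviation, using only the strict inequalities recorded in Figure~\ref{fig0}(b). Write a profile as a triple $s=(s^1,s^2,s^3)$, where player~$1$ and player~$2$ each have two strategies and player~$3$ has four; by Figure~\ref{fig0}(c) each such triple has a well-defined outcome $g(s)\in\{a_1,a_2,a_3,c\}$. As noted in Section~\ref{s0a}, adding a constant to any $u_i$ changes neither the game form nor the set of NEs, so the only information about $u$ that the proof will invoke is the list of strict inequalities in Figure~\ref{fig0}(b); in particular no ties are needed.

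First I would partition the profiles according to the realized outcome $g(s)$, and handle the four classes in turn. For a profile $s$ with $g(s)=a_k$ a terminal, I would point to a player $i$ who, by switching his move at one of the positions he controls, diverts the play to a different outcome $a$ with $u_i(a)>u_i(a_k)$ — which player and which switch to use is dictated by the row/column/page structure of the $2\times2\times4$ table, and the inequality $u_i(a)>u_i(a_k)$ is precisely one of those in Figure~\ref{fig0}(b). For a profile $s$ with $g(s)=c$ (the unique infinite play running through the length-$2$ cycle), I would similarly single out a player controlling one of the two cycle vertices, or a vertex feeding into the cycle, who by changing his move escapes to a terminal $a_j$ with $u_i(a_j)>u_i(c)$. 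To shorten the bookkeeping I would first isolate a short \emph{improvement cycle} $s_1\to s_2\to\cdots\to s_m\to s_1$, a cyclic chain of profitable one-player deviations encoding a ``rock--paper--scissors''-type threat pattern among the three players; this rules out the profiles on the cycle at once, after which it remains only to show that every remaining profile deviates \emph{into} the already-treated part of the profile space.

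The main obstacle is not any individual deviation — each is a one-line comparison against Figure~\ref{fig0}(b) — but \emph{exhaustiveness}: verifying that the four case-paragraphs jointly cover all $16$ profiles, and, crucially, that in the cyclic case some player strictly prefers to leave the cycle. The latter is exactly where the inequalities of the form $u_i(c)<u_i(a_j)$ are used, and it is the feature separating a tight but non-Nash-solvable game form from a Nash-solvable one (cf.\ Section~\ref{s3}). Accordingly I would close the proof with an explicit table listing, for each of the $16$ profiles, its outcome $g(s)$ together with one player and one improving switch; this simultaneously establishes that no profile is an NE and makes the completeness of the argument transparent.
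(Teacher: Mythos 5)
Your plan is correct and is essentially the paper's own proof: the paper argues exactly by exhaustive inspection, listing in the table of Figure~\ref{fig-no-NE-in-pure} each of the $16$ strategy profiles together with a player who has a profitable single-coordinate deviation, using only the strict inequalities of Figure~\ref{fig0}(b). (Only a minor bookkeeping slip on your part: it is player~$2$, who controls two positions, that has four strategies, while players~$1$ and~$3$ have two each; this does not affect the argument.)
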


\begin{proof}
It is sufficient to show that for every $s \in S$ there exists a player $i \in I$  and
a strategy profile  $\bar{s}$ that differs from  $s$ only in the coordinate  $i$  such that $u_i(g(\bar{s}))>  u_i(g(s))$.

\begin{figure}[h!]
\centering
\includegraphics[width=0.4\textwidth]{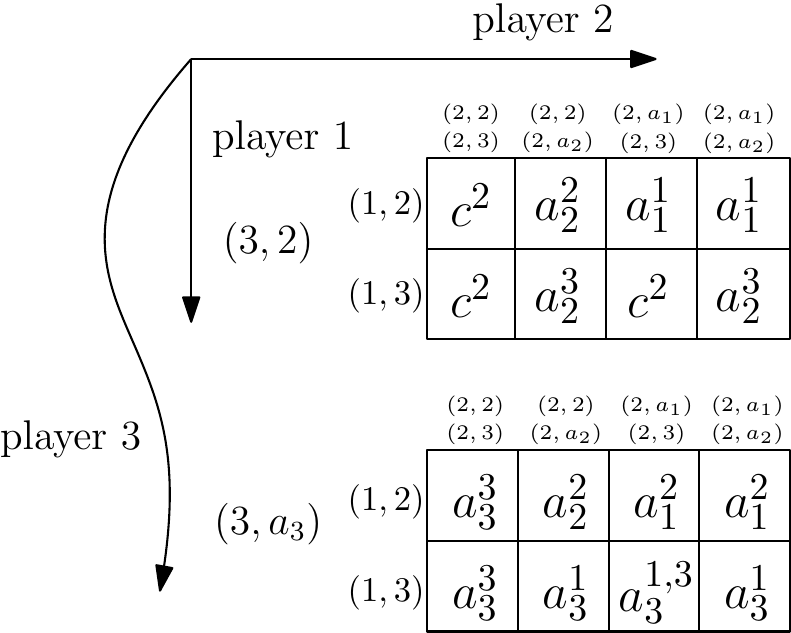}
\caption{The three-person game depicted in Figure~\ref{fig0} has no NE in pure stationary strategies.
To verify this, we only need to check that
from each entry of the table, it is possible to move in one of the three directions,
say, in direction $i$, to another entry specifying an outcome that is strictly better than the original one for player $i$.
All such directions (players) are indicated by the upper indices, for each entry.}
\label{fig-no-NE-in-pure}
\end{figure}

We can easily verify this using Figure~\ref{fig-no-NE-in-pure}; see the caption.
For example, the entry corresponding to $s = ((1,2), ((2,a_1),(2,a_2)), (3,a_3))$
is $a_1^2$, indicating that there exists a strategy profile $\bar{s}$
that differs from $s$ only in coordinate $2$ and such that  $u_2(g(\bar{s})) > u_2(g(s))= u_2(a_1)$.
Indeed, if player $2$ replaces the current strategy $((2,a_1), (2,a_2))$ with $((2,2),(2,3))$,
the obtained strategy profile $\bar{s} =  ((1,2),((2,2),(2,3)),(3,a_3))$ will result in an improvement,
since $u_2(g(\bar{s})) = u_2(a_3) > u_2(a_1) = u_2(g(s))$.
Note that the outcome $g(s) = a_3$  of the strategy profile
$s = ((1,3), ((2,a_1),(2,3)), (3,a_3))$  has two upper indices, because
$s$ can be improved by each of the two players $1$ and~$3$.
\end{proof}

\begin{remark}
The second and the third strategies of player $2$ are dominated by the last one.
However, in general, a dominated strategy could participate in a (pure or mixed) NE.
Thus, we cannot simply discard those two strategies of player $2$.
\end{remark}

\begin{remark}
The above example contains a unique cycle, which is of length $2$.
The game restricted to the positions of this cycle and terminals $a_2$ and $a_3$
has no subgame perfect NE; see~\cite{AGH10, BEGM11}. This is not a coincidence: we remark
(without giving a proof) that in any NE-free example with a unique cycle,
the restriction of the game to the positions of this cycle and all positions that can be reached from it
(not  necessarily in one move), results in a game with no subgame perfect NE.
This idea was the starting point for our computer based search of NE-free games.
\end{remark}

\section{NE-freeness in independently mixed strategies}
\label{independently-mixed}

In Section~\ref{example} we considered pure strategies; now we turn to {\it mixed} ones.
Mixed strategies of a player are standardly defined as probability distributions over the
(finite) set of all pure strategies of this player.
An {\em independently mixed} strategy of player $i\in I$ { is a collection of probability distributions,
one for each position $v \in V_i$, on the set of moves $(v,v')$ from $v$.}
Obviously, every independently mixed strategy of player $i$ can be viewed as a mixed strategy, but
not vice versa unless player  $i$  controls a unique position.

The notion of an NE is extended in the natural way, by taking into account the expected payoff of the players; see Appendix~A.

\medskip
In a DGMP game, a play may visit the same position more than once, and
we distinguish the following two evaluations of the game played
in fixed independently mixed strategies.
The {\em Markovian evaluation} assumes that for every player  $i$  and for each position of this player,
(s)he randomizes the choice of the move each time the play comes to this position.
In contrast, the {\em a priori evaluation} assumes that for every player $i$ and for each position of this player,
the random choice of the move at this position is done only once, before the play begins, and
the obtained move is used whenever the play comes to this position again.
Clearly, the probabilities $\PP(s,a)$ of an outcome $a$  might differ for these two models.
The concept of a priori evaluation was introduced in 2013 by Boros et al.~\cite{BGY13}.

We now state two theorems explaining how the result of Proposition~\ref{prop:no-NE-in-pure-strategies}
can be strengthened to examples having no NE not only in pure but also in independently mixed strategies,
for both Markovian and a priori evaluations.

In the case of Markovian evaluation, no modification to the example from Figure~\ref{fig0} is needed.

\begin{theorem}\label{thm:Markovian}
The three-player DGMP game described in Figure~\ref{fig0} has no NE in independently mixed strategies
for the Markovian evaluation.
\end{theorem}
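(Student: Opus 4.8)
The plan is to extend the pure-strategy argument from Proposition~\ref{prop:no-NE-in-pure-strategies} to independently mixed strategies under the Markovian evaluation by a continuity/parametrization argument. First I would parametrize an arbitrary independently mixed strategy profile by the probabilities assigned at the three branching positions: player~$1$ controls the initial position $v_0$ and picks its two outgoing moves with probabilities $x$ and $1-x$; player~$2$ controls two positions (one on the $2$-cycle and one elsewhere), contributing parameters $y_1, y_2 \in [0,1]$; player~$3$ controls the remaining branching position with parameter $z \in [0,1]$. Under the Markovian evaluation, each time the play re-enters a position the move is re-randomized, so the outcome probabilities $\PP(s,a_1), \PP(s,a_2), \PP(s,a_3), \PP(s,c)$ are rational functions of $(x,y_1,y_2,z)$ obtained by solving the obvious linear absorption equations for the induced Markov chain on the seven positions (the cycle contributes a geometric series that sums to a closed form). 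This gives each player's expected payoff as an explicit function on the cube.

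Next I would show that the best-response structure forces the mixing parameters to their pure extremes. The key observation is that each player's expected payoff is \emph{affine} (degree one) in that player's own single-position parameters when the opponents' parameters are held fixed — this is the standard fact that, fixing the rest of the Markov chain, the probability of reaching a given terminal is a convex combination over the choices at a player's own positions, hence linear in those choice probabilities. (A small subtlety: player~$2$ controls two positions, so I must check that the payoff is affine in $y_1$ and in $y_2$ \emph{separately}, i.e.\ multiaffine, which again follows because along any play each of player~$2$'s positions is visited and its contribution enters linearly.) Consequently, in any NE each player puts all mass on pure best responses at each controlled position: if the coefficient of a parameter is nonzero the player moves to an endpoint, and if it is zero the player is indifferent. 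So a mixed NE would have to be supported on pure best responses at every position.

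Then I would invoke Proposition~\ref{prop:no-NE-in-pure-strategies}: since every pure profile admits a strictly improving deviation at some position, no pure profile is an NE, and more is true — at every pure profile some player's payoff coefficient at one of its positions is \emph{strictly} nonzero in the improving direction. To rule out mixed NE on the "indifference boundary" I would argue that the relevant coefficient functions are themselves affine (or low-degree) in the remaining parameters and do not vanish identically on any face of the cube; more concretely, I would trace through the few positions and verify that the arithmetic inequalities listed in Figure~\ref{fig0}(b) make the improving deviation strict throughout a neighborhood, so the set of profiles where all players are simultaneously indifferent at all their positions is empty. Combining: any candidate mixed NE must be pure (by multiaffinity), but no pure profile is an NE, a contradiction. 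The detailed verification is deferred to Appendix~A.

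\textbf{Main obstacle.} The delicate point I expect to be the crux is handling player~$2$, who controls two positions and whose payoff is only multiaffine rather than affine: I must confirm that the deviation analysis at each of player~$2$'s positions can be carried out while the mass at the \emph{other} position of player~$2$ is arbitrary, and that the improving direction identified in Figure~\ref{fig-no-NE-in-pure} remains improving across the whole range — equivalently, that no cancellation occurs between the two positions. The cycle of length~$2$ also requires care, since the Markovian re-randomization means the probability of eventually escaping the cycle versus looping forever is a genuine rational (not polynomial) function of the parameters; I would verify that its denominator never vanishes on the cube, so the payoff functions stay well-defined and the affinity-in-own-parameter claims go through without degeneracy.
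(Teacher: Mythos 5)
There is a genuine gap, and it sits exactly at the step you call ``the standard fact.'' Under the \emph{Markovian} evaluation the expected payoff is \emph{not} affine (nor multiaffine) in a player's own mixing parameter when that player controls a position on the cycle. With the parametrization of Figure~\ref{fig-probabilities}, the absorption probabilities for $\gamma\delta<1$ are
$\PP(s,a_2)=(\alpha\beta+(1-\alpha)\delta)\frac{1-\gamma}{1-\gamma\delta}$ and
$\PP(s,a_3)=(1-\alpha+\alpha\beta\gamma)\frac{1-\delta}{1-\gamma\delta}$,
so player $2$'s payoff depends on his own parameter $\gamma$, and player $3$'s on his own $\delta$, through the factor $\frac{1}{1-\gamma\delta}$ coming from the geometric series on the $2$-cycle: these are genuinely nonlinear rational functions of the player's own variable. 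The linearity you invoke (``the outcome probability is a convex combination over the choices at a player's own positions'') is valid for the \emph{a priori} evaluation, where each position's lottery is drawn once, but fails for the Markovian evaluation, where re-randomization on every return to the cycle mixes a player's own parameter into the denominator. Consequently the inference ``any mixed NE must put all mass on pure best responses at each position, hence reduce to a pure profile, contradicting Proposition~\ref{prop:no-NE-in-pure-strategies}'' does not follow. Worse, your proposed fix for the cycle --- ``verify that the denominator never vanishes on the cube'' --- is false: $1-\gamma\delta$ vanishes at $\gamma=\delta=1$, and precisely there the cyclic outcome acquires positive probability $1-\alpha+\alpha\beta$, making the payoff \emph{discontinuous} in the players' own parameters (e.g.\ with $\delta=1$ fixed, player $2$'s payoff drops by a jump as $\gamma\to 1$). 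The paper itself points out this discontinuity (it is the reason Nash's theorem does not apply), so no argument based on affinity, continuity, or ``endpoint best responses'' can be made to work uniformly on the cube.

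The paper's actual proof is structurally different: it computes the outcome probabilities as above (treating $\gamma\delta=1$ as a separate regime), assumes an NE $(\alpha^*,\beta^*,\gamma^*,\delta^*)$ exists, and runs an exhaustive six-case analysis on these values, exhibiting in every case a player with a strictly improving (typically extreme) deviation --- e.g.\ the case $\gamma^*\delta^*=1$ is killed by player $1$ or player $2$ deviating, and the main case rests on the identity $(\lambda+\mu)(1-\gamma^*\delta^*)=\alpha^*(1-\beta^*\gamma^*)(1-\delta^*)+\delta^*(1-\gamma^*)$ for the probability that player $2$ does not reach $a_3$. If you want to salvage your plan, you would have to (i) treat the corner $\gamma=\delta=1$ as a separate case from the start, and (ii) replace the affinity claim at the cycle positions by an explicit monotonicity or sign analysis of the rational payoff functions in $\gamma$ and $\delta$ on $\{\gamma\delta<1\}$; at that point you are essentially reconstructing the paper's case analysis rather than shortcutting it, and the ``indifference boundary'' discussion (which, since no pure NE exists, is where any mixed NE would have to live) is not a peripheral verification but the entire content of the proof.
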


A proof of Theorem~\ref{thm:Markovian} is given in  Appendix~B.

For the a priori evaluation, the example from Figure~\ref{fig0} is modified by strengthening the set of
constraints on the payoff function $u$ with a set of more specific constraints.

\begin{theorem}\label{thm:a-priori}
The three-person DGMP game described in Figure~\ref{fig0} has no NE in independently mixed strategies
for the a priori evaluation whenever the payoff function $u$ satisfies the following conditions:
\begin{eqnarray}
u_1(a_2) > u_1(c) >u_1(a_1) >u_1(a_3)= 0\,,\\
u_2(a_3) > u_2(a_1) >u_2(a_2)>u_2(c) = 0\,,\\
{\min}\{u_3(a_1), u_3(c)\} >u_3(a_3)>u_3(a_2)= 0\,,
\end{eqnarray}
\begin{eqnarray}
% \nonumber to remove numbering (before each equation)
\label{assumption1}
\frac{u_2(a_3)-u_2(a_1)}{u_2(a_3)-u_2(a_2)}&<& 1/2\,,\\
\label{assumption2}
\frac{u_3(a_3)}{u_3(c)}&<& 1/2\,,\\
\label{assumption3}
\frac{u_1(a_1)}{u_1(c)}+\frac{u_2(a_1)}{u_2(a_3)}&<&1\,.
\end{eqnarray}
\end{theorem}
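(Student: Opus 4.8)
The plan is to argue by contradiction: suppose the game under the a priori evaluation admits an NE in independently mixed strategies, and derive a contradiction from the six displayed constraints. First I would set up coordinates for the mixing probabilities. Since players $1$ and $3$ each control a single position (with out-degree $2$), their independently mixed strategies are single parameters, say $x\in[0,1]$ for player $1$ (the probability of the move toward, say, the cycle versus $a_2$) and $z\in[0,1]$ for player $3$; player $2$ controls two positions, so his independently mixed strategy is a pair $(p,q)\in[0,1]^2$. Under the a priori evaluation each such random choice is fixed once before the play, so the realized outcome is a deterministic function of the four independent coin flips, and $\PP(s,a)$ is obtained by summing the products of the relevant probabilities over the (at most $2^4$) pure profiles — this is exactly the mixture over pure profiles of the pure-strategy game form $g$ from Figure~\ref{fig0}. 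I would write down explicitly the four functions $\PP(s,a_1),\PP(s,a_2),\PP(s,a_3),\PP(s,c)$ as multilinear polynomials in $x,z,p,q$, reading them off the normal form in Figure~\ref{fig0}(c).

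Next I would write each player's expected payoff $U_i(x,p,q,z)=\sum_{a\in A}u_i(a)\,\PP(s,a)$, which (after using the normalizations $u_1(a_3)=u_2(c)=u_3(a_2)=0$) becomes a multilinear function, hence \emph{affine} in each player's own variable(s) separately. The NE conditions then say: $x$ maximizes $U_1$ given the others; $(p,q)$ maximizes $U_2$ (and since $U_2$ is separately affine in $p$ and in $q$, each of $p,q$ is chosen by a sign condition on a coefficient); and $z$ maximizes $U_3$. Because each best-response is the maximum of an affine function on $[0,1]$, every coordinate of the equilibrium is either an endpoint ($0$ or $1$) or else the corresponding coefficient vanishes (indifference). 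I would then run a case analysis. The pure profiles are already excluded by Proposition~\ref{prop:no-NE-in-pure-strategies}, so at least one coordinate is strictly interior, forcing an indifference equation; I would use the indifference equations to pin down the interior coordinates and then check that the remaining best-response inequalities are violated. The extra quantitative hypotheses \raf{assumption1}, \raf{assumption2}, \raf{assumption3} are precisely the inputs needed to kill the mixed cases: \raf{assumption2} controls player $3$'s indifference point between $a_3$ and $c$, \raf{assumption1} controls player $2$'s indifference at the position separating $a_1$ from $a_3$, and \raf{assumption3} is the joint condition ruling out the one remaining profile where both players $1$ and $2$ randomize.

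Concretely, the case analysis should mirror the cyclic improvement structure visible in Figure~\ref{fig-no-NE-in-pure}: from any candidate equilibrium support, one of the three players has a profitable deviation toward the outcome he prefers, and the displayed strict orderings of $u_i$ guarantee the deviation is strictly improving unless the opponents' mixing lands exactly on a measure-zero indifference locus; the three ratio inequalities say the equilibrium ``chase'' never closes up. The main obstacle I anticipate is bookkeeping: correctly extracting the four outcome-probability polynomials from the $2\times2\times4$ normal form and then organizing the (potentially a dozen) sub-cases of which coordinates are $0$, $1$, or interior, without missing a case. A cleaner route, which I would attempt first, is to observe that at any NE the induced distribution over outcomes is a convex combination of the pure-profile outcomes, reduce to the finitely many faces of the resulting polytope, and on each face invoke the pure-strategy non-equilibrium together with one of \raf{assumption1}--\raf{assumption3}; this should compress the argument considerably and is the version I would write up in Appendix.
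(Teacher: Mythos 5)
Your plan is essentially the paper's own proof: for the a priori evaluation the outcome probabilities are exactly the multilinear (product-over-positions) polynomials in the four mixing parameters, a hypothetical NE is attacked through coordinate-wise endpoint/indifference (sign-of-derivative) conditions for each player, the all-endpoint case is killed by Proposition~\ref{prop:no-NE-in-pure-strategies}, and the mixed cases are eliminated in a case analysis in which \eqref{assumption2} governs player $3$'s trade-off between $a_3$ and $c$, \eqref{assumption1} governs player $2$'s trade-off at his first position, and \eqref{assumption3} closes a residual case via a joint deviation argument for players $1$ and $2$ --- precisely the structure of Appendix~C. The only caution: the ``cleaner route'' you would try first (reducing to faces of a polytope of outcome distributions) is unlikely to work, since under independently mixed profiles the achievable distributions form the image of a multilinear map, not a polytope with a usable face structure, so the coordinate-by-coordinate case analysis you describe as the fallback is the part that actually has to be carried out (and it is the bulk of the work, amounting to the paper's Observations 1--3B and Claims 1--8).
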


A proof of Theorem~\ref{thm:a-priori} is given in Appendix~C.

\medskip
For a specific payoff function satisfying the assumptions of Theorem~\ref{thm:a-priori}
let us consider for instance the following function:
\begin{equation*}
\begin{aligned}
u_1(a_2) = 4 > u_1(c) = 3   > u_1(a_1) = 1>u_1(a_3) = 0\,,\\
u_2(a_3) = 8 > u_2(a_1) = 5 > u_2(a_2) = 1>u_2(c) = 0\,,\\
u_3(a_1) = 4 > u_3(c) = 3   > u_3(a_3) = 1>u_3(a_2) = 0\,.
\end{aligned}
\end{equation*}
It is straightforward to verify that the given payoff function satisfies the requirements.

\section{Nash-solvability versus tightness}
\label{s3}

Let us look at our main example from a different point of view. In particular, we will explain why it is minimal with respect to the number of players and discuss an important feature present in this example, along with the earlier example from~\cite{GO14A, Gur15}, that
was not present in older examples of tight (the definition is given below) but not Nash-solvable game forms.

\subsection{The two-person case}\label{sec:2-person-case}

The next theorem shows that the number $n=3$ of players cannot be reduced.

\begin{theorem}
\label{tvg1}
Every two-person DGMP game has an NE in pure stationary strategies.
\end{theorem}

A proof of this result can be found in~\cite[Section $3$]{BG03}; see also~\cite[p.~1485]{BGMS07}.
We sketch the proof here in order to explain the basic relations between Nash solvability.

A two-person game form  $g$  is called:
\begin{enumerate}[(i)]
\item {\em Nash-solvable} (NS) if for every payoff function  $u=(u_1,u_2)$ the game $(g,u)$ has an NE
(in agreement with the definition of Nash-solvability of $n$-person games given in Section~\ref{s0a}).
\item {\em zero-sum-solvable} if for every
$u=(u_1,u_2)$ that satisfies $u_1(a) + u_2(a) = 0$  for all $a\in A$
the game $(g,u)$ has an NE.
\item  {\em $\pm 1$-solvable}  if
the corresponding game  $(g,u)$  has an NE for every
$u = (u_1, u_2)$  such that
$u_1(a) + u_2(a) = 0$  for each outcome  $a \in A$
and both  $u_1$  and  $u_2$  take only values  $+1$  or  $-1$.
\end{enumerate}

Recall that an NE of a two-person zero-sum game,
as in cases (ii) and (iii), is called a saddle point.
\begin{sloppypar}
\begin{proposition}
\label{lvg1}
The above three properties of two-person game forms,
(i) Nash-solvability,
(ii) zero-sum-solvability, and
\hbox{(iii) $\pm 1$-solvability}, are equivalent.
\qed
\end{proposition}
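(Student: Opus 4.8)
The plan is to prove the cycle of implications (i) $\Rightarrow$ (ii) $\Rightarrow$ (iii) $\Rightarrow$ (i), of which the first two are trivial and the third is the substantive one. The implication (i) $\Rightarrow$ (ii) is immediate, since zero-sum payoff functions form a subfamily of all payoff functions; likewise (ii) $\Rightarrow$ (iii), since $\pm 1$-valued zero-sum payoffs are a further subfamily. So the whole content is in showing that $\pm 1$-solvability forces Nash-solvability for arbitrary (not necessarily zero-sum) two-person payoff functions.

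For (iii) $\Rightarrow$ (i), I would argue by contraposition: assume $g$ is not NS, so there is a payoff $u=(u_1,u_2)$ with the game $(g,u)$ having no NE, and produce a $\pm 1$-valued zero-sum payoff with no saddle point. The key combinatorial fact is that whether a strategy profile $s$ is an NE depends on $u$ only through the relation ``$u_i(a) \le u_i(a')$'' between the outcomes attainable by player $i$ under unilateral deviations. First I would reduce to the case where each $u_i$ induces a total preorder on $A$ and, using the observation at the end of Section~\ref{s0a} that ties are not needed, perturb to a strict total order without creating an NE. Then, for each player $i$ and each threshold, one can split $A$ into an ``upper'' set $B_i$ and its complement and define a $\pm 1$ zero-sum payoff $u^{B_1}$ by $u_1(a)=+1$ iff $a\in B_1$ (and $u_2=-u_1$); a classical argument (this is the heart of the $\pm 1$-solvability theory, going back to the tightness characterization in \cite{Gur75,BG03}) shows that if \emph{every} such $\pm 1$ zero-sum game derived from the orderings of $u_1$ and $u_2$ has a saddle point, then $(g,u)$ has an NE. Contrapositively, the NE-free game $(g,u)$ yields some choice of $B_1$ (equivalently, some pair of ``cut'' levels in the two players' preference orders) for which $u^{B_1}$ has no saddle point, establishing that $g$ is not $\pm 1$-solvable.

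The technical core — and the step I expect to be the main obstacle — is the correspondence between NEs of $(g,u)$ and saddle points of the derived $\pm 1$ zero-sum games. Concretely: given that $(g,u)$ has no NE, one must identify the \emph{right} pair of threshold sets $B_1, B_2=A\setminus B_1$. The natural candidate is to run a ``best-response'' style argument — pick any profile $s$, let player~1 move to a $u_1$-maximal attainable outcome, then player~2, and track which outcomes are ``acceptable'' to each player; non-existence of an NE means this process cycles, and the set of outcomes appearing in the cycle, thresholded appropriately, defines $B_1$ so that $u^{B_1}$ inherits a deviation cycle and hence has no saddle point. Making this precise requires the standard characterization that a two-person $\pm 1$ zero-sum game form game has a saddle point iff neither player can ``force'' their preferred value, which is exactly the tightness dichotomy; I would invoke \cite{Gur75} (or \cite[Section 3]{BG03}) for that characterization rather than reproving it. Once that dichotomy is in hand, translating an NE-free profile of $(g,u)$ into a saddle-point-free $\pm1$ game is a finite, if slightly fiddly, bookkeeping argument on the at most $|A|^2$ possible threshold pairs.
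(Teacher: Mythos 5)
The paper never proves Proposition~\ref{lvg1}: it is stated with a \qed and attributed to the literature --- the equivalence of (ii) and (iii) to Edmonds and Fulkerson~\cite{EF70} and to~\cite{Gur73}, and the addition of (i) (together with tightness) to~\cite{Gur75,Gur88}; see also~\cite{BG03}. Your easy implications (i)~$\Rightarrow$~(ii)~$\Rightarrow$~(iii) are correct, and your tie-breaking remark is fine (a small generic perturbation preserves every strict improvement, hence preserves NE-freeness). But your treatment of the substantive implication (iii)~$\Rightarrow$~(i) does not amount to a proof: the ``classical argument'' you invoke --- that if every $\pm 1$ zero-sum game obtained from suitable threshold sets of $u_1,u_2$ has a saddle point, then $(g,u)$ has an NE --- \emph{is} the hard direction itself (it is, for the relevant partitions, exactly the statement that tightness, equivalently $\pm 1$-solvability, implies Nash-solvability, i.e.\ the content of Theorem~\ref{four-properties} due to~\cite{Gur75,Gur88}). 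Citing it and then presenting a sketch of how one ``would'' rederive it leaves the key step unproved; and that step is not ``bookkeeping over the at most $|A|^2$ threshold pairs'': one must both identify the correct two-parameter family of blocks and show how saddle points of all of those auxiliary games can be combined into a single NE of $(g,u)$ (the naive choice, playing the two forcing strategies at the extremal levels, is not automatically an equilibrium, which is why the published proofs are genuinely nontrivial).

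Two further inaccuracies in the sketch: the saddle-point criterion is misstated --- a $\pm 1$ zero-sum game over a game form with winning block $B$ has a saddle point if and only if \emph{at least one} player can force his preferred event ($\cE_g(\{1\},B)=1$ or $\cE_g(\{2\},A\setminus B)=1$); ``no saddle point iff neither can force'' is the correct reading, and it is precisely the failure of tightness at that partition. Also, thresholding a single player's order (upper sets of $u_1$ with $u_2=-u_1$) is enough for the zero-sum reduction (ii)~$\Leftrightarrow$~(iii) in the style of~\cite{EF70}, but it is not enough for (iii)~$\Rightarrow$~(i); the partitions needed there must involve both players' preferences. Given that the paper itself only cites~\cite{EF70,Gur73,Gur75,Gur88} here, the clean options are either to do the same explicitly, or to actually reproduce Gurvich's argument for ``tight $\Rightarrow$ NS'' in the two-person case; the middle ground in your proposal --- an improvement-cycle heuristic resting on the uncited-in-detail dichotomy --- is where the gap lies.
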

\end{sloppypar}

The equivalence of (ii) and (iii) was shown in 1970 by Edmonds and Fulkerson \cite{EF70}, and independently in \cite{Gur73}.
Later (i) was added to the list in \cite{Gur75, Gur88}, together with one more important property called tightness.
We recall its definition for the general, $n$-person, case in the next six paragraphs.

\smallskip

\noindent{\bf Coalitions, blocks, effectivity function, and tightness.}
Let $g : S = \prod_{i \in I} S^i  \rightarrow A$  be an $n$-person game form.
In the game (voting) theory, $I$ and $A$ standardly denote the sets
of players (voters) and outcomes (candidates), respectively.
We assume that mapping  $g$  is surjective, $g(S) = A$, but typically it is not injective,
in other words, different strategy profiles may determine the same outcome.
The subsets $K \subseteq I$  and  $B \subseteq A$
are called {\em coalitions} and {\em blocks}, respectively.
A strategy  $s^K_{} = (s^i \; | \; i \in K)$  of a coalition  $K$ is
defined as a collection of strategies, one for each coalitionist of $K$.
A pair of strategies  $s^K_{}$  of coalition $K$ and
$s^{I \setminus K}_{}$ of the complementary coalition $I \setminus K$
uniquely defines a strategy profile  $s = (s^K_{}, s^{I \setminus K}_{})$.

We say that a non-empty coalition  $K$  is {\em effective} for a block  $B$
if this coalition has a strategy such that
the resulting outcome is in  $B$  for any strategies of the opponents;
in other words, if there is a strategy  $s^K_{}$
such that for any strategy  $s^{I \setminus K}_{}$
for the obtained strategy profile  $s = (s^K_{}, s^{I \setminus K}_{})$  we  have  $g(s) \in B$.
We write  $\cE(K,B) = 1$  if  $K$  is effective for  $B$  and  $\cE(K,B) = 0$  otherwise.

Since  $g$  is surjective, by the above definition, we have
$\cE_g(I, B)= 0$  if and only  $B = \emptyset$.
The value  $\cE_g(K,B)$  was not yet defined for $K = \emptyset$.
By convention, we set  $\cE_g(\emptyset, B)= 1$  if and only if  $B = A$.

The obtained function   $\cE_g: 2^I \times 2^A = 2^{I \cup A} \rightarrow \{0,1\}$
is  called the {\em effectivity function} (of the game form $g$).
By definition, $\cE_g$ is a Boolean function whose set of variables
is the union $I \cup A$ of all players and outcomes.

Note that equalities $\cE_g(K,B) = 1$ and  $\cE_g(I \setminus K, A \setminus B)= 1$  cannot hold simultaneously.
Indeed, otherwise there would exist strategies $s^K_{}, s^{I \setminus K}_{}$ and blocks  $B,  A \setminus B$  such that
$g(s) \in B \cap (A \setminus B) = \emptyset$, that is,
$g$  is not defined on the strategy profile  $s = (s^K_{}, s^{I \setminus K}_{})$.
Note that cases  $K = \emptyset$  and  $K = I$  are covered by the above convention.
Thus, $\cE_g(K,B) = 1 \Rightarrow \cE_g(I \setminus K, A \setminus B) = 0$.

A game form  $g$  is called {\em tight} if the inverse implication holds too:
$\cE_g(I \setminus K, A \setminus B)= 0 \Rightarrow \cE_g(K,B) = 1$, or equivalently, if
$\cE_g(K,B) = 1 \Leftrightarrow \cE_g(I \setminus K, A \setminus B)= 0$
for all pairs  $K \subseteq I$  and $B \subseteq A$, or,
in other words, if  $\cE_g$  is a self-dual Boolean function, see, e.g.,~\cite{CH11} for the definition.

\medskip

It is not difficult to see~\cite{Gur75,Gur88} that for the two-person game forms,
tightness is equivalent with  $\pm 1$-solvability and, thus, can be added to Proposition~\ref{lvg1}.

\begin{sloppypar}
\begin{theorem}\label{four-properties}
The following properties of two-person game forms:
(i) Nash-solvability, (ii) zero-sum-solvability,
\hbox{(iii) $\pm 1$-solvability}, and \hbox{(iv) tightness},
are equivalent.
\end{theorem}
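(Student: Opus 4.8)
The plan is to build on Proposition~\ref{lvg1}, which already establishes the equivalence of (i), (ii), and (iii); it therefore suffices to add (iv) to this list, and the most economical route is to prove directly that $\pm 1$-solvability (iii) and tightness (iv) are equivalent for a surjective two-person game form $g : S_1 \times S_2 \to A$. The first step is to set up the dictionary between $\pm 1$-payoffs and blocks: a payoff $u=(u_1,u_2)$ with $u_1+u_2=0$ and $u_i\in\{+1,-1\}$ is nothing but a choice of block $B=\{a\in A : u_1(a)=+1\}$, so that player~$1$ wants the outcome to land in $B$ and player~$2$ wants it in $A\setminus B$. For such a $u$ the game $(g,u)$ is two-person zero-sum, hence it has a saddle point if and only if its lower value $\max_{s^1}\min_{s^2} u_1(g(s^1,s^2))$ and its upper value $\min_{s^2}\max_{s^1} u_1(g(s^1,s^2))$ coincide.

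Next I would rewrite these two values in terms of the effectivity function $\cE_g$. The lower value equals $+1$ exactly when player~$1$ has a strategy forcing the outcome into $B$, i.e.\ when $\cE_g(\{1\},B)=1$, and equals $-1$ otherwise; dually, the upper value equals $-1$ exactly when $\cE_g(\{2\},A\setminus B)=1$ and equals $+1$ otherwise. Since the lower value never exceeds the upper value and both lie in $\{+1,-1\}$, a saddle point fails to exist precisely when the lower value is $-1$ and the upper value is $+1$, that is, precisely when $\cE_g(\{1\},B)=0$ and $\cE_g(\{2\},A\setminus B)=0$. Consequently, $g$ is $\pm 1$-solvable if and only if for every block $B\subseteq A$ at least one of $\cE_g(\{1\},B)=1$ and $\cE_g(\{2\},A\setminus B)=1$ holds.

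Finally I would match this condition with tightness. The implication $\cE_g(\{1\},B)=1\Rightarrow\cE_g(\{2\},A\setminus B)=0$ holds for every game form (as recalled in Section~\ref{s3}), so the ``at least one of'' condition above is in fact equivalent to the biconditional $\cE_g(\{1\},B)=1\Leftrightarrow\cE_g(\{2\},A\setminus B)=0$ for all $B\subseteq A$. For a two-person game form, this family of biconditionals is exactly the tightness condition: the coalitions $\emptyset$ and $I=\{1,2\}$ are handled by the surjectivity of $g$ together with the stated conventions for $\cE_g$, and the pair $(K,B)=(\{2\},B)$ reduces to $(\{1\},A\setminus B)$ by complementing $K$ and $B$ simultaneously. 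This yields (iii)~$\Leftrightarrow$~(iv), which, combined with Proposition~\ref{lvg1}, proves the theorem.

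I expect the delicate points to be bookkeeping rather than conceptual: handling the degenerate blocks $B=\emptyset$ and $B=A$, where one player's value is forced, and checking that the reduction of tightness for two players to the single coalition $\{1\}$ is legitimate given the conventions that fix $\cE_g$ on $\emptyset$ and on $I$. The substantive ingredient — that a zero-sum matrix game with entries in $\{+1,-1\}$ has a saddle point iff its maximin equals its minimax — is classical and needs no reproof here.
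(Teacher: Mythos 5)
Your argument is correct: the dictionary between $\pm 1$ zero-sum payoffs and blocks $B\subseteq A$, the observation that a pure saddle point fails precisely when $\cE_g(\{1\},B)=0$ and $\cE_g(\{2\},A\setminus B)=0$, and the reduction of two-person tightness to the coalition $\{1\}$ (the cases $K=\emptyset$ and $K=I$ being settled by the stated conventions and surjectivity, and $K=\{2\}$ by simultaneous complementation) together give (iii)$\Leftrightarrow$(iv), which added to Proposition~\ref{lvg1} yields the theorem. This is exactly the route the paper indicates, except that the paper gives no proof of its own --- it merely remarks that tightness can be added to Proposition~\ref{lvg1} and cites \cite{Gur75,Gur88} --- so your write-up supplies the details the paper leaves to the references.
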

\end{sloppypar}

The following theorem completes the proof sketch of Theorem~\ref{tvg1}.

\begin{theorem}\label{chess-like-tight}
Every game form corresponding to a DGMP game structure is tight.
\end{theorem}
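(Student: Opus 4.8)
The plan is to show that the effectivity function $\cE_g$ of a DGMP game structure $(G,D,v_0)$ is self-dual, i.e.\ that for every coalition $K\subseteq I$ and block $B\subseteq A$, exactly one of $\cE_g(K,B)=1$ and $\cE_g(I\setminus K, A\setminus B)=1$ holds. One direction is already established in the excerpt (surjectivity of $g$ gives $\cE_g(K,B)=1\Rightarrow \cE_g(I\setminus K,A\setminus B)=0$), so the work is the converse: if $K$ is \emph{not} effective for $B$, then $I\setminus K$ \emph{is} effective for $A\setminus B$. The natural vehicle is a zero-sum two-player ``coalition game'' on the digraph $G$: let coalition $K$ play against the complementary coalition $\ol K = I\setminus K$, declaring an outcome in $B$ a win for $\ol K$ and an outcome in $A\setminus B$ (including, if it lies there, the cyclic outcome $c$) a win for $K$. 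Concretely, contract all positions controlled by players of $K$ into a single MIN-type player and all positions of $\ol K$ into a MAX-type player; terminals in $B$ get payoff $1$, terminals in $A\setminus B$ get payoff $0$, and the cyclic outcome $c$ gets payoff $1$ or $0$ according to whether $c\in B$ or $c\in A\setminus B$.

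The key step is then to invoke the classical determinacy / saddle-point existence result for two-person zero-sum DG games in pure stationary strategies, cited in the excerpt (Zermelo--K\"onig--Kalm\'ar, and in the general cyclic/stochastic setting Gillette, Liggett--Lippman). This game is a finite two-person zero-sum deterministic graphical game with terminal payoffs and a single value assigned to all infinite plays, so it has a value $\lambda\in\{0,1\}$ and both players have optimal pure stationary strategies. If $\lambda=1$, an optimal stationary strategy of the $\ol K$-player guarantees the play ends at a terminal in $B$ or cycles with $c\in B$; either way $g(s)\in B$ for all responses, so $\cE_g(\ol K,B)=1$, equivalently $\cE_g(K,B)=1$ does not need checking here—rather, from $\lambda=1$ we get that $K$ \emph{cannot} force $A\setminus B$, i.e.\ $\cE_g(K,A\setminus B)=0$; renaming $B\mapsto A\setminus B$, $K\mapsto \ol K$ this is exactly the desired implication $\cE_g(I\setminus K,A\setminus B)=0\Rightarrow\cE_g(K,B)=1$. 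Symmetrically $\lambda=0$ gives that $\ol K$ cannot force $B$. In either case, because the value is $0$ or $1$ and optimal strategies exist, for every pair $(K,B)$ precisely one of the two coalitions can force its target block, which is precisely self-duality of $\cE_g$. One must handle the degenerate cases $K=\emptyset$ and $K=I$ separately, but these are exactly covered by the stated convention $\cE_g(\emptyset,B)=1\iff B=A$ together with surjectivity.

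The technical care that is needed, and the main obstacle, is the bookkeeping of the cyclic outcome $c$ and the fixed initial position $v_0$: one must make sure that ``$K$ effective for $B$'' in the game-form sense (which quantifies over \emph{stationary} strategy profiles and reads off $g(s)$ from the unique lasso through $v_0$) coincides exactly with ``the $K$-player can guarantee value corresponding to $B$ from $v_0$'' in the auxiliary zero-sum game. This is where one uses that a strategy profile in a DGMP game is literally a choice of one out-arc per non-terminal vertex, so stationary strategies of the two contracted players are in bijection with stationary strategies in the original game, and the play $P(s)$ is determined position-by-position; the assignment of payoff to $c$ must be chosen to match whether $c\in B$, and one checks the two implications are genuinely symmetric under $(K,B)\mapsto(\ol K,\ol B)$ together with the complementary payoff assignment $1\leftrightarrow 0$. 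Once this correspondence is set up cleanly, tightness follows immediately from the existence of a pure stationary saddle point in the two-person zero-sum DG game, which is the only nontrivial external input.
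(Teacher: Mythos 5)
Your proposal is correct and follows essentially the same route as the paper: reduce each pair $(K, I\setminus K)$ to a two-person zero-sum deterministic graphical game on the same digraph (with the block $B$ versus $A\setminus B$, the cyclic outcome assigned to whichever side contains $c$) and invoke the existence of a saddle point in pure stationary strategies, which is exactly the $\pm 1$-solvability of DGMP game forms that the paper takes from Boros--Gurvich together with the equivalence of tightness and $\pm 1$-solvability for two-person game forms. The only difference is presentational: you phrase the external input as classical determinacy of the auxiliary win/lose game rather than citing $\pm 1$-solvability, but these are the same statement.
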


\begin{proof}[Proof sketch.]
Let $g : S = \prod_{i \in I} S^i  \rightarrow A$  be a game form corresponding to a DGMP game structure $(G,D,v_0)$.
To verify that  $g$  is tight, we have to check the implication
$\cE_g(I \setminus K, A \setminus B)= 0 \Rightarrow \cE_g(K,B) = 1$
for all pairs  $K \subseteq I$ and $B \subseteq A$.
But  this can be easily reduced to the two-person case,
simply by replacing coalitions $K$ and $I\setminus K$ by players $1$ and $2$, respectively.
Then, we have to verify tightness of the corresponding two-person game forms for every unordered pair $\{K, I\setminus K\}$.
Tightness follows from the above convention whenever coalition  $K$  or  $I \setminus K$  is empty;
hance, we can assume that  $\emptyset \neq K \neq I$.
By Theorem~\ref{four-properties}, tightness for two-person game forms is equivalent with $\pm 1$-solvability.
The fact that every DGMP game form is $\pm 1$-solvable was proved by Boros and Gurvich~\cite[Section 3]{BG03};
see also~\cite[Section 12, p.~1485]{BGMS07}.
\end{proof}

\begin{remark}
It is essential for the above arguments that all cycles form a unique outcome $c$ of the game. For another natural case, when each cycle is a separate outcome, a criterion of Nash-solvability was obtained in~\cite{BGMS07}, for the two-person case.
\end{remark}

\subsection{The $n$-person case for $n>2$}\label{s5}

Although the concepts of tightness and Nash-solvability are defined for any number $n$ of player,
for $n > 2$ they are no longer equivalent. In particular, unlike tightness (see the above proof sketch of Theorem~\ref{chess-like-tight}),
Nash-solvability of $n$-person game forms cannot be reduced to the two-person case.
Moreover, tightness is neither necessary nor sufficient for Nash-solvability.
An example of a tight but not Nash-solvable game form
was first given in \cite{Gur75}; see Remark 3 therein.
This example has the same number of players, $n=3$, and
the same size,  $2 \times 2 \times 4$, as our main example, but it has only three outcomes.
A smallest possible such example, of size $2 \times 2 \times 2$, was suggested by Andrew Volgin in $2012$
(at the time he was a high-school student); see Figure~\ref{Volgin}. However, this example has four outcomes.

\begin{figure}[h!]
\centering
\includegraphics[width=0.75\textwidth]{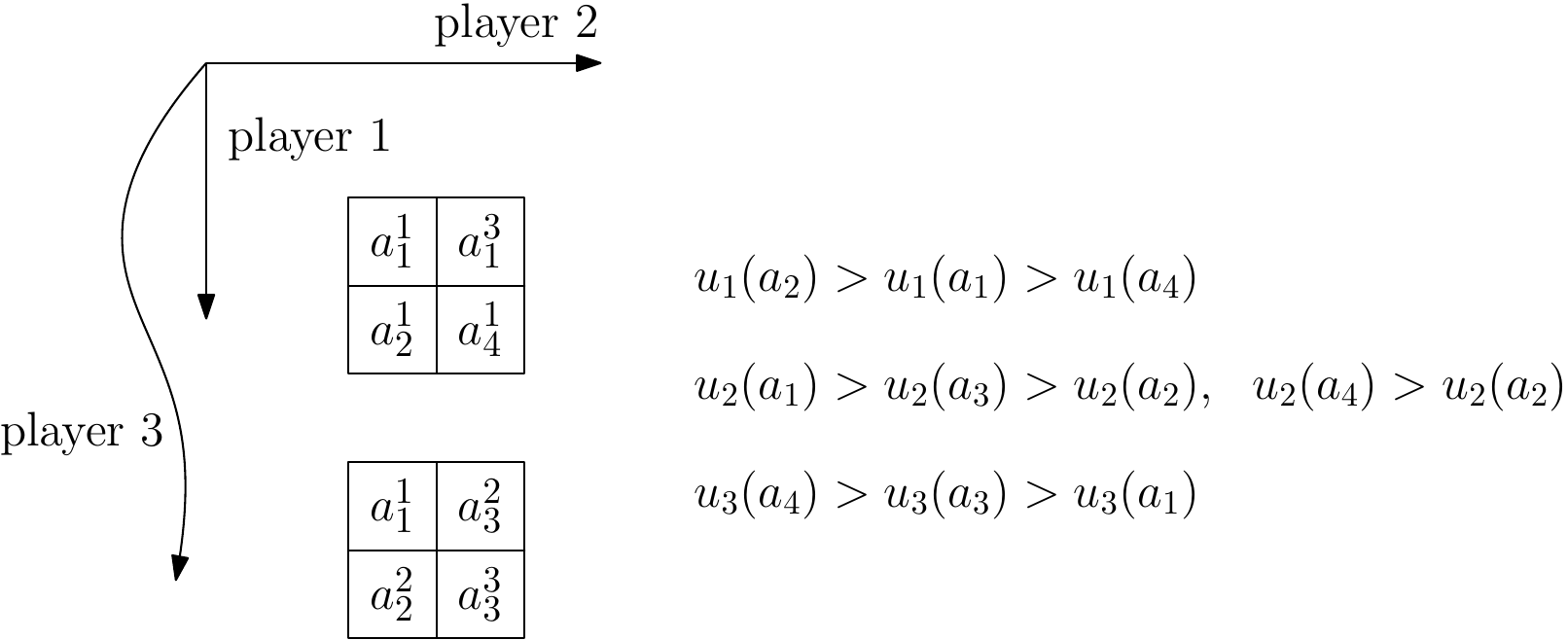}
\caption{A $2 \times 2 \times 2$ game form that is tight but not Nash-solvable.
We use the same notation as in Figures~\ref{fig0} and~\ref{fig-no-NE-in-pure}.}
\label{Volgin}
\end{figure}

Another, fully symmetric example with three players and three outcomes was given in  \cite{Gur88} but
it is of larger size, $6 \times 6 \times 6$; it is reproduced also in \cite{BG03}.

\smallskip

Let us emphasize that none of the above examples represents the normal form of a DGMP game structure,
while the three-person  $2 \times 4 \times 2$  game form constructed in the present paper does;
furthermore, it is also tight but not Nash-solvable (by Theorem~\ref{chess-like-tight} and
Proposition~\ref{prop:no-NE-in-pure-strategies}, respectively). The same properties hold also for the larger example due to Gurvich and Oudalov~\cite{GO14A, Gur15}.

\section{Open ends}\label{conclusion}

\subsection{The case when $c$ is the worst outcome}\label{c-is-the-worst}

By definition, all cycles of the digraph modeling a DGMP game form a unique outcome denoted by $c$, hence,
such a game has $p+1$  outcomes that form the set $A = \{a_1, \ldots, a_p, c\}$.
Let us consider the following assumption and the corresponding reformulation of question ($Q$):
\begin{itemize}
%\noindent
\item[($C$):] For every player, outcome $c$ is strictly worse than any of the terminals.
\item[($Q_C$):] Does every DGMP game satisfying condition ($C$) have an NE in pure stationary strategies?
\end{itemize}

While question ($Q$) was answered in the negative, question $(Q_C)$ remains open.\footnote{Our main example and the example from \cite{GO14A, Gur15} do not satisfy $(C)$.} In the last section of \cite{BG03} it was shown that the answer to ($Q_C$) is positive for $p=2$.
An attempt to extend this result to $p=3$ was made in \cite{BR09}.
However, later the authors found a flaw in the proof and, thus, this case remains open,
while question ($Q$) remains open for $p=2$.

It was conjectured in \cite[Conjectures 1 and 2]{BGMS07} that ($Q$) and $(Q_C)$ have a positive answer.
Later, in~\cite{GO14}, the first conjecture was disproved and
it was conjectured that second question have a negative answer. We second this conjecture.

\begin{conjecture}\label{c0}
There exists a DGMP game satisfying condition ($C$) and having no NE in pure stationary strategies.
\end{conjecture}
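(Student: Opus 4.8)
Since Conjecture~\ref{c0} is still open, what follows is a plan of attack rather than a proof; the aim is to \emph{construct} a counterexample to $(Q_C)$. By Theorem~\ref{tvg1} such a game needs $n\ge 3$ players, and by the positive answer to $(Q_C)$ for $p=2$ proved in~\cite{BG03} it needs $p\ge 3$ terminals, so the plan is to search first among games with $n=3$ and $p=3$, of size comparable to our main example (Figure~\ref{fig0}). The guiding principle is the same as the one behind Figure~\ref{fig0}, stated in the Remark after Proposition~\ref{prop:no-NE-in-pure-strategies}: in an NE-free DGMP game with a unique cycle, the subgame induced by the cycle together with all positions reachable from it has no subgame perfect NE. Imposing $(C)$ forces this subgame, call it $H$, to have no subgame perfect NE \emph{even when the cyclic outcome is the worst outcome for every player on it} --- and this is where the $2$-cycle gadgets of~\cite{AGH10,BEGM11} that seed our main example become unavailable, since they require $c$ to lie strictly above some terminal for one of the two players. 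Thus the first sub-task is: exhibit a DGMP subgame $H$ with a unique cycle, satisfying $(C)$, and with no subgame perfect NE; a short case analysis indicates that a bare $2$-cycle with its exit terminals will not do, so $H$ must be built around a cycle of length $\ge 3$ and/or carry auxiliary positions beyond the cycle and the terminals.

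\smallskip

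Given such a core $H$, the second step is to grow a digraph ``in front of'' it --- adding any missing player, positions, and terminals --- so that across the whole profile space the players' best replies chase one another, exactly as recorded in Figure~\ref{fig-no-NE-in-pure}. Since whether a game $(g,u)$ has a \emph{pure} NE depends only on the players' strict preference orders over $A$, for a fixed game structure one checks NE-freeness by running through the finitely many triples of orders in which $c$ is ranked last, seeking one that leaves no strategy profile stable. Two points need care. First, for a profile $s$ with $g(s)=c$: a player owning a cycle position with an alternative out-arc can, following the others' stationary strategies, reach only a terminal or re-enter the unique cycle, so the construction must make sure some such deviation actually exits to a terminal, after which it is automatically an improvement because every terminal beats $c$ under $(C)$. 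Second, the realization must remain the game form of a genuine DGMP game structure, so that tightness holds for free by Theorem~\ref{chess-like-tight}; in effect one is looking for a \emph{tight} game form that fails to be solvable over the payoffs placing $c$ last, which is precisely the content of Conjecture~\ref{c0}. Concretely this is a computer search in the style of the one that produced Figure~\ref{fig0}, now with $(C)$ added as an additional pruning constraint, whose output would be a game structure together with a preference triple certified NE-free by a finite improvement table.

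\smallskip

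The main obstacle is twofold. First, it is not actually known whether Conjecture~\ref{c0} is true: the positive answers for $p\le 2$~\cite{BG03} and for $n\le 2$ (Theorem~\ref{tvg1}), the fact that our main example and the one of~\cite{GO14A} both \emph{violate} $(C)$, and the circumstance that the only serious attempt at $p=3$~\cite{BR09} was a flawed proof of the \emph{opposite} direction, all leave the $p=3$ case genuinely undecided; a responsible attack must run the construction search and an attempted positive proof of $(Q_C)$ for $p=3$ in parallel, not knowing in advance which will succeed. Second, even granting that a counterexample exists, condition $(C)$ disables the ``engine'' of the unconstrained examples: one can no longer bribe the entry players with the threat of the cycle, so the improvement cycle must be powered entirely by the players' preferences \emph{among the terminals}, and arranging a single short cycle together with three (or few) terminals so that this happens --- while staying inside the narrow class of tight game forms --- is, I expect, where essentially all the difficulty lies. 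It is quite possible that no such example exists with $p=3$, forcing one to pass to $p\ge 4$ or to strictly more positions than our main example uses; the gatekeeping question is the sub-task of the first paragraph, namely whether a $(C)$-compatible DGMP subgame with a unique cycle and no subgame perfect NE exists at all.
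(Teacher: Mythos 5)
This statement is a conjecture, not a theorem: the paper offers no proof of it and states explicitly that question $(Q_C)$ remains open, so there is no proof of record to compare you against. Your decision to present a plan of attack rather than a purported proof is therefore the correct one --- any attempt claiming to settle Conjecture~\ref{c0} outright would have been wrong --- and your framing of the constraints (at least three players by Theorem~\ref{tvg1}, at least three terminals by the $p=2$ result of \cite{BG03}, the need to stay inside the class of game forms arising from DGMP structures so that tightness is automatic by Theorem~\ref{chess-like-tight}, and a finite search over strict preference orders with $c$ ranked last certified by an improvement table as in Figure~\ref{fig-no-NE-in-pure}) is consistent with how the paper itself frames the question. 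Note, though, that the paper's own support for the conjecture is of a different flavour from yours: rather than a structural route through a subgame-perfect-NE-free core, it argues empirically, observing that among the $15$ answered pairs of questions in the first four rows of Table~1 of \cite{BEGM11} the answer never depends on condition $(C)$, which suggests $(Q_C)$ should share the negative answer of $(Q)$ (a position also taken in \cite{GO14}); it then records a consequence of the conjectured answer in Proposition~\ref{p4}, and warns that a counterexample may need to be much larger than the game of Figure~\ref{fig0}.

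One substantive caution about your plan: the ``gatekeeping'' sub-task you isolate --- whether a $(C)$-compatible DGMP subgame with no subgame perfect NE exists at all --- is probably not the true bottleneck. The paper's discussion of play-once games already points to failures of subgame perfect NE for three-person games satisfying $(C)$ (see its references to \cite{BG03, BGY13}, and \cite{AGH10, BEGM11} for related pure-strategy examples), so a suitable core is likely available; the genuine difficulty, as you correctly identify in your second and third paragraphs, is the second step, namely arranging the positions in front of such a core so that no strategy profile from the fixed initial position is stable when every player ranks $c$ last, since under $(C)$ the cyclic outcome can no longer be used as a carrot for any deviating player.
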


Some supporting arguments for this conjecture can be derived from results of~\cite{BEGM11}, as follows.
In Table 1 of that paper, several concepts of solvability are summarized.
Consider the first four rows of this table. In even rows condition ($C$) is assumed, while in odd rows it is not.
Thus, this part of the table contains $16$ pairs of questions
(similar to $(Q)$ and $(Q_C)$) and for each pair, condition ($C$) is required in one of the two questions but not in the other one.
In \cite{BEGM11}, $30$ questions ($15$ pairs) are answered.
Among these $30$ answers, $22$  are negative and $8$ are positive, but interestingly, no answer ever depends on condition ($C$),
that is, in each of these $15$ pairs either both answers are negative, or both are positive;
in other words, condition ($C$) turns out to be irrelevant in all these cases.

Our pair of questions, ($Q$) and ($Q_C$), form the $16$th pair of the table.
Our main example along with the example from~\cite{GO14A, Gur15} answers ($Q$) in the negative, while ($Q_C$) remains open.
However, taking the above observations into account, we second \cite{BEGM11, GO14} and conjecture that
the answer to ($Q_C$) is the same as to ($Q$), that is, negative.
However, the counterexample for ($Q_C$) may need to be much larger than our example for~($Q$).

Note also that a corresponding example, if there is one, would strengthen simultaneously
the main example of the present paper and of \cite{GO14}; see Figure~$2$ and Table~$2$ therein.

\bigskip
Furthermore, we now explain that a positive answer to Conjecture~\ref{c0} would imply the following curious statement:
there exists a DGMP game satisfying ($C$) in which $c$ is the only NE outcome (in spite of the fact that $c$ is worse than any terminal for each player). In an arbitrary DGMP game a strategy profile
{(in particular an NE)} is called {\em terminal} if the corresponding outcome is a terminal position and {\em cyclic} if it is  $c$. A DGMP game is {\em connected} if every position is reachable from the initial one.

\begin{proposition}
\label{p4}
If the answer to question ($Q_C$) is negative, then there exists a connected DGMP game that
satisfies ($C$) and has a (cyclic) NE  but no terminal one, although the game has at least one terminal.
\end{proposition}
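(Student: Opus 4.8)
The plan is to start from a DGMP game $\Gamma$ that satisfies condition ($C$) and has no NE in pure stationary strategies — such a game exists by the hypothesis that ($Q_C$) has a negative answer — and to surgically modify it into a game $\Gamma'$ that does have an NE, but only a cyclic one. First I would observe that, as noted in Section~\ref{s0a}, we may assume without loss of generality that every player's payoff for the cyclic outcome $c$ is $0$ and all terminal payoffs are strictly positive (this is exactly condition ($C$) after the harmless normalization $\min_a u_i(a)=0$, with the minimum attained at $c$; if $c$ is strictly worst it is the unique minimizer, so all $u_i(a_j)>0=u_i(c)$). The key structural idea is to add a new initial position $v_0'$, controlled by some player, with two outgoing moves: one leading into the old initial position $v_0$ of $\Gamma$, and one leading into a newly created short cycle (say a $2$-cycle) all of whose vertices have only that cyclic move available — so this second branch forces the cyclic outcome $c$. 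Call the resulting game $\Gamma'$, with initial position $v_0'$; it is connected once we keep only the positions reachable from $v_0'$, and it still satisfies ($C$) and has at least one terminal (any terminal of $\Gamma$, provided $\Gamma$ has one — and if $\Gamma$ happened to have no terminal, we first graft one on via an extra move somewhere, which cannot create an NE).

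Next I would analyze the NEs of $\Gamma'$. A strategy profile in $\Gamma'$ decomposes into the choice at $v_0'$ together with a strategy profile in $\Gamma$ (the forced cycle contributes nothing). If the player owning $v_0'$ selects the move into $v_0$, then the play of $\Gamma'$ coincides with a play of $\Gamma$, so the profile is an NE of $\Gamma'$ only if the induced profile is an NE of $\Gamma$ (no player other than the owner of $v_0'$ is affected by the presence of the new branch, and the owner of $v_0'$ can only switch to the cyclic branch, which by ($C$) is weakly worse for him than any terminal — so if the $\Gamma$-profile were an NE we would get one in $\Gamma'$, contradicting NE-freeness of $\Gamma$); hence there is no terminal NE. If instead the owner of $v_0'$ selects the move into the forced cycle, the outcome is $c$; this is an NE precisely when the owner of $v_0'$ does not prefer switching to the $v_0$-branch, i.e.\ when, no matter how the other players have fixed their strategies inside $\Gamma$, the resulting outcome of $\Gamma$ is not strictly better for him than $c$. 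To guarantee that such a configuration of the other players' strategies exists, I would build into $\Gamma$ (before the surgery) the feature that some player — the one we put in charge of $v_0'$ — can be ``held to $c$'': more precisely, I would choose the owner of $v_0'$ to be a player $i$ for whom $c$ is strictly the worst outcome, and arrange the opponents' strategies so that the play inside $\Gamma$ ends at a terminal, but then observe that any terminal is strictly better for $i$ than $c$, which is the wrong direction.

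This is exactly where the real work lies, and it is the main obstacle: a naive forced-cycle branch off $v_0'$ gives the owner of $v_0'$ a profitable deviation into $\Gamma$, so the cyclic profile need not be an NE. The fix is to not give $v_0'$ the escape move at all, but rather to make $c$ itself attainable inside a slightly enlarged $\Gamma$ as an NE outcome by a different route: take the NE-free game $\Gamma$, add to it a fresh terminal-free gadget that a single designated player $i_0$ can steer into a forced $2$-cycle, placed so that in the subgame below it all of $i_0$'s options lead (for $i_0$) to outcomes no better than $c$; since $c$ is $i_0$'s worst outcome, this is impossible unless the gadget is entered only from $i_0$'s own move against which no one can react — so instead I make the gadget a position of $i_0$ whose only alternative leads, via the other players' best responses, right back to it, i.e.\ I exploit that in the NE-free game there is always an improving deviation and track through a fixed-point style argument which player is ``unstable'' at the would-be cyclic profile. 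Concretely, I expect the cleanest route to be: pick any maximal play of $\Gamma$ ending in a terminal $a$ such that the player $i_0$ who moves last on that play has $u_{i_0}(a)$ as small as possible; redirect $i_0$'s final move to a forced cycle, obtaining $\Gamma'$; check that the profile "everyone plays as on this play, $i_0$ diverts to the cycle" is an NE of $\Gamma'$ (no one else can improve because off-path their options are unchanged and on-path the outcome is now $c$, $i_0$'s worst, from which $i_0$ gains nothing by diverting elsewhere by minimality of $a$), while any terminal profile of $\Gamma'$ restricts to one of $\Gamma$ and hence admits an improving deviation that survives in $\Gamma'$. I would then verify connectedness by discarding unreachable positions and confirm ($C$) and the existence of a terminal are preserved; the delicate point to get right in the write-up is the minimality argument ensuring $i_0$ has no profitable deviation at the cyclic profile, and checking that the deviation witnessing instability of an arbitrary terminal profile of $\Gamma$ does not pass through the modified move in a way that changes its outcome.
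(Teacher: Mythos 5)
Your proposal correctly identifies the central obstacle (under ($C$) every player strictly prefers any terminal to $c$, so a naively attached ``forced cycle'' branch is never stable), but the fix you sketch does not overcome it, and it also undermines the second half of the argument. First, at your would-be cyclic NE (``everyone plays as on the chosen play, $i_0$ diverts to the forced cycle''), the minimality of $u_{i_0}(a)$ is irrelevant: the problem is not $i_0$'s incentives at that one position, but that \emph{any} player --- including $i_0$ at his other positions and, crucially, every other player with a position on or off the chosen play --- may be able to unilaterally reroute the play to some terminal, which by ($C$) is strictly better than $c$ for him/her. Nothing in your construction rules this out, and this is exactly the hard part: one needs a game structure in which a cycle is ``locked in'' against every single-player deviation, \emph{independently of preferences}. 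The paper supplies this missing ingredient as a separate gadget: the explicit games of Figure~\ref{fig1}, which have a cyclic strategy profile that is an NE no matter how the players rank the outcomes, because no unilateral deviation can reach a terminal at all.

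Second, your surgery modifies the NE-free game itself (redirecting $i_0$'s move to $a$), and this can destroy its NE-freeness: the improving deviations witnessing that a given terminal profile of $\Gamma$ is not an NE may use precisely the edge you removed, so after the modification that profile can become a terminal NE of $\Gamma'$. You flag this as ``the delicate point to get right'' but give no argument, and in general there is none --- deleting moves only shrinks the set of deviations. The paper's construction avoids both problems simultaneously by leaving the NE-free game $\Gamma''$ completely untouched and \emph{prepending} the cyclic-NE gadget $\Gamma'$ in front of it, identifying all terminals of $\Gamma'$ with the initial vertex of $\Gamma''$: the gadget's cyclic NE extends to a cyclic NE of the composite game (no player of $\Gamma'$ can even reach the merged vertex by a unilateral deviation, and players of $\Gamma''$ never get to move), while any terminal NE of the composite game would pass through that vertex and restrict to an NE of $\Gamma''$, contradicting its NE-freeness. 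So your overall strategy (graft a cycle onto the NE-free game) needs to be replaced by this composition argument, or else you must construct and verify a preference-independent cyclic-NE gadget yourself and still keep the NE-free part intact --- which is essentially the paper's proof.
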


The last property is important to rule out trivial terminal-free examples such as a directed cycle.

\begin{proof}
Take any connected DGMP game $\Gamma'$ that has a cyclic NE  $s'$,
for example, one of the two games shown in Figure \ref{fig1}.

\begin{figure}[h!]
\begin{center}
\begin{tikzpicture}[scale=0.5]
\path
(3,3) node(v0)[draw,black,circle,fill=red!50!white] {$\mathbf{1}$}
(6,0) node(v1)[draw,black,circle] {$\mathbf{1}$}
(0,0) node(v2)[draw,black,circle] {$\mathbf{2}$}
(6,3) node(v3)[draw,black,circle] {$\mathbf{2}$}
(6,6) node(v4)[draw,black,circle] {$\mathbf{3}$}
(3,0) node(v5)[draw,black,circle] {$\mathbf{3}$}
(0,3) node(t2)[draw,black,rectangle] {$\mathbf{a_2}$}
(3,6) node(t3)[draw,black,rectangle] {$\mathbf{a_3}$}
(3,-3) node(t1)[draw,black,rectangle] {$\mathbf{a_1}$};
\foreach \x/\y in {v0/v3,v0/v2,v1/v3,v1/t1,v2/v5,v2/t2,v3/v5,v3/v4,v4/v0,v4/t3,v5/v0,v5/v1}
\draw[color=black!80!white,thick,->] (\x) -- (\y);
\foreach \x/\y in {v0/v3,v3/v5,v5/v0} \draw[color=red,line width=2pt,->] (\x) -- (\y);
\foreach \x/\y in {v4/v0,v2/v5,v1/v3} \draw[color=blue,line width=2pt,->] (\x) -- (\y);
\end{tikzpicture}
\hspace*{10mm}
\begin{tikzpicture}[scale=0.5]
\path
(0,0) node(v0)[draw,black,circle,fill=red!50!white] {$\mathbf{1}$}
(-3,3) node(v1)[draw,black,circle] {$\mathbf{2}$}
(0,3) node(v2)[draw,black,circle] {$\mathbf{3}$}
(3,3) node(v3)[draw,black,circle] {$\mathbf{4}$}
(-4.5,6) node(v4)[draw,black,circle] {$\mathbf{3}$}
(-1.5,6) node(v5)[draw,black,circle] {$\mathbf{4}$}
(1.5,6) node(v6)[draw,black,circle] {$\mathbf{2}$}
(4.5,6) node(v7)[draw,black,circle] {$\mathbf{1}$}
(-3,9) node(t1)[draw,black,rectangle] {$\mathbf{a_1}$}
(0,9) node(t2)[draw,black,rectangle] {$\mathbf{a_2}$}
(3,9) node(t3)[draw,black,rectangle] {$\mathbf{a_3}$};
\foreach \x/\y in {v0/v2,v0/v3,v1/v0,v1/v4,v2/v1,v2/v5,v2/v6,v3/v2,v3/v6,v3/v7,v4/v5,v4/t1,v5/v1,v5/t1,v5/t2,v5/v6,v6/t2,v6/t3,v6/v7,v7/t3,v7/v2}
\draw[color=black!80!white,thick,->] (\x) -- (\y);
\foreach \x/\y in {v0/v3,v3/v2,v2/v1,v1/v0,v4/v5,v5/v1,v6/v7,v7/v2} \draw[color=blue,line width=2pt,->] (\x) -- (\y);
\foreach \x/\y in {v0/v3,v3/v2,v2/v1,v1/v0} \draw[color=red,line width=2pt,->] (\x) -- (\y);
\end{tikzpicture}
\end{center}
\caption{Two examples of DGMP games with three terminals $V_T'=\{a_1,a_2,a_3\}$.
Players are indicated inside the nodes. In the left we have three players, while in the right example there are four.
The shaded nodes controlled by player $1$ are the initial vertices.
In both cases, thick lines form strategy profiles resulting in a cycle.
It is easy to verify that the pictured strategy profiles are NE in both cases,
regardless of the players' preferences over the outcomes, that is, even if they all rank $c$ the last.}
\label{fig1}
\end{figure}
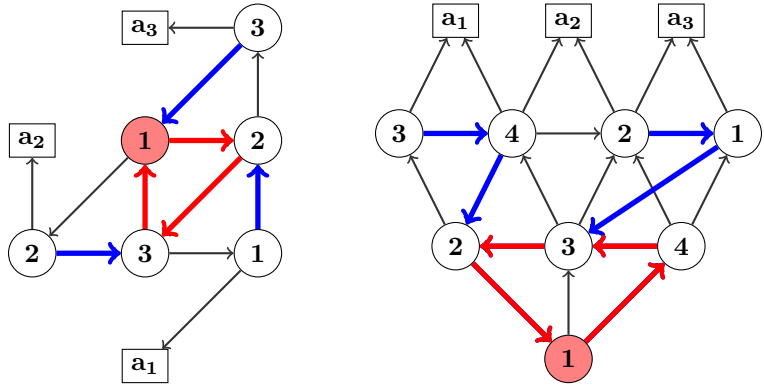

By our assumption about ($Q_C$), there exists an NE-free game $\Gamma''$ that satisfies condition ($C$).
Without loss of generality, we may assume that $\Gamma''$ is connected.
Furthermore, we may assume that  $\Gamma'$  and   $\Gamma''$  have disjoint sets of players and positions.
Let us identify all terminals of  $\Gamma'$  into
a single vertex  $v_t'$, and then identify  $v_t'$  with the initial vertex $v_0''$ of  $\Gamma''$; see Figure~\ref{fig-gamma}.

\begin{figure}[h!]
\centering
\includegraphics[width=0.5\textwidth]{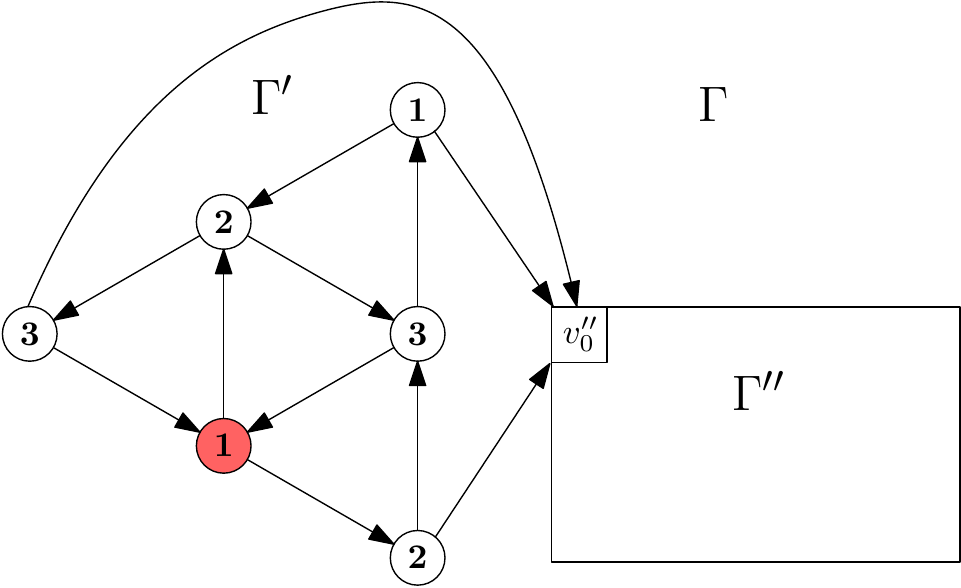}
\caption{A schematic depiction of the construction of $\Gamma$.}\label{fig-gamma}
\end{figure}

Clearly, the obtained game  $\Gamma$, having as its initial vertex the initial vertex $v_0'$ of $\Gamma'$, is connected, since
both games $\Gamma'$  and  $\Gamma''$ are connected.
Furthermore, $\Gamma$ and $\Gamma''$  have the same set of terminals.

We keep the same preferences for the players  $I''$  of  $\Gamma''$  as before, while
for the players  $I'$ of $\Gamma'$, let us consider arbitrary preferences over the outcomes of $\Gamma''$
satisfying ($C$).

First, let us show that there is a cyclic NE in $\Gamma$.
Recall that $s'$  is a cyclic NE in $\Gamma'$;
extend  $s'$ by arbitrary strategies of the players from $I''$.
The obtained strategy profile  $s$  in the game  $\Gamma$  is also a cyclic NE.
Indeed, by assumption, no player  $i' \in I'$  can reach  $v_t'$  by replacing his/her strategy
provided all players of $I' \setminus \{i'\}$  keep their strategies in $\Gamma'$.

It remains to show that no terminal NE $\sigma$ can exist in $\Gamma$.
Indeed, if the play  $P(\sigma)$  reaches a terminal of $\Gamma$ then it must pass  $v_t'$.
Since $\sigma$ is a terminal NE in $\Gamma$, its restriction to  $\Gamma''$ must be a terminal NE in  $\Gamma''$.
However, by assumption, $\Gamma''$ is NE-free. This contradiction proves our statement.
\end{proof}

\subsection{Play-once games}

In \cite{BG03}, a game was called {\em play-once} if every player controls a unique position. In such a game, the sets of mixed and independently mixed strategies of each player coincide. The game from our main example is ``almost play-once'', in the sense that only one player controls two positions, while each of the other two controls only one.

Generally speaking, the Markovian evaluation is much more common that the a priori one in the literature, with many fruitful applications, see, e.g.,~\cite{MO70}. However, for play-once DGMP games, the a priori evaluation also has its advantage.
The existence of an NE in independently mixed strategies for the a priori evaluation is guaranteed by the classical result of Nash~\cite{Nas50, Nas51},
while for the Markovian evaluation the question remains open.
Nash's arguments do not apply in this case, since the expected payoff function, as a function of probabilities, may be discontinuous.
For example, in the game represented in Figures~\ref{fig0}~and~\ref{fig-probabilities} the probability of the cyclic outcome is  $0$
unless $\gamma \delta = 1$, in which case the probability may jump to a strictly positive value.
In contrast, for the a priori evaluation, the cycle may appear with a positive probability if $\gamma$ and $\delta$ are both positive.
In this case the expected payoff is a continuous function of probabilities, hence Nash's Theorem applies.

Let us note that a subgame perfect NE in the mixed strategies may fail to exist for a play-once game,
for both Markovian and a priori evaluations, already for a $3$-person game satisfying $(C)$~\cite{BG03, BGY13}.

\vbox{The existence of an NE in pure strategies in a play-once DGMP game also remains open.
However, the answer is positive if we additionally assume condition $(C)$~\cite{BG03}.
(Interestingly, for general DGMP games the situation is opposite.)}

\subsection{Some other open questions}

It is open whether the following very general claims hold:
\begin{itemize}
\item If a DGMP game $(G,D,u,v_0)$  has no NE in pure stationary strategies then
there is no  NE in the independently mixed strategies for a game  $(G,D,w,v_0)$
with a payoff function  $w$ such that
$u_i(a)>u_i(b)$ if and only if $w_i(a)>w_i(b)$  for all  $i\in I$ and $a,b\in A$.
The problem is open for both (a) a priori and (b) Markovian evaluations.
\end{itemize}
If this claim holds for (a) then, by Nash's Theorem~\cite{Nas50, Nas51},
any play-once game has an NE in pure stationary strategies.
The latter is known to be true~\cite{BG03} in case when condition ($C$) holds.
However, (a) cannot be replaced by (b) in the above arguments.

\medskip
For the case of two terminals, $p=2$, a positive answer to question ($Q_C$) was given in~\cite{BG03}.
An attempt to extend this result to  $p=3$ was made in \cite{BR09}, but
later the authors found a flaw in the proof and, thus, already in this case ($Q_C$) remains open.

In the case of two players question ($Q$) is answered in the positive; see Section~\ref{sec:2-person-case}.
However, ($Q$) remains open for the case of two terminals.

\medskip

It is also open whether our main example is of smallest possible size
(number of positions) and if this is the case, whether it is the unique smallest one.
In particular, it is open whether the constraints on the payoff function
given in Figure~\ref{fig0} describe all NE-free games over the given digraph.

\appendix

\section*{Appendix~A: Nash equilibrium in independently mixed strategies}\label{appendix-1}

In the ``independently mixed'' setting, we denote by $S_i$ the set of independently mixed strategies of player $i \in I$;
a strategy profile  $s \in S = S_1 \times  \ldots \times S_n$  is then a collection of independently mixed strategies,
one for each player. Furthermore, the payoff values $u_i(a)$ are given, specifying the profit of player  $i$
in case outcome $a$ is realized, where  $a$  can be a terminal or the cyclic outcome.
The {\em expected payoff} of player $i \in I$  is the function  $\phi^i:S\to \RR$ defined as
\begin{equation}\label{payoff}
\phi^i(s) = \sum_{a\in A}\PP(s,a)u_i(a)\,,
\end{equation}
where $\PP(s,a)$ is the probability that outcome $a\in A$ is realized in the game,
assuming that the players play according to the independently mixed strategies specified by  $s \in S$.
A strategy profile  $s \in S$ is then called a {\em Nash equilibrium} (NE) if for each player  $i \in I$  and
for each strategy profile  $\bar{s}$  that may differ from  $s$  only in one coordinate $i$, we have
$\phi^i(s) \ge \phi^i(\bar{s})$.

\medskip
In the proofs of Theorems~\ref{thm:a-priori} and~\ref{thm:Markovian}
given in the last two subsections of the appendix, we adopt notation from Figure~\ref{fig-probabilities}, where each arc
$(v,v')$ is equipped with a number in the interval $[0,1]$, representing the probability that
the player controlling position $v$ will choose the move $(v,v')$ in $v$.
For example, for every $\alpha\in [0,1]$ the pair $(\alpha,1-\alpha)$ specifies an
(independently) mixed strategy of player $1$.
Similarly, for every $\beta \in [0,1]$ and $\gamma\in [0,1]$, the $4$-tuple $(\beta,1-\beta,\gamma, 1-\gamma)$ specifies
an independently mixed strategy of player $2$, and for every $\delta\in [0,1]$, the pair $(\delta,1-\delta)$
specifies an (independently) mixed strategy of player $3$.
Thus, every strategy profile $s\in S$ corresponds to a unique $4$-tuple $(\alpha,\beta,\gamma,\delta)\in [0,1]^4$.
We will assume this notation in the next two subsections.

\begin{figure}[h!]
\centering
\includegraphics[width=0.3\textwidth]{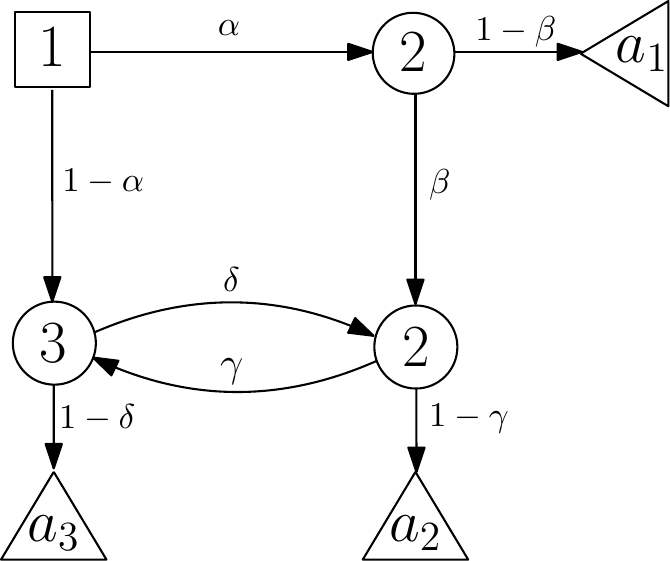}
\caption{Parameters $\alpha,\beta,\gamma,\delta$ specifying the independently mixed strategies of the players.}
\label{fig-probabilities}
\end{figure}

\section*{Appendix~B: Proof of Theorem~\ref{thm:Markovian}}
\label{sec:proof-Markovian}

Let $(G,D,u,v_0)$ be the three-person DGMP game described in Figure~\ref{fig0}.
The expected payoffs of the game are given by formula~\eqref{payoff}, which involves the outcome probabilities $\PP(s,a)$ for all
$s = (\alpha,\beta,\gamma,\delta) \in [0,1]^4$ and all $a\in A = \{a_1,a_2,a_3,c\}$, see Figure~\ref{fig-probabilities}.

We claim that the outcome probabilities are the following:
\begin{eqnarray*}
% \nonumber to remove numbering (before each equation)
\PP(\alpha,\beta,\gamma,\delta,a_1)&=& \alpha(1-\beta)\,,\\
\PP(\alpha,\beta,\gamma,\delta,a_2)&=& \left\{
                                         \begin{array}{ll}
                                           (\alpha\beta+(1-\alpha)\delta)\frac{1-\gamma}{1-\gamma\delta}, & \hbox{if $\gamma\delta<1$;} \\
                                           0, & \hbox{otherwise\,,}
                                         \end{array}
                                       \right.\\
\PP(\alpha,\beta,\gamma,\delta,a_3)&=& \left\{
                                         \begin{array}{ll}
                                           (1-\alpha+\alpha\beta\gamma)\frac{1-\delta}{1-\gamma\delta}, & \hbox{if $\gamma\delta<1$;} \\
                                           0, & \hbox{otherwise\,,}
                                         \end{array}
                                       \right.\\
\PP(\alpha,\beta,\gamma,\delta,c)&=& \left\{
                                       \begin{array}{ll}
                                         1-\alpha+\alpha\beta, & \hbox{if $\gamma\delta = 1$;} \\
                                         0, & \hbox{otherwise.}
                                       \end{array}
                                     \right.\\
\end{eqnarray*}
The independently mixed strategies of the players defined by the $4$-tuples $(\alpha, \beta,\gamma,\delta)$ can also be viewed as the set of transition probabilites in digraph $G$, creating a Markov chain (see Figure~\ref{fig-probabilities}). The expected payoffs are computed according to the limit distribution of this Markov chain. If $\gamma\delta<1$, then the Markov chain has three absorbing classes corresponding to the three terminals and the limit probability of the $2$-cycle $c$ is zero. If $\gamma\delta=1$, then the Markov chain has two absorbing classes, terminal $a_1$ and the $2$-cycle $c$.

Digraph $G$ has a unique walk from the initial position to $a_1$, which is realized with probability $\alpha(1-\beta)$, independently of whether $\gamma\delta<1$ or not; this establishes $\PP(\alpha,\beta,\gamma,\delta,a_1)= \alpha(1-\beta)$.
Suppose that $\gamma\delta<1$. Let $v$ and $w$ be the two positions controlled by player $2$ such that $(v,w)\in E(G)$ and let $z\in V_3$ be the unique vertex controlled by player $3$. The walks from $v_0$ to $a_2$ can be classified into two types depending on whether they pass through $v$ or not.
If the walk passes through $v$, then its sequences of edges is of the form $(v_0v,vw,(wz,zw)^k, wa_2)$ for some $k\ge 0$, and the probability of its occurrence is $\alpha\beta(\gamma\delta)^k{(1-\gamma)}$. Otherwise, it is of the form $(v_0z,zw,(wz,zw)^k, wa_2)$ for some $k\ge 0$, and the probability of its occurrence is $(1-\alpha)\delta(\gamma\delta)^k{(1-\gamma)}$. Summing up these probabilities, we obtain
$$\PP(\alpha,\beta,\gamma,\delta,a_2) = (\alpha\beta+(1-\alpha)\delta)(1-\gamma)\cdot\sum_{k = 0}^\infty(\gamma\delta)^k =
(\alpha\beta+(1-\alpha)\delta)\frac{1-\gamma}{1-\gamma\delta}\,,$$
as claimed. A similar computation yields the claimed formula for the value of $\PP(\alpha,\beta,\gamma,\delta,a_3)$.

If $\gamma\delta=1$, then the probability of the occurrence of $c$ equals the probability of reaching a vertex of $c$ from the initial position, which is clearly $(1-\alpha)+(\alpha\beta)$. If the Markov chain reaches the cycle, then it will stay on it with probability one, hence in this case the limit probability of $a_2$ or $a_3$ is zero.

Suppose for a contradiction that the game has an NE, say
$s^* = (\alpha^*,\beta^*,\gamma^*,\delta^*) \in [0,1]^4$,
in independently mixed strategies for the Markovian evaluation.
We split the proof into six exhaustive cases depending on the values of $\alpha^*$,
$\beta^*$, $\gamma^*$, $\delta^*$, and show that each of them leads to a contradiction.

\medskip
\noindent{\bf Case 1: $\gamma^*\delta^* = 1$.}

\smallskip
In this case, there are only two possible outcomes: $a_1$ with probability $\alpha^*(1-\beta^*)$ and $c$ with
probability $1-\alpha^*+\alpha^*\beta^*$.

Suppose that $\alpha^*>0$ and $\beta^*<1$. Then
player $1$ would be able to strictly improve his/her expected payoff from
$\phi^1(\alpha^*,\beta^*,1,1) = \alpha^*(1-\beta^*)u_1(a_1)+ (1-\alpha^*+\alpha^*\beta^*)u_1(c)$
to
$\phi^1(0,\beta^*,1,1) = u_1(c)$, by changing his/her strategy to $\alpha = 0$.

\begin{center}
\includegraphics[width=0.7\textwidth]{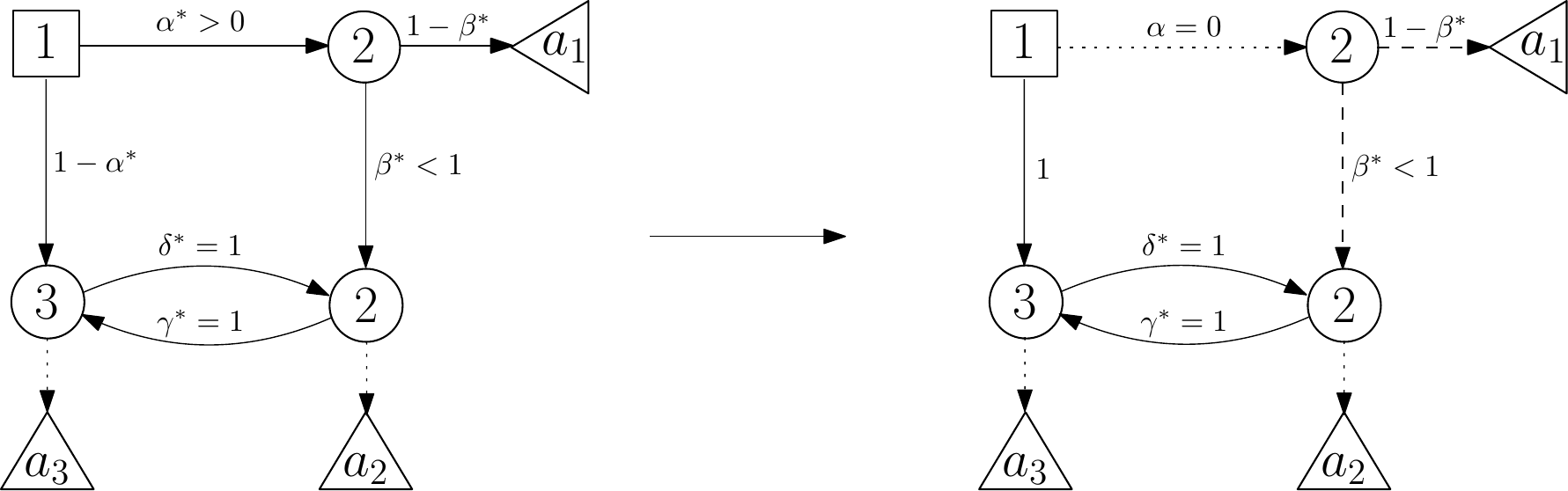}
\end{center}

This contradicts the fact that $s^*$ is an NE and shows that either $\alpha^* = 0$ or $\beta^* = 1$. In both cases,
player $2$ would be able to strictly improve his/her expected payoff from $\phi^2(\alpha^*,\beta^*,1,1) = u_2(c)$
to $\phi^2(\alpha^*,1,0,1) = u_2(a_2)$, by changing his/her strategy to $(\beta,\gamma) = (1,0)$.

\begin{center}
\includegraphics[width=0.7\textwidth]{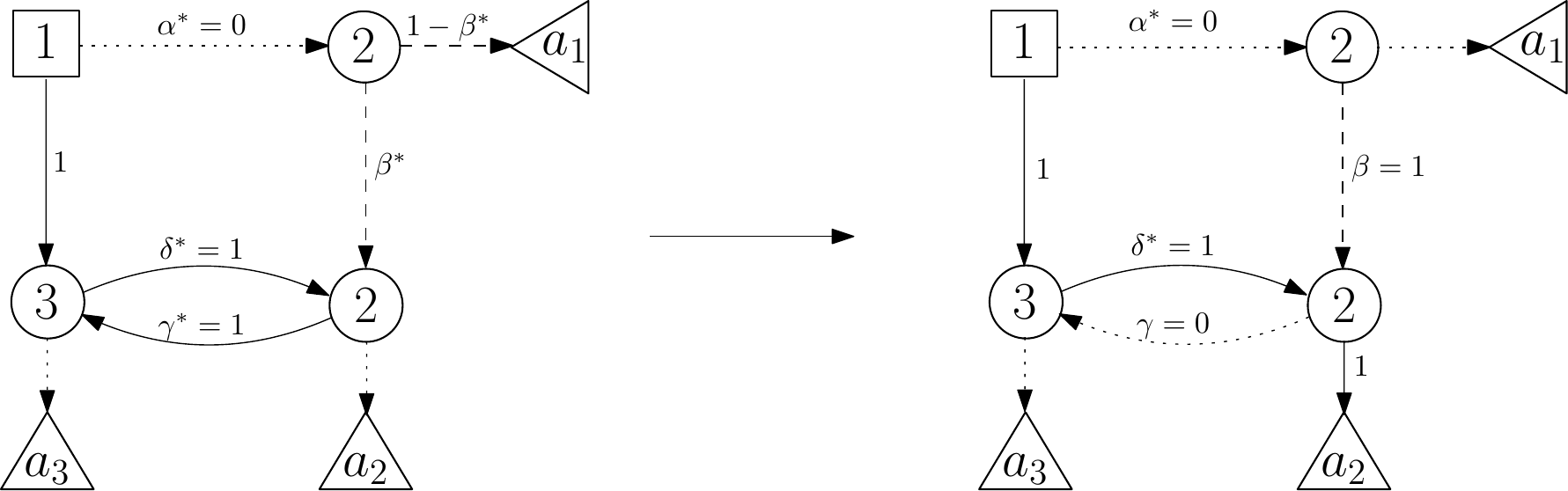}
\end{center}

\medskip
\begin{center}
\includegraphics[width=0.7\textwidth]{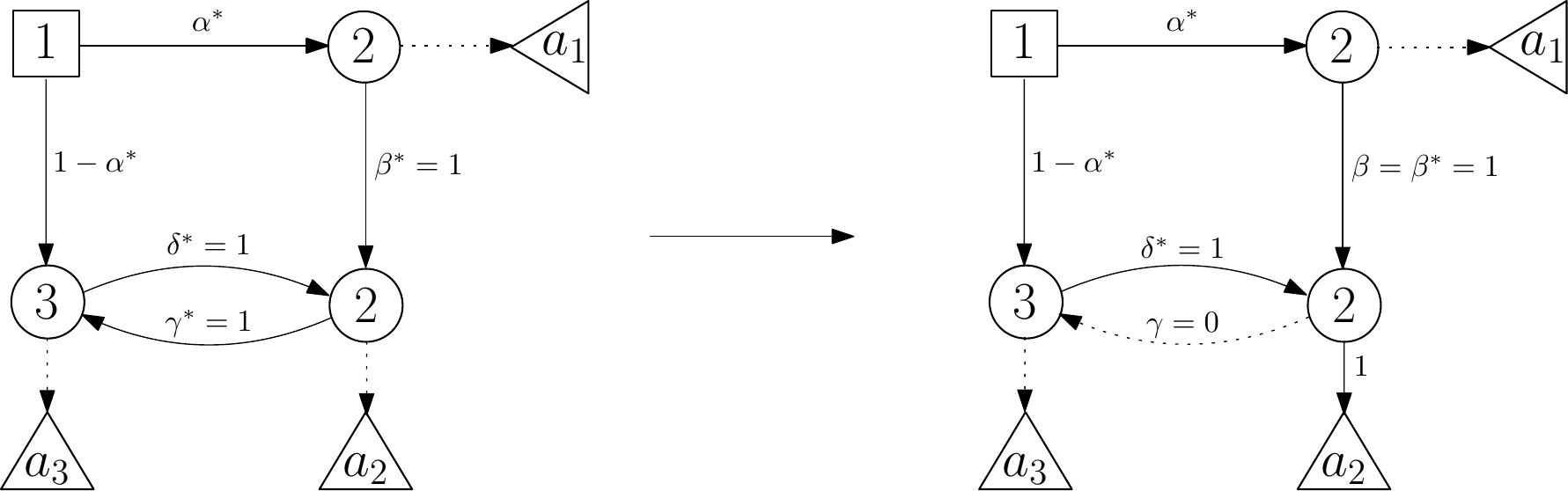}
\end{center}

This contradicts the fact that $s^*$ is an NE and settles Case 1.

\medskip
\noindent{\bf Case 2: $\alpha^*(1-\beta^*\gamma^*)(1-\delta^*) + \delta^*(1-\gamma^*)> 0$ and $\delta^*<1$.}

\smallskip
Let us first explain the reason for the conditions of this case. The
expected payoff of player $2$ is a convex combination of $u_2(a_1)$, $u_2(a_2)$, and $u_2(a_3)$, that is,
$$\phi^2(\alpha^*,\beta^*,\gamma^*,\delta^*) = \lambda u_2(a_1) +\mu u_2(a_2) +
 (1-\lambda-\mu)u_2(a_3)\,,$$
where $0\le \lambda\le 1$ and $0\le \mu\le 1$ with $\lambda+\mu\le 1$. The coefficients are given by
\begin{equation}\label{eq-convex}
\begin{aligned}
\lambda &= \PP(\alpha^*,\beta^*,\gamma^*,\delta^*,a_1) = \alpha^*(1-\beta^*)\,,\\
\mu &= \PP(\alpha^*,\beta^*,\gamma^*,\delta^*,a_2) = (\alpha^*\beta^*+(1-\alpha^*)\delta^*)\frac{1-\gamma^*}{1-\gamma^*\delta^*}\,.
\end{aligned}
\end{equation}

Recall that $u_2(a_3)>u_2(a_1)>u_2(a_2)$ and note that coefficients $\lambda$ and $\mu$ are both zero if $\beta^* = \gamma^* = 1$.
Thus, if $\lambda+\mu>0$, then player $2$ can strictly improve his/her payoff to $u_2(a_3)$,
by changing his/her strategy to $(\beta,\gamma) = (1,1)$ --- provided that also $\delta^*<1$, since if $\delta^*=1$, then
$\PP(\alpha^*,\beta^*,\gamma^*,\delta^*,a_3) = (1-\alpha^*+\alpha^*\beta^*\gamma^*)(1-\delta^*)=0$.

\begin{center}
\includegraphics[width=0.7\textwidth]{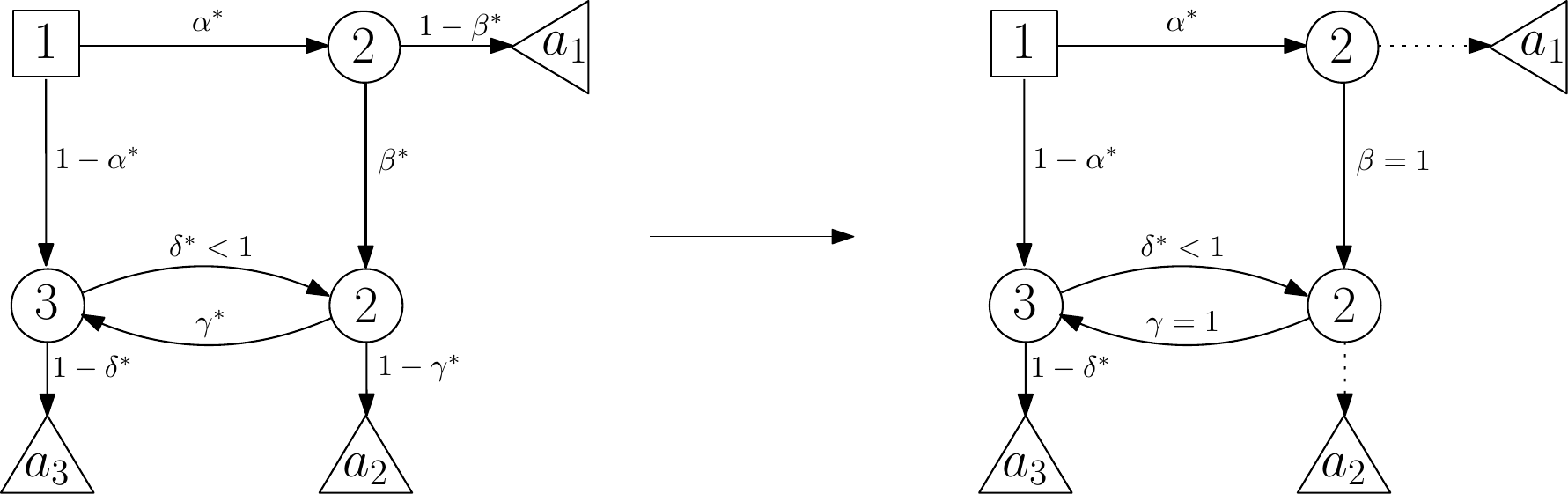}
\end{center}

Furthermore, note that $\lambda+\mu$ is positive if and only if the scaled quantity
$(\lambda+\mu)(1-\gamma^*\delta^*)$ is positive, and that
\begin{eqnarray*}
(\lambda+\mu)(1-\gamma^*\delta^*) &=&
\alpha^*(1-\beta^*)(1-\gamma^*\delta^*) + (\alpha^*\beta^*+(1-\alpha^*)\delta^*)(1-\gamma^*)\\
&=& \alpha^*(1-\beta^*\gamma^*)(1-\delta^*) + \delta^*(1-\gamma^*)\,,
\end{eqnarray*}
where the second equality can be routinely verified by separating the terms
into those involving $\alpha^*$ and the rest.

Thus, under the assumptions of Case 2, player $2$ can strictly improve his/her expected payoff from
$\phi^2(\alpha^*,\beta^*,\gamma^*,\delta^*) = \lambda u_2(a_1) +\mu u_2(a_2) + (1-\lambda-\mu)u_2(a_3)$
to $\phi^2(\alpha^*,1,1,\delta^*) = u_2(a_3)$, by changing his/her strategy to $(\beta,\gamma) = (1,1)$.
This contradicts the fact that $s^*$ is an NE and settles Case 2.

\medskip
Assuming that the conditions of Cases 1 and 2 are not satisfied, that is, $\gamma^*\delta^*<1$ and
either $\delta^*=1$ or $\alpha^*(1-\beta^*\gamma^*)(1-\delta^*) + \delta^*(1-\gamma^*)= 0$, we split the rest of the proof into four exhaustive cases.

\medskip
\noindent{\bf Case 3: $\beta^*\gamma^* = 1$ and $\delta^*<1$.}
\smallskip

In this case, player $3$ can strictly improve his/her expected payoff from
$\phi^3(\alpha^*,1,1,\delta^*) = u_3(a_3)$
to
$\phi^3(\alpha^*,1,1,1) = u_3(c)$, by changing his/her strategy to $\delta =1$.

\begin{center}
\includegraphics[width=0.7\textwidth]{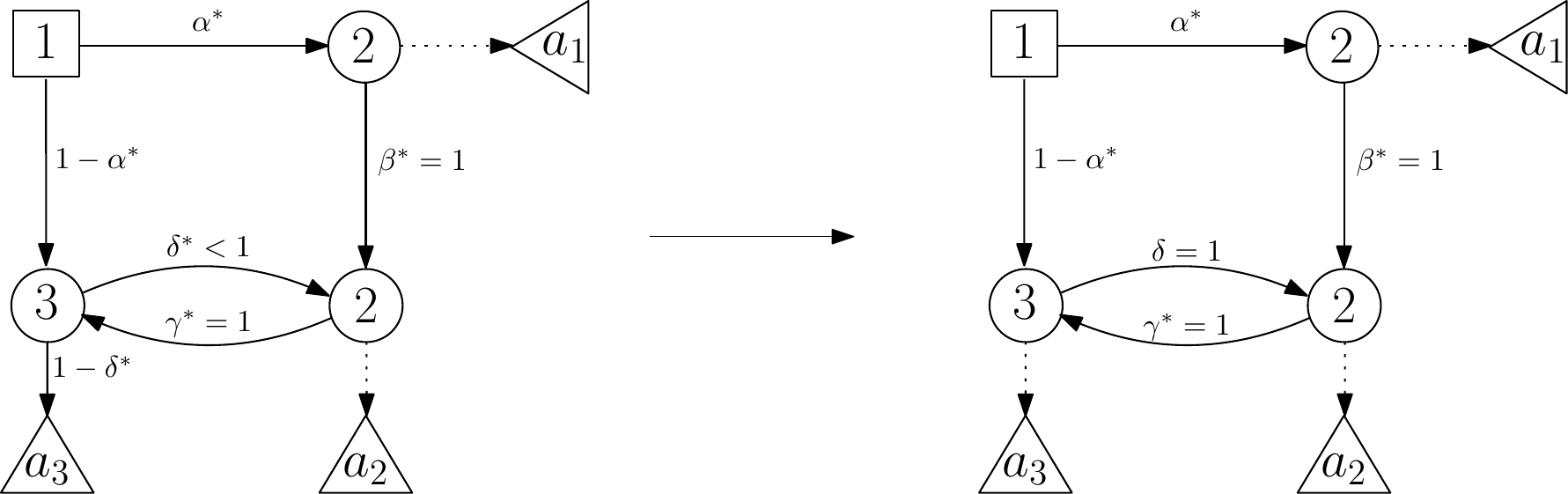}
\end{center}

This contradicts the fact that $s^*$ is an NE and settles Case 3.

\medskip
\noindent{\bf Case 4: $\alpha^* = 0$ and $\delta^* = 0$.}
\smallskip

In this case, we have $\beta^*\gamma^* \neq 1$, since otherwise
$s^*= (0,1,1,0)$ would be an NE of $(G,D,u,v_0)$ in pure strategies,
contradicting Proposition~\ref{prop:no-NE-in-pure-strategies}.
Moreover, as the outcome of the game is $a_3$ with probability one, the expected payoff of player $1$ is $u_1(a_3)$, the worst possible for him/her.
If player $1$ changes his/her strategy to $\alpha =1$, then his/her expected outcome would become
$\phi^1(1,\beta^*,\gamma^*,0) = \lambda u_1(a_1) +\mu u_1(a_2) +
 (1-\lambda-\mu)u_1(a_3)$, with $\lambda$ and $\mu$ as given by~\eqref{eq-convex} with $\alpha^* = 1$.
For $\alpha^* = 1$, $\beta^*\gamma^* < 1$ (which is true as argued above) and $\delta^* = 0$, we have
$\alpha^*(1-\beta^*\gamma^*)(1-\delta^*) + \delta^*(1-\gamma^*)>0$ and hence $\lambda+\mu>0$ (recall Case 2).
Thus, $\phi^1(1,\beta^*,\gamma^*,0)>\phi^1(0,\beta^*,\gamma^*,0) = u_1(a_3)$, hence
player $1$ can strictly improve his/her expected payoff by changing his/her strategy to $\alpha =1$.

\begin{center}
\includegraphics[width=0.7\textwidth]{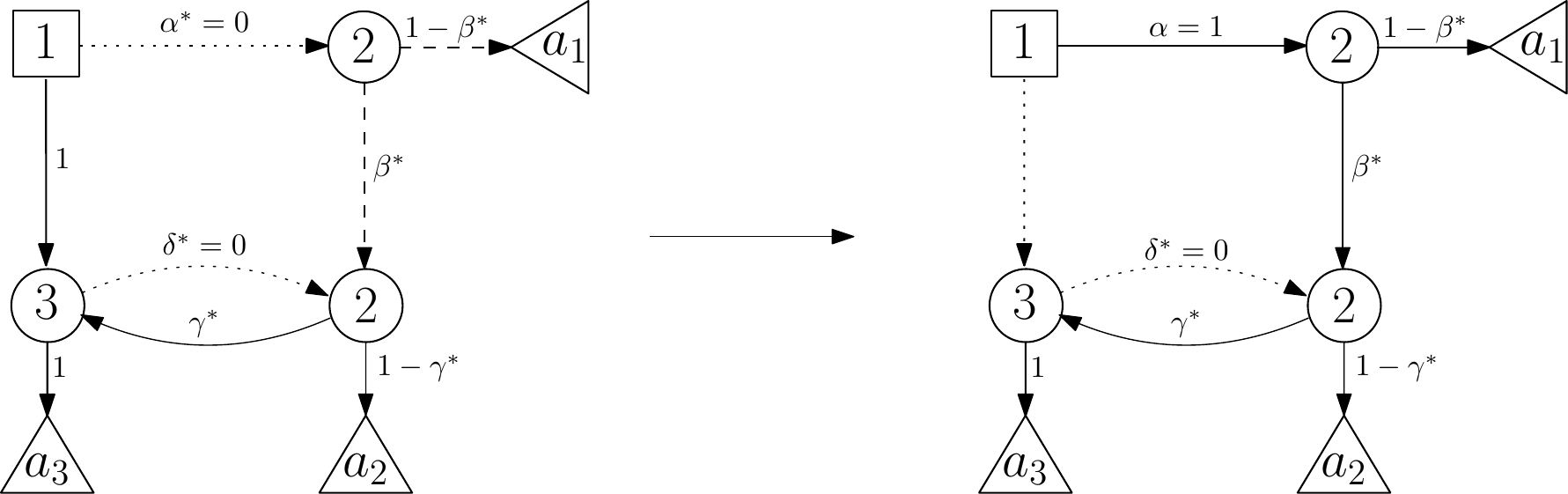}
\end{center}

This contradicts the fact that $s^*$ is an NE and settles Case 4.

\medskip
\noindent{\bf Case 5: $\gamma^*<1$ and $\delta^* = 1$.}
\smallskip

We split this case into three further subcases depending on the values of $\alpha^*$ and $\beta^*$ (assuming, in all the subcases, that
$\gamma^*<1$ and $\delta^* = 1$).

\medskip
\noindent{\bf Subcase 5.1: $\alpha^*>0$ and $\beta^* = 1$.}
\smallskip

In this case, the outcome of the game is $a_2$ with probability one, hence the expected payoff of player $2$ is $u_2(a_2)$.
If player $2$ changes his/her strategy to $(\beta,\gamma) = (0,\gamma^*)$, then his/her expected outcome would become
$\phi^2(\alpha^*,0,\gamma^*,1) = \alpha^*u_2(a_1)+(1-\alpha^*)u_2(a_2)$, which, since $\alpha^*>0$, is a strict improvement over
$\phi^2(\alpha^*,1,\gamma^*,1) = u_2(a_2)$.

\begin{center}
\includegraphics[width=0.7\textwidth]{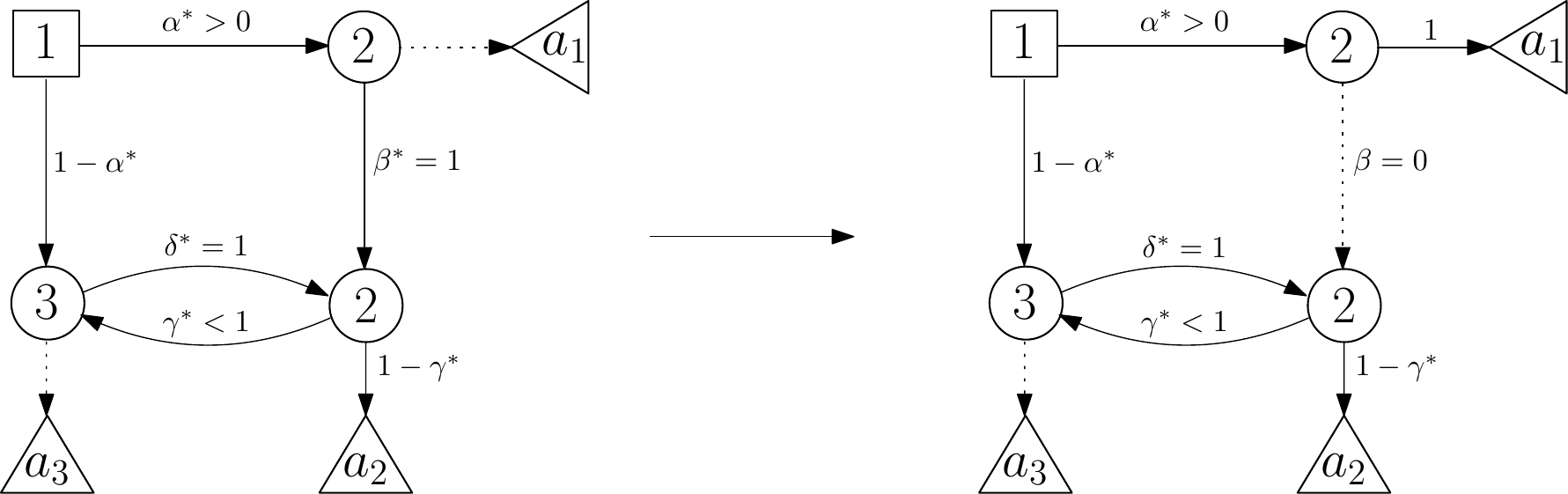}
\end{center}

This contradicts the fact that $s^*$ is an NE and settles Subcase  5.1.

\medskip
\noindent{\bf Subcase 5.2: $\alpha^*>0$ and $\beta^* < 1$.}
\smallskip

In this case, the expected payoff of player $1$ is $\phi^1(\alpha^*,\beta^*,\gamma^*,1)=\alpha^*(1-\beta^*)u_1(a_1)+(1-\alpha^*+\alpha^*\beta^*)u_1(a_2)$, a convex combination of $u_1(a_1)$ and $u_1(a_2)$, with strictly positive coefficient at $u_1(a_1)$.
If player $1$ changes his/her strategy to $\alpha = 0$, then his/her expected outcome would become
$\phi^1(0,\beta^*, \gamma^*,1) = u_1(a_2)$, a strict improvement over $\phi^1(\alpha^*,\beta^*,\gamma^*,1)$.

\begin{center}
\includegraphics[width=0.7\textwidth]{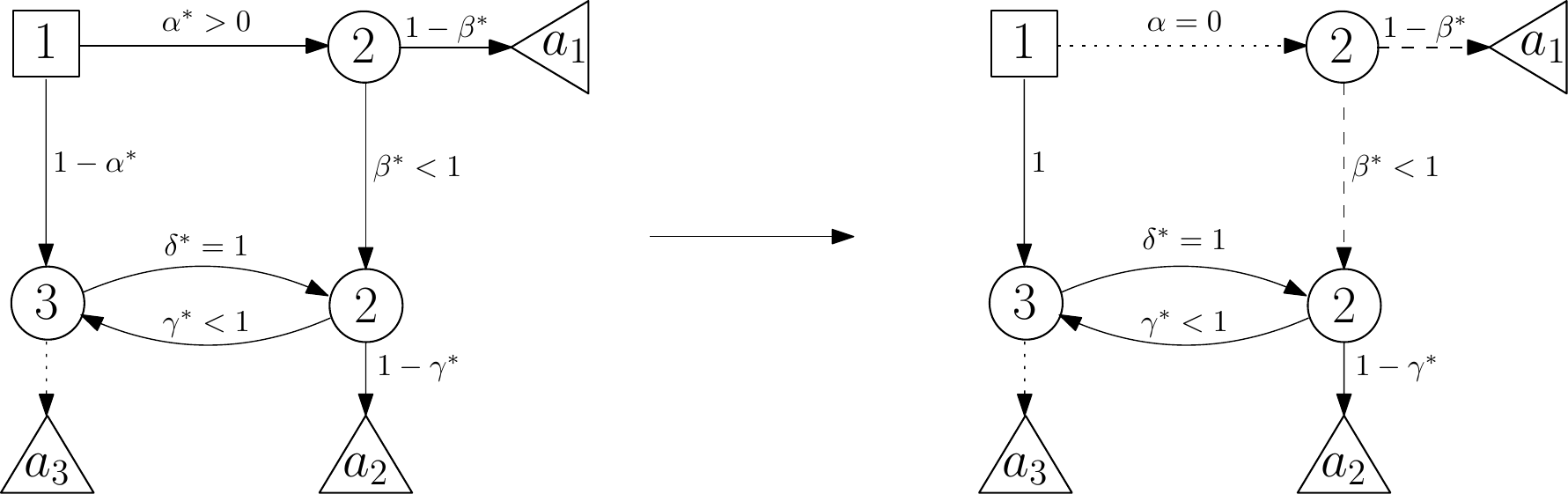}
\end{center}

This contradicts the fact that $s^*$ is an NE and settles Subcase  5.2.

\medskip
\noindent{\bf Subcase 5.3: $\alpha^*=0$.}
\smallskip

In this case, player $3$ can strictly improve his/her expected payoff from
$\phi^3(0,\beta^*,\gamma^*,1)=u_3(a_2)$
to
$\phi^3(0,\beta^*,\gamma^*,0) = u_3(a_3)$, by changing his/her strategy to $\delta = 0$.

\begin{center}
\includegraphics[width=0.7\textwidth]{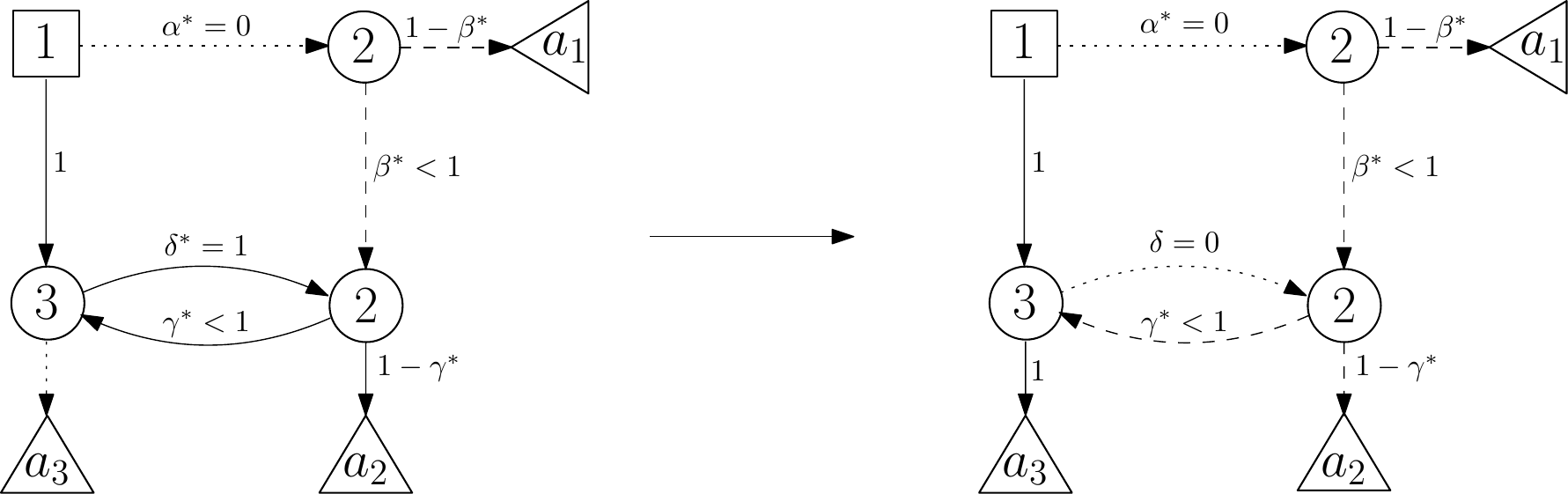}
\end{center}

This contradicts the fact that $s^*$ is an NE and settles Subcase 5.3 and with it Case 5.

\medskip
\noindent{\bf Case 6: $\alpha^* = 0$, $\gamma^* = 1$, and $\delta^*<1$.}
\smallskip

In this case, player $3$ can strictly improve his/her expected payoff from
$\phi^3(0,\beta^*,1,\delta^*)=u_3(a_3)$
to
$\phi^3(0,\beta^*,1,1) = u_3(c)$, by changing his/her strategy to $\delta = 1$.

\begin{center}
\includegraphics[width=0.7\textwidth]{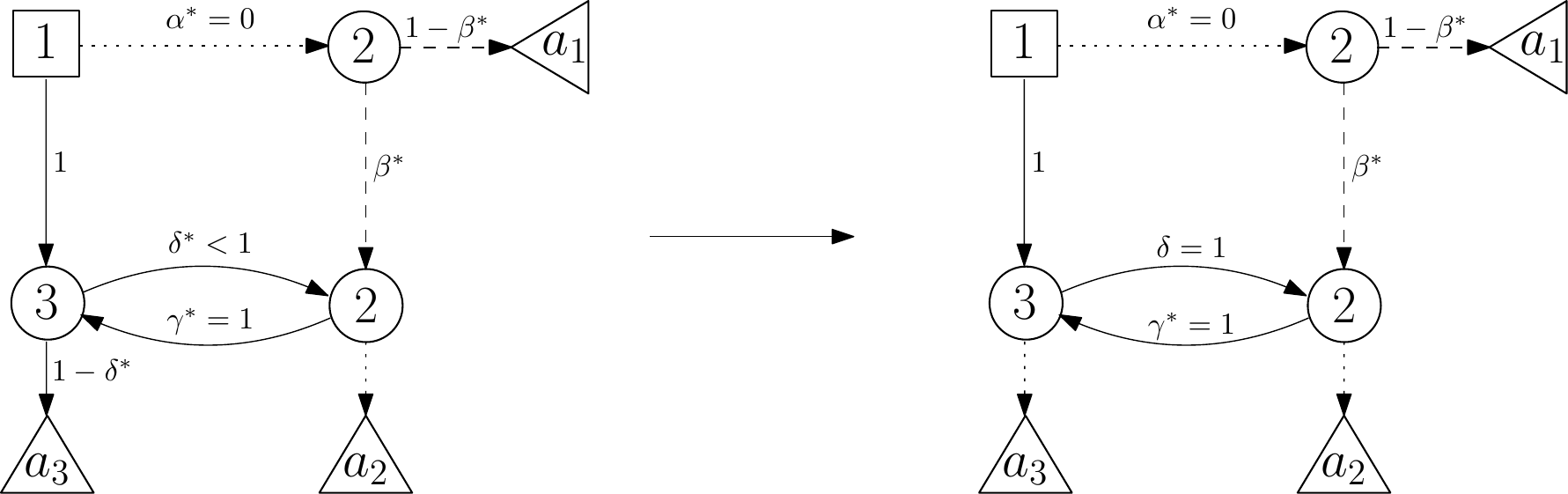}
\end{center}

This contradicts the fact that $s^*$ is an NE and settles Case 6.

\medskip
It remains to show that Cases 1--6 exhaust all possibilities.
Perhaps the easiest way to observe this is to notice that if none of the assumptions of Cases 1, 3, 4, 5, and 6 hold,
then Case 2 must hold. Indeed, assuming that Cases 1 and 5 do not hold, we infer that $\delta^*<1$.
If in addition that Case 3 does not hold, then we have $\beta^*\gamma^* < 1$. If Case 4 fails, then we have $\alpha^*>0$ or $\delta^*>0$,
and if Case 6 fails, then $\alpha^*>0$ or $\gamma^*<1$.
Together, all these inequalities imply that $\alpha^*(1-\beta^*\gamma^*)(1-\delta^*)>0$ or
$\delta^*(1-\gamma^*)>0$, which together with $\delta^*<1$ shows that the assumptions of Case 2 hold.
This completes the proof.

\section*{Appendix~C: Proof of Theorem~\ref{thm:a-priori}}\label{sec:proof-a-priori}

Let $(G,D,u,v_0)$ be the three-person DGMP game described in Figure~\ref{fig0} and suppose in addition that the payoff function $u$ satisfies conditions (1)--(\ref{assumption3}). Recall that the expected payoffs of the game are given by formula~\eqref{payoff}, which involves the outcome probabilities $\PP(s,a)$ for all $s = (\alpha,\beta,\gamma,\delta) \in [0,1]^4$ and all $a\in A = \{a_1,a_2,a_3,c\}$, see Figure~\ref{fig-probabilities}.

Analyzing the directed paths from the initial position to each terminal position and to the cycle in $G$, we obtain:
\begin{eqnarray*}
% \nonumber to remove numbering (before each equation)
\PP(\alpha,\beta,\gamma,\delta,a_1)&=& \alpha(1-\beta)\,,\\
\PP(\alpha,\beta,\gamma,\delta,a_2)&=& (\alpha\beta+(1-\alpha)\delta)(1-\gamma)\,,\\
\PP(\alpha,\beta,\gamma,\delta,a_3)&=& (1-\alpha+\alpha\beta\gamma)(1-\delta)\,,\\
\PP(\alpha,\beta,\gamma,\delta,c)&=& (1-\alpha+\alpha\beta)\gamma\delta\,.
\end{eqnarray*}
Suppose for a contradiction that the game has an NE,
say $s^* = (\alpha^*,\beta^*,\gamma^*,\delta^*) \in [0,1]^4$, in independently mixed strategies
for the a priori evaluation.
Note that
\begin{equation*}
0<\alpha^*<1 \quad \Longrightarrow \quad \frac{\partial \phi^1}{\partial \alpha}(\alpha^*,\beta^*,\gamma^*,\delta^*) = 0\,.
\end{equation*}
Indeed, if $0<\alpha^*<1$, then the definition of an NE implies that
$\alpha^*$ is a local maximum of $f^1$, where $f^1:[0,1]\to \RR$ is
the restriction of $\phi^1$, the expected payoff function of player $1$,
to $\beta = \beta^*$, $\gamma = \gamma^*$, $\delta = \delta^*$.
(That is, for all $\alpha\in [0,1]$, we have $f^1(\alpha) = \phi^1(\alpha,\beta^*,\gamma^*,\delta^*)$.)
Since $f^1$ is a continuously differentiable function of $\alpha\in [0,1]$, we infer that $$\frac{df^1}{d\alpha}(\alpha^*)= 0$$
or, equivalently,
$$\frac{\partial \phi^1}{\partial \alpha}(\alpha^*,\beta^*,\gamma^*,\delta^*) = 0\,,$$
as claimed.

\begin{sloppypar}
Similarly, we infer that
\begin{equation*}
\alpha^* = 0 \quad \Longrightarrow \quad \frac{\partial \phi^1}{\partial \alpha}(\alpha^*,\beta^*,\gamma^*,\delta^*) \le 0\,,
\end{equation*}
where $\frac{\partial \phi^1}{\partial \alpha}(\alpha^*,\beta^*,\gamma^*,\delta^*)$ denotes the right partial derivative of
$\phi^1$ with respect to $\alpha$ evaluated at $(\alpha^*,\beta^*,\gamma^*,\delta^*) = (0,\beta^*,\gamma^*,\delta^*)$, and
\begin{equation*}
\alpha^* = 1 \quad \Longrightarrow \quad \frac{\partial \phi^1}{\partial \alpha}(\alpha^*,\beta^*,\gamma^*,\delta^*) \ge 0\,,
\end{equation*}
where $\frac{\partial \phi^1}{\partial \alpha}(\alpha^*,\beta^*,\gamma^*,\delta^*)$
denotes the left partial derivative of $\phi^1$ with respect to $\alpha$
evaluated at $(\alpha^*,\beta^*,\gamma^*,\delta^*) = (1,\beta^*,\gamma^*,\delta^*)$.
We can write these three implications concisely as:
\begin{equation}
\label{phi1alpha}
\left\{
  \begin{array}{c}
    \alpha^* = 0\\
    0<\alpha^* <1\\
    \alpha^* = 1
  \end{array}
\right\} \quad \Longrightarrow \quad
\frac{\partial \phi^1}{\partial \alpha}(\alpha^*,\beta^*,\gamma^*,\delta^*) \left\{
  \begin{array}{c}
    \le  0\\
    = 0\\
    \ge 0
  \end{array}
\right\} \,.
\end{equation}
In a similar way, we obtain analogous implications involving values of the other three components of $s^*$:
\begin{equation*}
%\label{phi2beta}
\left\{
  \begin{array}{c}
    \beta^* = 0\\
    0<\beta^* <1\\
    \beta^* = 1
  \end{array}
\right\} \quad \Longrightarrow \quad
\frac{\partial \phi^2}{\partial \beta}(\alpha^*,\beta^*,\gamma^*,\delta^*) \left\{
  \begin{array}{c}
    \le  0\\
    = 0\\
    \ge 0
  \end{array}
\right\} \,,
\end{equation*}
\begin{equation*}
%\label{phi2beta}
\left\{
  \begin{array}{c}
    \gamma^* = 0\\
    0<\gamma^* <1\\
    \gamma^* = 1
  \end{array}
\right\} \quad \Longrightarrow \quad
\frac{\partial \phi^2}{\partial \gamma}(\alpha^*,\beta^*,\gamma^*,\delta^*) \left\{
  \begin{array}{c}
    \le  0\\
    = 0\\
    \ge 0
  \end{array}
\right\} \,,
\end{equation*}
\begin{equation*}
%\label{phi2beta}
\left\{
  \begin{array}{c}
    \delta^* = 0\\
    0<\delta^* <1\\
    \delta^* = 1
  \end{array}
\right\} \quad \Longrightarrow \quad
\frac{\partial \phi^3}{\partial \delta}(\alpha^*,\beta^*,\gamma^*,\delta^*) \left\{
  \begin{array}{c}
    \le  0\\
    = 0\\
    \ge 0
  \end{array}
\right\} \,.
\end{equation*}
\end{sloppypar}

Since for all $i\in I$ and all $(\alpha,\beta,\gamma,\delta)\in [0,1]^4$ we have
$$\phi^i(\alpha,\beta,\gamma,\delta) = \sum_{a\in A}\PP(\alpha,\beta,\gamma,\delta,a)u_i(a)\,,$$
we obtain
\begin{equation}\label{phi1alpha-derivative}
\frac{\partial\phi^1(\alpha,\beta,\gamma,\delta)}{\partial \alpha}
= \sum_{a\in A}\frac{\partial\PP(\alpha,\beta,\gamma,\delta,a)}{\partial \alpha}u_1(a)
\end{equation}
and similar expressions hold for
$\frac{\partial\phi^2}{\partial \beta}$,
$\frac{\partial\phi^2}{\partial \gamma}$,
$\frac{\partial\phi^3}{\partial \delta}$.
Thus, we only need to compute the partial derivatives of the form
$\frac{\partial\PP(\alpha,\beta,\gamma,\delta,a)}{\partial \eta}$
for all $\eta\in \{\alpha,\beta,\gamma,\delta\}$ and all $a\in A$.

\begin{center}
\begin{minipage}[b]{0.46\textwidth}
For $\eta = \alpha$, we obtain:
\begin{eqnarray}
% \nonumber to remove numbering (before each equation)
\label{prob-alpha-derivatives-a1}\frac{\partial\PP(\alpha,\beta,\gamma,\delta,a_1)}{\partial \alpha} &=& 1-\beta\,,\\
\label{prob-alpha-derivatives-a2}\frac{\partial\PP(\alpha,\beta,\gamma,\delta,a_2)}{\partial \alpha} &=& (\beta-\delta)(1-\gamma)\,,\\
\label{prob-alpha-derivatives-a3}\frac{\partial\PP(\alpha,\beta,\gamma,\delta,a_3)}{\partial \alpha} &=& -(1-\beta\gamma)(1-\delta)\,,\\
\label{prob-alpha-derivatives-c}\frac{\partial\PP(\alpha,\beta,\gamma,\delta,c)}{\partial \alpha} &=& -(1-\beta)\gamma\delta\,.
\end{eqnarray}
\end{minipage}
\hspace{1cm}
\begin{minipage}[b]{0.46\textwidth}
For $\eta = \beta$, we obtain:
\begin{eqnarray*}
% \nonumber to remove numbering (before each equation)
\frac{\partial\PP(\alpha,\beta,\gamma,\delta,a_1)}{\partial \beta} &=& -\alpha\,,\\
\frac{\partial\PP(\alpha,\beta,\gamma,\delta,a_2)}{\partial \beta} &=& \alpha(1-\gamma)\,,\\
\frac{\partial\PP(\alpha,\beta,\gamma,\delta,a_3)}{\partial \beta} &=& \alpha\gamma(1-\delta)\,,\\
\frac{\partial\PP(\alpha,\beta,\gamma,\delta,c)}{\partial \beta} &=& \alpha\gamma\delta\,.
\end{eqnarray*}
\end{minipage}
\end{center}

\begin{center}
\begin{minipage}[b]{0.46\textwidth}
For $\eta = \gamma$, we obtain:
\begin{eqnarray*}
% \nonumber to remove numbering (before each equation)
\frac{\partial\PP(\alpha,\beta,\gamma,\delta,a_1)}{\partial \gamma} &=& 0\,,\\
\frac{\partial\PP(\alpha,\beta,\gamma,\delta,a_2)}{\partial \gamma} &=& -(\alpha\beta+(1-\alpha)\delta)\,,\\
\frac{\partial\PP(\alpha,\beta,\gamma,\delta,a_3)}{\partial \gamma} &=& \alpha\beta(1-\delta)\,,\\
\frac{\partial\PP(\alpha,\beta,\gamma,\delta,c)}{\partial \gamma} &=& (1-\alpha+\alpha\beta)\delta\,.
\end{eqnarray*}
\end{minipage}
\hspace{1cm}
\begin{minipage}[b]{0.46\textwidth}
For $\eta = \delta$, we obtain:
\begin{eqnarray*}
% \nonumber to remove numbering (before each equation)
\frac{\partial\PP(\alpha,\beta,\gamma,\delta,a_1)}{\partial \delta} &=& 0\,,\\
\frac{\partial\PP(\alpha,\beta,\gamma,\delta,a_2)}{\partial \delta} &=& (1-\alpha)(1-\gamma)\,,\\
\frac{\partial\PP(\alpha,\beta,\gamma,\delta,a_3)}{\partial \delta} &=& -(1-\alpha+\alpha\beta\gamma)\,,\\
\frac{\partial\PP(\alpha,\beta,\gamma,\delta,c)}{\partial \delta} &=& (1-\alpha+\alpha\beta)\gamma\,.
\end{eqnarray*}
\end{minipage}
\end{center}

Recall that $u_1(a_3) = u_2(c) = u_3(a_2) = 0$.
Plugging the derivatives given by~equations~\eqref{prob-alpha-derivatives-a1}--\eqref{prob-alpha-derivatives-c}
into~\eqref{phi1alpha-derivative} yields, together with
implication~\eqref{phi1alpha} and the assumption $u_1(a_3) = 0$, the following:
\begin{enumerate}
\item[($P_1$)]
\begin{equation*}
\left\{
  \begin{array}{c}
    \alpha^* = 0\\
    0<\alpha^* <1\\
    \alpha^* = 1
  \end{array}
\right\} \quad \Longrightarrow \quad
(1-\beta^*)u_1(a_1)
+ (\beta^*-\delta^*)(1-\gamma^*)u_1(a_2)
 -(1-\beta^*)\gamma^*\delta^* u_1(c)
\left\{
  \begin{array}{c}
    \le  0\\
    = 0\\
    \ge 0
  \end{array}
\right\} \,.
\end{equation*}
\end{enumerate}
Similarly, we get:
\begin{enumerate}
\item[($P_2$)]
\begin{equation*}
\left\{
  \begin{array}{c}
    \beta^* = 0\\
    0<\beta^* <1\\
    \beta^* = 1
  \end{array}
\right\} \quad \Longrightarrow \quad
-\alpha^* u_2(a_1)
+ \alpha^*(1-\gamma^*)u_2(a_2)
+\alpha^*\gamma^*(1-\delta^*)u_2(a_3)
\left\{
  \begin{array}{c}
    \le  0\\
    = 0\\
    \ge 0
  \end{array}
\right\} \,,
\end{equation*}
\begin{equation*}
\left\{
  \begin{array}{c}
    \gamma^* = 0\\
    0<\gamma^* <1\\
    \gamma^* = 1
  \end{array}
\right\} \quad \Longrightarrow \quad
-(\alpha^*\beta^*+(1-\alpha^*)\delta^*)u_2(a_2)
+\alpha^*\beta^*(1-\delta^*)u_2(a_3)
 \left\{
  \begin{array}{c}
    \le  0\\
    = 0\\
    \ge 0
  \end{array}
\right\} \,,
\end{equation*}
\item[($P_3$)]
\begin{equation*}
\left\{
  \begin{array}{c}
    \delta^* = 0\\
    0<\delta^* <1\\
    \delta^* = 1
  \end{array}
\right\} \quad \Longrightarrow \quad
-(1-\alpha^*+\alpha^*\beta^*\gamma^*)u_3(a_3)
+(1-\alpha^*+\alpha^*\beta^*)\gamma^*u_3(c)
\left\{
  \begin{array}{c}
    \le  0\\
    = 0\\
    \ge 0
  \end{array}
\right\} \,.
\end{equation*}
\end{enumerate}

\begin{observation1}\label{observation1}
If $(1-\beta^*)\gamma^*\delta^*= 0$ {and $\beta^*\ge \delta^*$}, then either $\beta^*= 1$ and $(1-\delta^*)(1-\gamma^*) = 0$, or $\alpha^* = 1$.
\end{observation1}

\noindent
{\it Proof.} Suppose that $(1-\beta^*)\gamma^*\delta^*= 0$ and $\alpha^*<1$.
By $(P_1)$, we have $(1-\beta^*)u_1(a_1) + (\beta^*-\delta^*)(1-\gamma^*)u_1(a_2)\le 0$.
Since $u_1(a_1)>0$ and $u_1(a_2)>0$, while
$1-\beta^*\ge 0$ and
$(\beta^*-\delta^*)(1-\gamma^*)\ge 0$, we infer $1-\beta^*= 0$ and $(\beta^*-\delta^*)(1-\gamma^*) = 0$.
\hfill$\blacktriangle$

\medskip
\begin{observation2a}\label{observation2a}
If $\alpha^*>0$ and $\gamma^*(1-\delta^*) = 0$, then $\beta^* = 0$.
\end{observation2a}

\noindent
{\it Proof.}
Suppose for a contradiction that $\alpha^*>0$, $\gamma^*(1-\delta^*) = 0$, and $\beta^*>0$.
By the first implication in $(P_2)$, we obtain $-\alpha^* u_2(a_1) + \alpha^*(1-\gamma^*)u_2(a_2)\ge 0$, which further simplifies to
$(1-\gamma^*)u_2(a_2)\ge u_2(a_1)$, contrary to the assumption $u_2(a_1)>u_2(a_2)>0$. \hfill$\blacktriangle$

\medskip
\begin{observation2b}\label{observation2b}
If $\alpha^*\beta^*(1-\delta^*) = 0$, then either $\alpha^*\beta^*+(1-\alpha^*)\delta^* = 0$ or $\gamma^* = 0$.
\end{observation2b}

\noindent
{\it Proof.}
Suppose that $\alpha^*\beta^*(1-\delta^*) = 0$ and $\gamma^* > 0$.
By the second implication in $(P_2)$, we obtain $-(\alpha^*\beta^*+(1-\alpha^*)\delta^*)u_2(a_2) \ge 0$.
Since $u_2(a_2) >0$, this yields $\alpha^*\beta^*+(1-\alpha^*)\delta^*\le 0$. Since both summands are non-negative,
it follows that $\alpha^*\beta^*+(1-\alpha^*)\delta^* = 0$.\hfill$\blacktriangle$

\medskip
\begin{observation3a}\label{observation3a}
If $(1-\alpha^*+\alpha^*\beta^*)\gamma^* = 0$, then either $1-\alpha^*+\alpha^*\beta^*\gamma^* = 0$ or $\delta^* = 0$.
\end{observation3a}

\noindent
{\it Proof.}
Suppose that $(1-\alpha^*+\alpha^*\beta^*)\gamma^* = 0$ and $\delta^* > 0$.
By $(P_3)$, we have $-(1-\alpha^*+\alpha^*\beta^*\gamma^*)u_3(a_3)\ge 0$.
Since $u_3(a_3) > 0$, this implies
$(1-\alpha^*)+\alpha^*\beta^*\gamma^*\le 0$. Since both summands are non-negative,
it follows that $1-\alpha^*+\alpha^*\beta^*\gamma^* = 0$.\hfill$\blacktriangle$

\medskip
\begin{observation3b}\label{observation3b}
If $\gamma^* = 1$, then either $1-\alpha^*+\alpha^*\beta^* = 0$ or $\delta^* = 1$.
\end{observation3b}

\noindent
{\it Proof.}
Suppose that $\gamma^* = 1$ and $\delta^* < 1$.
By $(P_3)$, we have
$(1-\alpha^*+\alpha^*\beta^*)u_3(c)\le (1-\alpha^*+\alpha^*\beta^*)u_3(a_3)$, or, equivalently,
$(1-\alpha^*+\alpha^*\beta^*)(u_3(c)-u_3(a_3))\le 0$.
Since $u_3(c)-u_3(a_3)>0$, this implies
$(1-\alpha^*)+\alpha^*\beta^*\le 0$, and, since both summands are non-negative,
that $1-\alpha^*+\alpha^*\beta^*= 0$.\hfill$\blacktriangle$

\medskip
We now prove several claims.

\begin{claim}\label{claim1}
If $\alpha^*>0$ and $\beta^*>0$, then $\gamma^*>1/2$.
\end{claim}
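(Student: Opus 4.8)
\textbf{Proof proposal for Claim~\ref{claim1}.}
The plan is to extract the relevant first-order condition for player~$2$ at the position governed by $\beta$, and then feed in assumption~\eqref{assumption1}. Since $\alpha^*>0$ and $\beta^*>0$, the value $\beta^*$ lies in $(0,1]$, so the first implication in $(P_2)$ (the one concerning $\beta^*$) gives us in both cases ($0<\beta^*<1$ and $\beta^*=1$) the inequality
\[
-\alpha^* u_2(a_1) + \alpha^*(1-\gamma^*)u_2(a_2) + \alpha^*\gamma^*(1-\delta^*)u_2(a_3) \ \ge\ 0\,.
\]
Dividing through by $\alpha^*>0$ yields $\gamma^*(1-\delta^*)u_2(a_3) \ge u_2(a_1) - (1-\gamma^*)u_2(a_2)$.

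Next I would discard the factor $(1-\delta^*)\le 1$ on the left, which only weakens the inequality, obtaining
\[
\gamma^* u_2(a_3) \ \ge\ u_2(a_1) - (1-\gamma^*)u_2(a_2) \ =\ u_2(a_1) - u_2(a_2) + \gamma^* u_2(a_2)\,.
\]
Rearranging, $\gamma^*\bigl(u_2(a_3)-u_2(a_2)\bigr) \ge u_2(a_1)-u_2(a_2)$, and since $u_2(a_3)>u_2(a_2)$ the bracket is positive, so
\[
\gamma^* \ \ge\ \frac{u_2(a_1)-u_2(a_2)}{u_2(a_3)-u_2(a_2)}\,.
\]

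Finally I would observe that the right-hand side equals $1 - \dfrac{u_2(a_3)-u_2(a_1)}{u_2(a_3)-u_2(a_2)}$, which by assumption~\eqref{assumption1} is strictly greater than $1 - 1/2 = 1/2$. Hence $\gamma^* > 1/2$, as claimed. There is no real obstacle here: the only point requiring a little care is confirming that the $\beta^*\in(0,1]$ case analysis always produces the ``$\ge 0$'' branch of $(P_2)$, and that dropping $(1-\delta^*)$ preserves the inequality direction (it does, since all three quantities $\gamma^*$, $u_2(a_3)$ and $1-\delta^*$ are nonnegative). The substantive input is entirely contained in the quantitative hypothesis~\eqref{assumption1}.
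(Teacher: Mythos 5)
Your proposal is correct and follows essentially the same route as the paper: both extract the ``$\ge 0$'' branch of the first-order condition in $(P_2)$ for $\beta^*\in(0,1]$, divide by $\alpha^*>0$, and combine with assumption~\eqref{assumption1} and $u_2(a_3)>u_2(a_2)$ to force $\gamma^*>1/2$. The only cosmetic difference is that you discard the factor $(1-\delta^*)$ at the start and solve for $\gamma^*$ explicitly, whereas the paper rewrites~\eqref{assumption1} as $u_2(a_1)>(u_2(a_2)+u_2(a_3))/2$ and drops the $\gamma^*\delta^*$ term later; the substance is identical.
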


\noindent
{\it Proof}.
Since $\alpha^*>0$ and $\beta^*>0$, we infer from~($P_2$) that
\begin{equation*}
u_2(a_1) \le (1-\gamma^*)u_2(a_2) +\gamma^*(1-\delta^*)u_2(a_3)\,.
\end{equation*}
Moreover, assumption~\eqref{assumption1} is equivalent to $u_2(a_1)>(u_2(a_2)+u_2(a_3))/2$.
Combining the two inequalities and separating the terms involving $u_2(a_2)$ from those
involving $u_2(a_3)$, we obtain
$$(\gamma^*-1/2)u_2(a_2)<(\gamma^*-\gamma^*\delta^*-1/2)u_2(a_3)\,,$$
which, using the facts that $\gamma^*\delta^*\ge 0$ and $u_2(a_3)>0$, implies
$$(\gamma^*-1/2)u_2(a_2)<(\gamma^*-1/2)u_2(a_3)\,,$$ or, equivalently,
$$(\gamma^*-1/2)(u_2(a_3)-u_2(a_2))>0\,.$$
Since $u_2(a_2)<u_2(a_3)$, we infer $\gamma^*>1/2$, as claimed. \hfill$\blacktriangle$
\medskip

\begin{claim}\label{claim2}
If $\delta^*<1$ and $1-\alpha^*+\alpha^*\beta^*\gamma^*>0$, then $\gamma^*< 1/2$.
\end{claim}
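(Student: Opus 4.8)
The plan is to use exactly the same strategy as in the proof of Claim~\ref{claim1}, but now extracting the needed inequality from the behavior of player~$3$ via ($P_3$), together with assumption~\eqref{assumption2}. The hypotheses are $\delta^*<1$ and $1-\alpha^*+\alpha^*\beta^*\gamma^*>0$. First I would read off from ($P_3$) what the condition $\delta^*<1$ gives us: since $\delta^*<1$ means $\delta^*$ is either $0$ or lies in $(0,1)$, the relevant implication in ($P_3$) yields in both cases the inequality
\begin{equation*}
-(1-\alpha^*+\alpha^*\beta^*\gamma^*)u_3(a_3)+(1-\alpha^*+\alpha^*\beta^*)\gamma^*u_3(c)\le 0\,,
\end{equation*}
i.e.\ $(1-\alpha^*+\alpha^*\beta^*)\gamma^*u_3(c)\le (1-\alpha^*+\alpha^*\beta^*\gamma^*)u_3(a_3)$.

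Next I would like to replace the two coefficients $1-\alpha^*+\alpha^*\beta^*$ and $1-\alpha^*+\alpha^*\beta^*\gamma^*$ by something comparable. The key observation is that, since $\gamma^*\le 1$, we have $1-\alpha^*+\alpha^*\beta^*\gamma^* \le 1-\alpha^*+\alpha^*\beta^*$; write $N\defeq 1-\alpha^*+\alpha^*\beta^*$ and $M\defeq 1-\alpha^*+\alpha^*\beta^*\gamma^*$, so $0< M\le N$ (the strict lower bound is exactly the hypothesis $1-\alpha^*+\alpha^*\beta^*\gamma^*>0$, which guarantees $N>0$ as well). The inequality from ($P_3$) reads $N\gamma^* u_3(c)\le M\, u_3(a_3)$. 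Using $M\le N$ and $u_3(a_3)>0$ (recall the standing assumption $u_3(a_3)>u_3(a_2)=0$), I get $N\gamma^* u_3(c)\le N\, u_3(a_3)$, and dividing by $N>0$ gives $\gamma^* u_3(c)\le u_3(a_3)$, that is, $\gamma^*\le u_3(a_3)/u_3(c)$. Now assumption~\eqref{assumption2} states $u_3(a_3)/u_3(c)<1/2$, so $\gamma^*<1/2$, as claimed.

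I expect the only slightly delicate point to be the bookkeeping at the boundary: one must check that when $\delta^*=0$ the correct (``$\le 0$'') branch of ($P_3$) is the one being used, and that when $0<\delta^*<1$ the equality branch also gives ``$\le 0$''; both are immediate from the displayed implication ($P_3$), so this is routine. The substantive content is the monotonicity step $M\le N$ coming from $\gamma^*\le 1$, which lets us cancel the (possibly different) coefficients cleanly, and then a single division by the positive quantity $N$. No further case analysis is needed, and conditions (1)--(3) on the payoffs are used only through $u_3(a_3)>0$ and $u_3(c)>0$.
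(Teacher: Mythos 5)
Your proposal is correct and takes essentially the same route as the paper: the same inequality extracted from $(P_3)$ under $\delta^*<1$ (using both the $\delta^*=0$ and $0<\delta^*<1$ branches), combined with assumption~\eqref{assumption2} and the positivity of $u_3(a_3)$ and $u_3(c)$. The only difference is in the algebraic finish—you conclude directly by noting $1-\alpha^*+\alpha^*\beta^*\gamma^*\le 1-\alpha^*+\alpha^*\beta^*$ and dividing by this positive quantity, whereas the paper reaches the same conclusion via a short contradiction argument assuming $\gamma^*\ge 1/2$.
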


\noindent
{\it Proof}.
Since $\delta^*<1$ and $1-\alpha^*+\alpha^*\beta^*\gamma^*>0$,
we infer from~($P_3$) that
\begin{equation}
\label{eq1}
u_3(a_3) \ge \frac{(1-\alpha^*+\alpha^*\beta^*)\gamma^*}{1-\alpha^*+\alpha^*\beta^*\gamma^*}\cdot u_3(c)\,.
\end{equation}
Furthermore, inequality~\eqref{assumption2} means that $u_3(a_3)<u_3(c)/2$.
Combining the two inequalities and canceling out the positive factor $u_3(c)$, we obtain
$$\frac{(1-\alpha^*+\alpha^*\beta^*)\gamma^*}{1-\alpha^*+\alpha^*\beta^*\gamma^*}<1/2\,,$$
which, using $1-\alpha^*+\alpha^*\beta^*\gamma^*>0$, simplifies to
$$2\gamma^*(1-\alpha^*+\alpha^*\beta^*)<1-\alpha^*+\alpha^*\beta^*\gamma^*\,.$$
Suppose for a contradiction that $\gamma^*\ge 1/2$.
Then $1-\alpha^*+\alpha^*\beta^*<1-\alpha^*+\alpha^*\beta^*\gamma^*$,
which implies $\alpha^*\beta^*(1-\gamma^*)<0$, a contradiction. We conclude that $\gamma^*<1/2$, as claimed. \hfill$\blacktriangle$

\medskip

\begin{claim}\label{claim3}
If $\alpha^*>0$, $\beta^*>0$, and $\delta^*<1$, then $\alpha^* = 1$ and $\gamma^* = 0$.
\end{claim}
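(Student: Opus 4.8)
The plan is to combine Claims~\ref{claim1} and~\ref{claim2} with the elementary fact that $1-\alpha^*+\alpha^*\beta^*\gamma^*$ is a sum of two non-negative quantities. Assuming $\alpha^*>0$, $\beta^*>0$, and $\delta^*<1$, I would first invoke Claim~\ref{claim1}, whose hypotheses $\alpha^*>0$ and $\beta^*>0$ are met, to obtain $\gamma^*>1/2$; in particular $\gamma^*>0$.

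Next I would examine the quantity $N:=1-\alpha^*+\alpha^*\beta^*\gamma^*=(1-\alpha^*)+\alpha^*\beta^*\gamma^*$, noting that $N\ge 0$ and that both displayed summands are non-negative, so exactly one of $N>0$ and $N=0$ holds. If $N>0$, then the two hypotheses of Claim~\ref{claim2} hold---namely $\delta^*<1$ and $1-\alpha^*+\alpha^*\beta^*\gamma^*>0$---and Claim~\ref{claim2} yields $\gamma^*<1/2$, contradicting the inequality $\gamma^*>1/2$ obtained above. Hence $N=0$. Since $N$ is a sum of the two non-negative terms $1-\alpha^*$ and $\alpha^*\beta^*\gamma^*$, both of them vanish; the first gives $\alpha^*=1$, and the second then reduces to $\beta^*\gamma^*=0$, which together with $\beta^*>0$ forces $\gamma^*=0$. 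This establishes both assertions of the claim.

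I do not anticipate a genuine obstacle: the analytic content is already packaged inside Claims~\ref{claim1} and~\ref{claim2} (via the derivative conditions ($P_2$) and ($P_3$) and assumptions~\eqref{assumption1}--\eqref{assumption2}), and what remains is bookkeeping---recognising that the side condition of Claim~\ref{claim2} is precisely $N>0$, and then reading off $\alpha^*=1$ and $\gamma^*=0$ from $N=0$. The one point worth stating carefully is why $N=0$ cannot be replaced by a weaker inequality: it rests on $1-\alpha^*$ and $\alpha^*\beta^*\gamma^*$ being separately non-negative, which uses $\alpha^*,\beta^*,\gamma^*\in[0,1]$.
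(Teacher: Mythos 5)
Your proposal is correct and follows essentially the same route as the paper: combine Claim~\ref{claim1} ($\gamma^*>1/2$) with the contrapositive of Claim~\ref{claim2} to force $1-\alpha^*+\alpha^*\beta^*\gamma^*\le 0$, then use non-negativity of the two summands to read off $\alpha^*=1$ and $\beta^*\gamma^*=0$, hence $\gamma^*=0$. The only difference is that you spell out the case distinction $N>0$ versus $N=0$ explicitly, which the paper compresses into one sentence.
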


\noindent
{\it Proof.}
Suppose that  $\alpha^*>0$, $\beta^*>0$, and $\delta^*<1$.
From Claims~\ref{claim1} and~\ref{claim2} we infer that $(1-\alpha^*)+\alpha^*\beta^*\gamma^*\le 0$.
Since both summands are non-negative, both are in fact zero, that is, $\alpha^* = 1$ and $\beta^*\gamma^* = 0$.
Since $\beta^*>0$, we obtain $\gamma^* = 0$.\hfill$\blacktriangle$

\medskip

\begin{claim}\label{claim4}
If $\beta^*>0$ and $0<\delta^*<1$, then $\alpha^* = \gamma^* = 0$.
\end{claim}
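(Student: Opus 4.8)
The plan is to split on whether $\alpha^*$ is positive, using Claim~\ref{claim3} in the positive case and the second implication in $(P_2)$ in the zero case. In both steps the only inputs are Claim~\ref{claim3}, the implications recorded in $(P_2)$, and the payoff ordering $u_2(a_3)>u_2(a_1)>u_2(a_2)>u_2(c)=0$; no numerical assumption such as~\eqref{assumption1}--\eqref{assumption3} is needed.

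\emph{Step 1: ruling out $\alpha^*>0$.} Suppose $\alpha^*>0$. Together with the hypotheses $\beta^*>0$ and $\delta^*<1$, Claim~\ref{claim3} gives $\alpha^*=1$ and $\gamma^*=0$. Now I would invoke the first implication in $(P_2)$: since $\beta^*>0$, we are either in the equality branch (if $0<\beta^*<1$) or in the ``$\ge 0$'' branch (if $\beta^*=1$), and in both sub-cases $-\alpha^*u_2(a_1)+\alpha^*(1-\gamma^*)u_2(a_2)+\alpha^*\gamma^*(1-\delta^*)u_2(a_3)\ge 0$. Substituting $\alpha^*=1$ and $\gamma^*=0$ collapses this to $u_2(a_2)\ge u_2(a_1)$, contradicting $u_2(a_1)>u_2(a_2)$. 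Hence $\alpha^*=0$.

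\emph{Step 2: forcing $\gamma^*=0$.} With $\alpha^*=0$ the left-hand side of the second implication in $(P_2)$, namely $-(\alpha^*\beta^*+(1-\alpha^*)\delta^*)u_2(a_2)+\alpha^*\beta^*(1-\delta^*)u_2(a_3)$, evaluates to $-\delta^*u_2(a_2)$, which is strictly negative because $\delta^*>0$ and $u_2(a_2)>0$. A strictly negative value is incompatible with the equality branch ($0<\gamma^*<1$) and with the ``$\ge 0$'' branch ($\gamma^*=1$) of that implication, so the only remaining possibility is $\gamma^*=0$. Combining the two steps yields $\alpha^*=\gamma^*=0$, as claimed.

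I do not foresee a genuine obstacle: once Claim~\ref{claim3} is available, everything reduces to plugging values into $(P_2)$. The only mild point requiring care is presenting the sub-cases $0<\beta^*<1$ and $\beta^*=1$ uniformly in Step~1, which is legitimate because both sub-cases deliver the same sign of inequality from $(P_2)$.
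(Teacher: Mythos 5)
Your proof is correct and follows essentially the same route as the paper: a case split on $\alpha^*>0$ versus $\alpha^*=0$, with Claim~\ref{claim3} handling the first case and the stationarity conditions for player $2$ forcing $\gamma^*=0$ in the second. The only difference is presentational: where you unfold the first and second implications of $(P_2)$ inline, the paper simply cites Observations 2A and 2B, which encapsulate exactly those arguments.
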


\noindent
{\it Proof.}
Suppose that $\beta^*>0$ and $0<\delta^*<1$.
If $\alpha^*>0$, then Claim~\ref{claim3} implies that $\gamma^* = 0$, from which we obtain, using Observation 2A,
a contradictory $\beta^* = 0$. Therefore, $\alpha^*=0$.
Since $\alpha^*\beta^*+(1-\alpha^*)\delta^* = \delta^*>0$, we can now use Observation 2B to infer
that $\gamma^* = 0$.\hfill$\blacktriangle$

\medskip
\begin{claim}\label{claim5}
Either $\beta^*= 0$ or $\delta^*\in \{0,1\}$.
\end{claim}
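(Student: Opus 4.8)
The plan is to argue by contradiction, assuming that \emph{neither} alternative in the statement holds; that is, $\beta^*>0$ and $\delta^*\notin\{0,1\}$, so $0<\delta^*<1$. The goal is to contradict the hypothesis that $s^*=(\alpha^*,\beta^*,\gamma^*,\delta^*)$ is an NE for the a priori evaluation. This claim should come out as a short corollary of Claim~\ref{claim4} together with one application of the implications $(P_3)$.

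First I would invoke Claim~\ref{claim4}: its hypotheses are exactly $\beta^*>0$ and $0<\delta^*<1$, so it applies verbatim and yields $\alpha^*=\gamma^*=0$. This pins down three coordinates of $s^*$, leaving only $\beta^*\in(0,1]$ and $\delta^*\in(0,1)$ free. Next I would substitute $\alpha^*=\gamma^*=0$ into the interior branch of $(P_3)$, which is available precisely because $0<\delta^*<1$: the left-hand expression $-(1-\alpha^*+\alpha^*\beta^*\gamma^*)u_3(a_3)+(1-\alpha^*+\alpha^*\beta^*)\gamma^* u_3(c)$ collapses to $-u_3(a_3)$, forcing $u_3(a_3)=0$. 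This contradicts the standing assumption $u_3(a_3)>u_3(a_2)=0$ of Theorem~\ref{thm:a-priori}, completing the argument.

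For concreteness one can also give the underlying improvement directly: with $\alpha^*=\gamma^*=0$ the outcome is $a_2$ with probability $\delta^*$ and $a_3$ with probability $1-\delta^*$, so player $3$'s expected payoff is $(1-\delta^*)u_3(a_3)$; since $\delta^*>0$ and $u_3(a_3)>0$, player $3$ strictly improves by switching to $\delta=0$, which yields $u_3(a_3)$. Either phrasing works, and I would likely include the derivative-based version to stay uniform with the rest of Appendix~C.

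The only real care needed here is bookkeeping: verifying that the hypotheses of the negated claim match the hypotheses of Claim~\ref{claim4} exactly, and that once $\alpha^*=\gamma^*=0$ the relevant branch of $(P_3)$ (equivalently, the direct deviation for player $3$) applies without any further case distinction. I do not expect a genuine obstacle here — the substantive case analysis was already absorbed into Claims~\ref{claim1}--\ref{claim4}, so Claim~\ref{claim5} is essentially a one-line consequence.
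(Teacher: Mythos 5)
Your proposal is correct and follows essentially the same route as the paper: assume $\beta^*>0$ and $0<\delta^*<1$, invoke Claim~\ref{claim4} to get $\alpha^*=\gamma^*=0$, and then derive a contradiction from player $3$'s optimality condition. The only cosmetic difference is that the paper cites Observation~3A (itself a packaged consequence of $(P_3)$) rather than substituting into the interior branch of $(P_3)$ directly, so the substance is identical.
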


\noindent
\noindent
{\it Proof.}
Suppose for a contradiction that $\beta^*>0$ and $0<\delta^*<1$.
By Claim~\ref{claim4}, we obtain $\alpha^* = \gamma^* = 0$.
Since $\delta^*>0$, this contradicts Observation 3A.
\hfill$\blacktriangle$

\medskip
\begin{claim}\label{claim6}
$\beta^* > 0$.
\end{claim}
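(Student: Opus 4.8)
The plan is to derive a contradiction from the assumption $\beta^* = 0$, using the equilibrium conditions $(P_1)$ and $(P_3)$ together with Claim~\ref{claim2} (and possibly Claims~\ref{claim1}--\ref{claim5} to clear away incompatible sub-cases). First I would observe that if $\beta^* = 0$, then the outcome $a_1$ occurs with probability $\alpha^*(1-\beta^*) = \alpha^*$ and the formulas for the other outcome probabilities simplify: $\PP(\cdot,a_2) = (1-\alpha^*)\delta^*(1-\gamma^*)$, $\PP(\cdot,a_3) = (1-\alpha^*)(1-\delta^*)$, and $\PP(\cdot,c) = (1-\alpha^*)\gamma^*\delta^*$. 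The key tension is that player~$1$ strictly prefers $c$ to $a_1$ and strictly prefers $a_1$ to $a_3$ (by~(1)), which should push $\alpha^*$ toward an interior or extreme value forced by $(P_1)$; meanwhile player~$3$ prefers both $a_1$ and $c$ to $a_3$ (by~(3)), which constrains $\delta^*$ and $\gamma^*$ via $(P_3)$.

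The main steps, in order, would be: (i) Split on the value of $\alpha^*$. If $\alpha^* = 0$, the game realizes only $a_2$ (with probability $\delta^*(1-\gamma^*)$) and $a_3$ (with probability $1-\delta^*$), forcing $\PP(c) = 0$; then $(P_1)$ with $\beta^* = 0$ reads $u_1(a_1) - \delta^*(1-\gamma^*)u_1(a_2) + (\text{something})\le 0$, but since $u_1(a_1)>0$ and the coefficient of $u_1(a_1)$ is $1-\beta^* = 1$, while the $a_2$ term has the wrong sign only if $\beta^*<\delta^*$ — I would check that $(1-\beta^*)u_1(a_1) + (\beta^*-\delta^*)(1-\gamma^*)u_1(a_2) = u_1(a_1) - \delta^*(1-\gamma^*)u_1(a_2)$ cannot be $\le 0$, or if it can, that this pins $\delta^*(1-\gamma^*)$ to a specific range that then contradicts $(P_3)$. (ii) If $0<\alpha^*<1$, then $(P_1)$ gives the equality $u_1(a_1) - \delta^*(1-\gamma^*)u_1(a_2) - \gamma^*\delta^*u_1(c) = 0$; combined with $\delta^*\in\{0,1\}$ from Claim~\ref{claim5} — wait, Claim~\ref{claim5} only gives this dichotomy when $\beta^*>0$, so under $\beta^* = 0$ I cannot invoke it directly; instead I would analyze $\delta^*$ afresh. (iii) If $\alpha^* = 1$, then $\beta^* = 0$ means $a_1$ occurs with probability $1$, so player~$1$ gets $u_1(a_1)$; but player~$1$ could deviate to $\alpha = 0$ and, depending on $\gamma^*,\delta^*$, obtain a mixture of $a_2$ and $a_3$, or player could stay — here I'd use $(P_1)$'s ``$\ge 0$'' clause, namely $u_1(a_1) - \delta^*(1-\gamma^*)u_1(a_2) - \gamma^*\delta^*u_1(c)\ge 0$, and separately bring in the constraint from Claim~\ref{claim2} or from player~$3$'s condition to show $\gamma^*\delta^*$ must be large enough that $u_1(a_1) < \gamma^*\delta^* u_1(c)$ by assumption~\eqref{assumption3}, a contradiction.

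I expect the main obstacle to be the bookkeeping across the several sub-cases of $\alpha^*$ and the need to re-derive, under $\beta^* = 0$, analogues of the structural facts that Claims~\ref{claim1}--\ref{claim5} established only under $\beta^*>0$; in particular, showing that $\delta^*$ must lie in $\{0,1\}$ (or otherwise) without the hypothesis $\beta^*>0$ will require going back to Observations~2B and~3A and the raw derivative conditions. The cleanest route is probably to show that $\beta^* = 0$ forces $\gamma^*\delta^* > 0$ with $\gamma^*\delta^*$ large (via $(P_3)$ and assumption~\eqref{assumption2}, since $1-\alpha^*+\alpha^*\beta^*\gamma^* = 1-\alpha^*+\alpha^*\cdot 0\cdot\gamma^* = 1-\alpha^*$, so Claim~\ref{claim2}'s hypothesis $1-\alpha^*+\alpha^*\beta^*\gamma^*>0$ becomes simply $\alpha^*<1$), and then that player~$1$'s indifference/boundary condition $(P_1)$, which under $\beta^* = 0$ reads $u_1(a_1) - \delta^*(1-\gamma^*)u_1(a_2) \gtreqless \gamma^*\delta^* u_1(c)$ according to whether $\alpha^*$ is $1$, interior, or $0$, becomes inconsistent with $u_1(a_1) < u_1(c)$ and assumption~\eqref{assumption3}. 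Once the contradiction is reached in every case, we conclude $\beta^* > 0$.
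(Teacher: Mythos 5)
There is a genuine gap: your argument relies almost entirely on the first-order conditions $(P_1)$--$(P_3)$ (and Claim~\ref{claim2}, whose hypothesis $1-\alpha^*+\alpha^*\beta^*\gamma^*>0$ you need), but these are insufficient precisely in the critical sub-case $\beta^*=0$, $\alpha^*=1$. There the play reaches $a_1$ with probability one, so the two cycle positions are off the play path: the $\gamma$-part of $(P_2)$ and all of $(P_3)$ reduce to $0\le 0$ (indeed $1-\alpha^*+\alpha^*\beta^*=1-\alpha^*+\alpha^*\beta^*\gamma^*=0$), Claim~\ref{claim2} is unavailable, and your plan to ``force $\gamma^*\delta^*$ large via $(P_3)$'' so that $u_1(a_1)<\gamma^*\delta^*u_1(c)$ cannot get off the ground. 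Concretely, with the sample payoffs of the paper, the profile $(\alpha,\beta,\gamma,\delta)=(1,0,0,1/5)$ satisfies every one of the single-coordinate derivative conditions you invoke (player $1$: $u_1(a_1)=1\ge \delta(4-\gamma)=4/5$; player $2$'s $\beta$-condition: $u_2(a_1)=5\ge(1-\gamma)u_2(a_2)+\gamma(1-\delta)u_2(a_3)=1$; the $\gamma$- and $\delta$-conditions are vacuous), yet it is not an NE: the only profitable deviation is player $2$ switching \emph{both} coordinates to $(\beta,\gamma)=(1,1)$, which yields $(1-\delta^*)u_2(a_3)=6.4>5$. Since $\phi^2$ is only bilinear in $(\beta,\gamma)$, coordinate-wise stationarity does not imply that player $2$ has no profitable deviation, so any proof confined to $(P_1)$--$(P_3)$ must fail here.

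This is exactly where the paper's proof departs from your route. After using Observations 1, 2B and 3A to reduce to $\delta^*>0$ and $\alpha^*=1$ (that reduction is compatible with your sketch), it invokes assumption~\eqref{assumption3} to split into two cases, $u_1(a_1)<\delta^*u_1(c)$ or $u_2(a_1)<(1-\delta^*)u_2(a_3)$, and in each case exhibits an explicit, non-local profitable deviation: player $1$ moving to $\alpha=0$ (earning $\delta^*(1-\gamma^*)u_1(a_2)+\gamma^*\delta^*u_1(c)>\delta^*u_1(c)$), or player $2$ moving to $(\beta,\gamma)=(1,1)$ (earning $(1-\delta^*)u_2(a_3)$); a similar joint deviation by player $2$ also disposes of the sub-case $\delta^*=0$, $\alpha^*=1$. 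To repair your proposal you would need to supplement the derivative conditions with these explicit multi-coordinate deviation arguments (or otherwise use the full NE property, not just its first-order consequences) in the $\alpha^*=1$ cases; also note your case (i) hope that the $(P_1)$ inequality ``cannot be $\le 0$'' is false (it holds, e.g., when $\delta^*$ is close to $1$), so that branch must instead be closed as in the paper via Observations 2B and 3A.
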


\noindent
{\it Proof.}
Suppose for a contradiction that $\beta^* = 0$. We infer from Observation 2B that
$(1-\alpha^*)\delta^* = 0$ or $\gamma^* = 0$.

Suppose that $\delta^* = 0$. Then $\alpha^* = 1$ by Observation 1.
But now, player $2$ can strictly improve his/her expected payoff from $\phi^2(1,0,\gamma^*,0) = u_2(a_1)$
to $\phi^2(1,1,1,0) = u_2(a_3)$ by changing his/her strategy to $(\beta,\gamma) = (1,1)$,
contrary to the fact that $s^*$ is an NE.

\begin{center}
\includegraphics[width=0.7\textwidth]{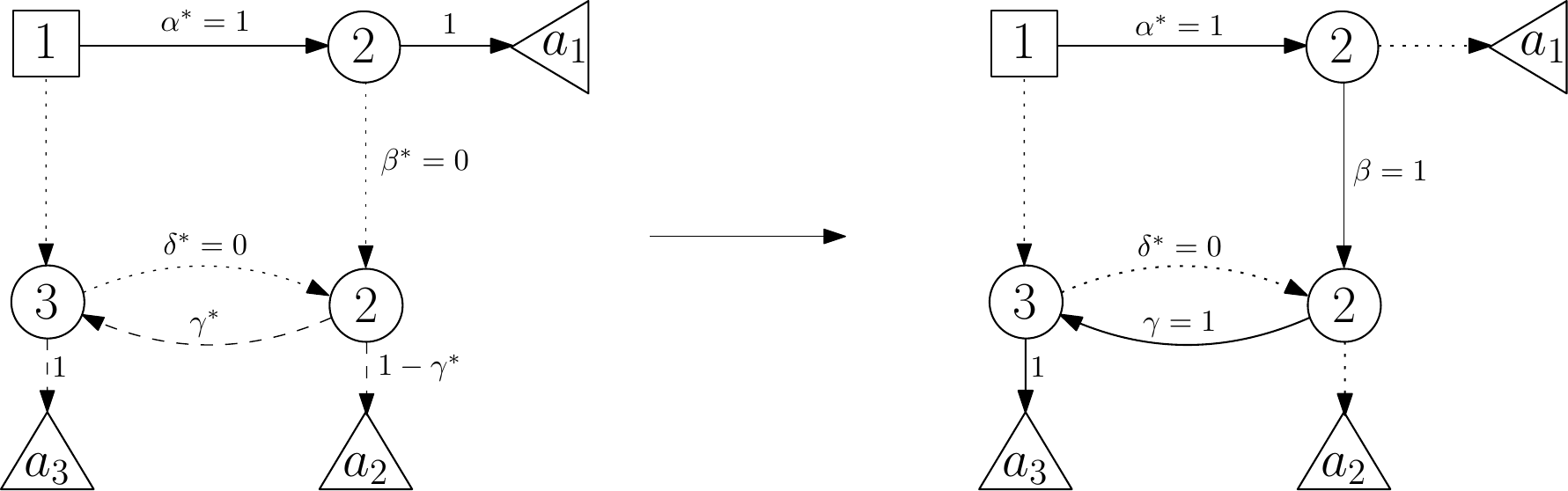}
\end{center}

It follows that $\delta^*>0$. Suppose that $\alpha^* <1$. Then $(1-\alpha^*)\delta^* > 0$ and hence $\gamma^* = 0$. Since
$1-\alpha^*+\alpha^*\beta^*\gamma^* > 0$ and $\delta^*>0$, this contradicts Observation 3A. We thus have $\alpha^* = 1$.

Next, note that we have either
$\frac{u_1(a_1)}{u_1(c)}<\delta^*$ or
$\frac{u_2(a_1)}{u_2(a_3)}<1-\delta^*$, since otherwise
we would have
$\frac{u_1(a_1)}{u_1(c)}+\frac{u_2(a_1)}{u_2(a_3)}\ge 1$, contradicting
inequality~\eqref{assumption3}.

Suppose first that $u_1(a_1)<\delta^*u_1(c)$. Then, player $1$ can strictly improve his/her expected payoff from $\phi^1(1,0,\gamma^*,\delta^*) = u_1(a_1)$ to
$$\phi^1(0,0,\gamma^*,\delta^*) = \delta^*(1-\gamma^*)u_1(a_2)+\gamma^*\delta^* u_1(c)>
\delta^*((1-\gamma^*)u_1(c)+\gamma^*u_1(c)) = \delta^*u_1(c)>u_1(a_1)\,,$$
by changing his/her strategy to $\alpha = 0$.

\begin{center}
\includegraphics[width=0.7\textwidth]{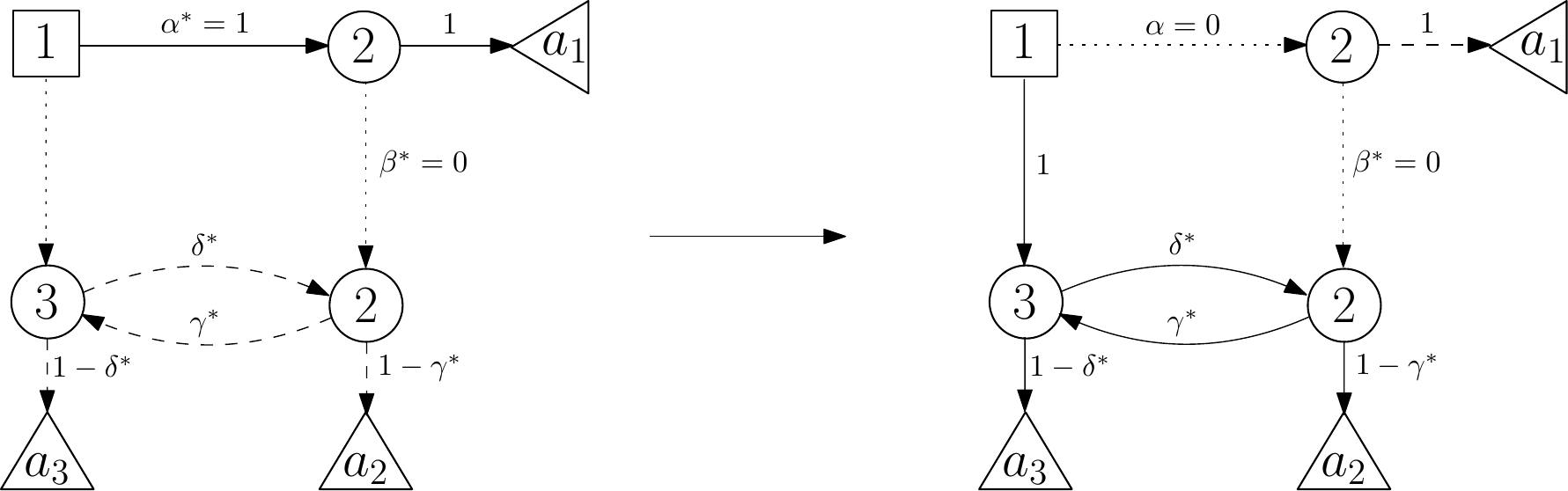}
\end{center}

This contradicts the fact that $s^*$ is an NE.

Suppose now that $u_2(a_1)<(1-\delta^*)u_2(a_3)$. Then, player $2$ can strictly improve his/her expected payoff from $\phi^2(1,0,\gamma^*,\delta^*) = u_2(a_1)$ to $\phi^2(1,1,1,\delta^*) = (1-\delta^*)u_2(a_3)$ by changing his/her strategy to $(\beta,\gamma) = (1,1)$.

\begin{center}
\includegraphics[width=0.7\textwidth]{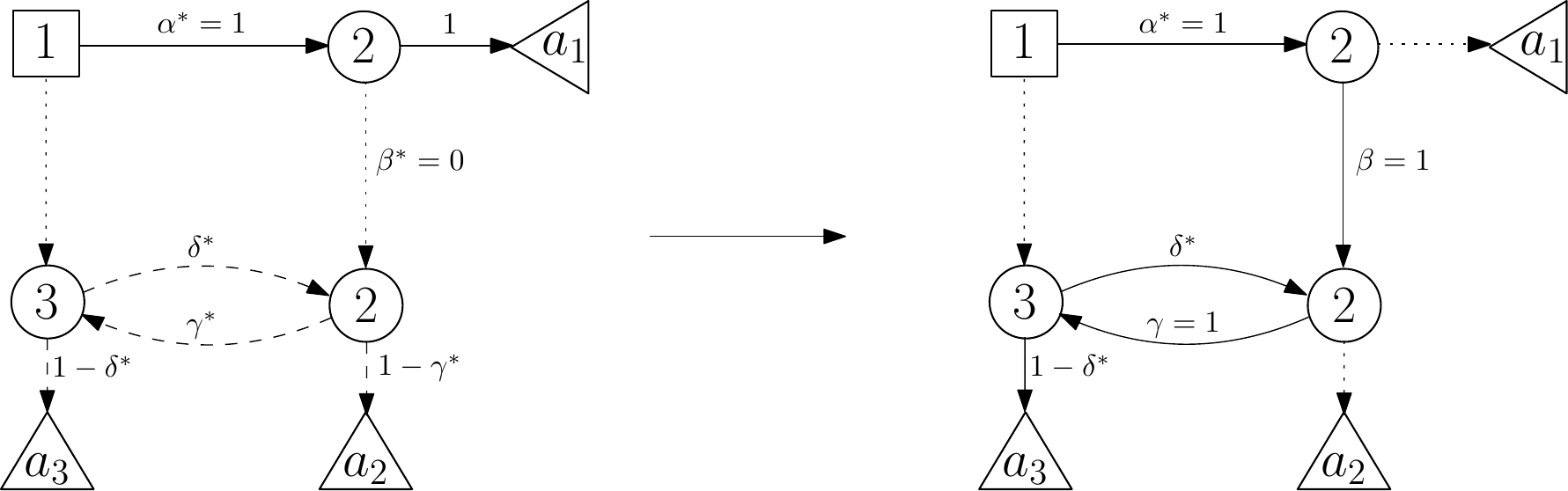}
\end{center}

This contradicts the fact that $s^*$ is an NE and completes the proof of Claim~\ref{claim6}.
\hfill$\blacktriangle$

\medskip
\begin{claim}\label{claim7}
$0 < \beta^* < 1$ and $\delta^*\in \{0,1\}$.
\end{claim}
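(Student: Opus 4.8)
By Claims~\ref{claim5} and~\ref{claim6} we already have $\beta^*>0$ and $\delta^*\in\{0,1\}$, so the statement of Claim~\ref{claim7} reduces to proving the single inequality $\beta^*<1$. The plan is to assume $\beta^*=1$ and derive a contradiction, splitting into the two cases $\delta^*=0$ and $\delta^*=1$.

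Suppose first that $\delta^*=0$. If in addition $\alpha^*>0$, then the hypotheses of Claim~\ref{claim1} hold, so $\gamma^*>1/2$; on the other hand $\delta^*=0<1$ and the quantity $1-\alpha^*+\alpha^*\beta^*\gamma^*$, which equals $1-\alpha^*+\alpha^*\gamma^*$ here, is strictly positive (it is $\gamma^*>1/2$ when $\alpha^*=1$ and at least $1-\alpha^*>0$ when $\alpha^*<1$), so Claim~\ref{claim2} applies and gives $\gamma^*<1/2$, a contradiction. If instead $\alpha^*=0$, then substituting $\alpha^*=0$, $\beta^*=1$, $\delta^*=0$ into the outcome probabilities shows that the outcome is $a_3$ with probability one, i.e.\ player~$1$ receives his worst payoff $u_1(a_3)=0$; hence, unless $\gamma^*=1$, player~$1$ could switch to $\alpha=1$, putting probability $1-\gamma^*>0$ on his best outcome $a_2$, contradicting that $s^*$ is an NE. Thus $\gamma^*=1$, but then $s^*=(0,1,1,0)$ is a pure strategy profile, contradicting Proposition~\ref{prop:no-NE-in-pure-strategies}.

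Suppose next that $\delta^*=1$. Substituting $\beta^*=\delta^*=1$ into the outcome probabilities shows that the outcome is $a_2$ with probability $1-\gamma^*$ and $c$ with probability $\gamma^*$, irrespective of $\alpha^*$; since $u_2(c)=0$, the expected payoff of player~$2$ is $(1-\gamma^*)u_2(a_2)$. If $\gamma^*>0$, player~$2$ could switch to $\gamma=0$ (keeping $\beta=1$), making the outcome $a_2$ with probability one and his payoff $u_2(a_2)$, a strict improvement; hence $\gamma^*=0$. Now the outcome is $a_2$ with probability one, so player~$3$ gets his worst payoff $u_3(a_2)=0$; unless $\alpha^*=1$, player~$3$ could switch to $\delta=0$ and obtain $a_3$ with probability $1-\alpha^*>0$, again a strict improvement. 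Therefore $\alpha^*=1$, and $s^*=(1,1,0,1)$ is again a pure strategy profile, contradicting Proposition~\ref{prop:no-NE-in-pure-strategies}. (Alternatively, one could bypass Proposition~\ref{prop:no-NE-in-pure-strategies} at the two resulting pure profiles and point to an explicit profitable deviation, e.g.\ player~$3$ switching to $\delta=1$ at $(0,1,1,0)$, or player~$2$ switching to $\beta=0$ at $(1,1,0,1)$.)

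Almost all of this is routine substitution into the four outcome-probability formulas together with one-line comparisons of payoff values. The only step that uses earlier machinery is the subcase $\delta^*=0$, $\alpha^*>0$, where the contradiction comes from combining the quantitative Claims~\ref{claim1} and~\ref{claim2}; the (mild) subtlety there, and the place I would be most careful, is verifying that $1-\alpha^*+\alpha^*\beta^*\gamma^*>0$ so that Claim~\ref{claim2} is applicable.
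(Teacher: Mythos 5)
Your proof is correct, and it reaches the conclusion by a genuinely different route than the paper. The paper also reduces the claim to excluding $\beta^*=1$ (via Claims~\ref{claim5} and~\ref{claim6}), but from there it stays entirely within the ``first-order'' machinery: Observation 1 gives the trichotomy $\delta^*=1$, $\gamma^*=1$, or $\alpha^*=1$; the case $\delta^*=1$ is killed by Observations 2A and 2B, which force the pure profile $(0,1,0,1)$, contradicting Proposition~\ref{prop:no-NE-in-pure-strategies}; and in the case $\delta^*=0$, Observation 3B rules out $\gamma^*=1$, so $\alpha^*=1$, after which a single explicit deviation of player~$2$ to $(\beta,\gamma)=(1,1)$ finishes the argument. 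You instead split directly on $\delta^*\in\{0,1\}$ and argue by hand on the outcome probabilities: for $\delta^*=0$ with $\alpha^*>0$ you combine the quantitative Claims~\ref{claim1} and~\ref{claim2} (so your argument leans on assumptions~\eqref{assumption1} and~\eqref{assumption2}, which the paper's proof of this claim does not need, though they are of course available), and for $\alpha^*=0$ and for $\delta^*=1$ you exhibit explicit profitable deviations, ending at the pure profiles $(0,1,1,0)$ and $(1,1,0,1)$ rather than the paper's $(0,1,0,1)$. All your deviations and probability substitutions check out, including the point you flagged: $1-\alpha^*+\alpha^*\beta^*\gamma^*>0$ does hold in your subcase, so Claim~\ref{claim2} applies. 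The trade-off: the paper's route is more uniform (everything flows from the observations derived from the stationarity conditions $(P_1)$--$(P_3)$ and needs fewer of the numerical hypotheses at this point), while yours is more self-contained and concrete, at the cost of invoking the stronger payoff inequalities and doing a bit more case-by-case computation.
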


\noindent
{\it Proof.}
The fact that $\delta^*\in \{0,1\}$ follows from Claims~\ref{claim5} and~\ref{claim6}; it thus
suffices to show that $0<\beta^*<1$. Suppose this is not the case. Then $\beta^* = 1$ by Claim~\ref{claim6}.
Observation 1 implies that either $\delta^* = 1$, $\gamma^* = 1$, or $\alpha^* = 1$.

Suppose that $\delta^* = 1$. Then $\alpha^* = 0$ by Observation 2A.
Furthermore, Observation 2B implies that $\gamma^* = 0$.
Therefore $s^*= (0,1,0,1)$ is an NE of $(G,D,u,v_0)$
in pure strategies, contradicting Proposition~\ref{prop:no-NE-in-pure-strategies}.
We conclude that $\delta^*=0$, which implies, in particular, that $\gamma^* = 1$ or $\alpha^* = 1$.

Since $\beta^* = 1$, we have $1-\alpha^*+\alpha^*\beta^* = 1$ and
Observation 3B implies that $\gamma^* < 1$. Therefore, $\alpha^* = 1$.
But now, player $2$ can strictly improve his/her expected payoff from
$\phi^2(1,1,\gamma^*,0) = (1-\gamma^*)u_2(a_2)+\gamma^*u_2(a_3)$
to $\phi^2(1,1,1,0) = u_2(a_3)$
by changing his/her strategy to $(\beta,\gamma) = (1,1)$.

\begin{center}
\includegraphics[width=0.7\textwidth]{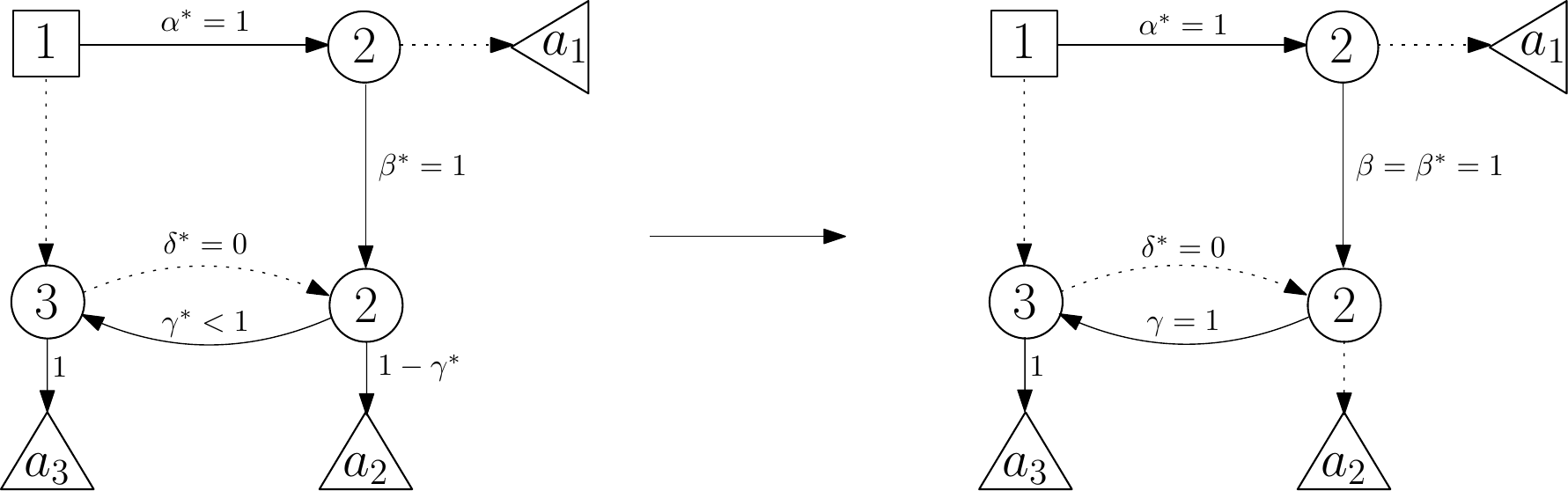}
\end{center}

This contradicts the fact that $s^*$ is an NE and completes the proof of Claim~\ref{claim7}.
\hfill$\blacktriangle$

\medskip
\begin{claim}\label{claim8}
$\delta^* = 0$.
\end{claim}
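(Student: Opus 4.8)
The plan is to argue by contradiction. Suppose $\delta^* = 1$. By Claim~\ref{claim6} we then have $\beta^* > 0$, and by Claim~\ref{claim7} we also know $\delta^*\in\{0,1\}$; the latter is what lets us conclude the desired $\delta^* = 0$ as soon as the case $\delta^* = 1$ has been ruled out. The idea is then to run a short chain of the already-established observations, each step pinning one more coordinate of $s^*$ to a boundary value, until a direct contradiction with $\delta^* = 1$ surfaces.

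Concretely, here is the chain I would carry out. First, $\delta^* = 1$ forces $\gamma^*(1-\delta^*) = 0$; combined with $\beta^* > 0$ (Claim~\ref{claim6}), Observation~2A, read contrapositively, forces $\alpha^* = 0$. Next, $\alpha^* = 0$ gives $\alpha^*\beta^*(1-\delta^*) = 0$, so Observation~2B yields that either $\alpha^*\beta^* + (1-\alpha^*)\delta^* = 0$ or $\gamma^* = 0$; the first alternative is exactly $\delta^* = 0$, contrary to our assumption, so $\gamma^* = 0$. Finally, $\gamma^* = 0$ gives $(1-\alpha^*+\alpha^*\beta^*)\gamma^* = 0$, so Observation~3A yields that either $1-\alpha^*+\alpha^*\beta^*\gamma^* = 0$ or $\delta^* = 0$; since $\alpha^* = 0$ the first alternative reads $1 = 0$, which is absurd, so $\delta^* = 0$, again contradicting $\delta^* = 1$. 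Hence $\delta^* \ne 1$, and Claim~\ref{claim7} then gives $\delta^* = 0$.

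I do not anticipate a real obstacle: once Claims~\ref{claim6} and~\ref{claim7} and Observations~2A, 2B, 3A are in hand, the argument is pure bookkeeping, and the only care required is to apply each observation in the correct direction and to check that its hypothesis is satisfied at the point of use. As a sanity check (and an alternative to the final step), one can verify directly that $\delta^* = 1$, $\alpha^* = 0$, $\gamma^* = 0$ makes the outcome equal to $a_2$ with probability one --- the worst outcome for player $3$ --- so player $3$ could strictly improve by switching to $\delta = 0$, which reaches $a_3$ with probability one; this explicit deviation could replace the appeal to Observation~3A if a more self-contained presentation is preferred.
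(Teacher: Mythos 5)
Your proof is correct and takes essentially the same route as the paper: both rule out $\delta^*=1$ (which, by Claim~7, forces $\delta^*=0$) by chaining Observations 2A, 2B, and 3A together with Claims~6 and~7, the only difference being the order of application --- you pin $\alpha^*=0$ via 2A first and contradict via 3A, while the paper gets $\gamma^*=0$ via 2B, then $\alpha^*=1$ via 3A, and contradicts 2A. Both chains check out, including your alternative explicit deviation for player~3 at $(\alpha^*,\gamma^*,\delta^*)=(0,0,1)$.
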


\noindent
{\it Proof.}
Suppose that $\delta^*>0$. Then $\delta^* = 1$ by Claim~\ref{claim7}.
By the same claim, $\beta^* > 0$, which implies that
$\alpha^*\beta^*+(1-\alpha^*)\delta^*> 0$,
hence Observation 2B yields $\gamma^* = 0$.
Next, Observation 3A implies that $\alpha^* = 1$.
But this contradicts Observation 2A. \hfill$\blacktriangle$

\medskip
We now have everything ready to derive the final contradiction, which will complete the proof of Theorem~\ref{thm:a-priori}.
We have $0<\beta^*<1$ and $\delta^* = 0$ by Claims~\ref{claim7} and~\ref{claim8}.
From Observation 1, we infer $\alpha^* = 1$.
This in turn implies, using Observation 2A, that $\gamma^*> 0$.
But now, player $2$ can strictly improve his/her expected payoff from
$\phi^2(1,\beta^*,\gamma^*,0) = (1-\beta^*)u_2(a_1)+\beta^*(1-\gamma^*)u_2(a_2)+\beta^*\gamma^*u_2(a_3)$
to $\phi^2(1,1,1,0) = u_2(a_3)$
by changing his/her strategy to $(\beta,\gamma) = (1,1)$.

\begin{center}
\includegraphics[width=0.7\textwidth]{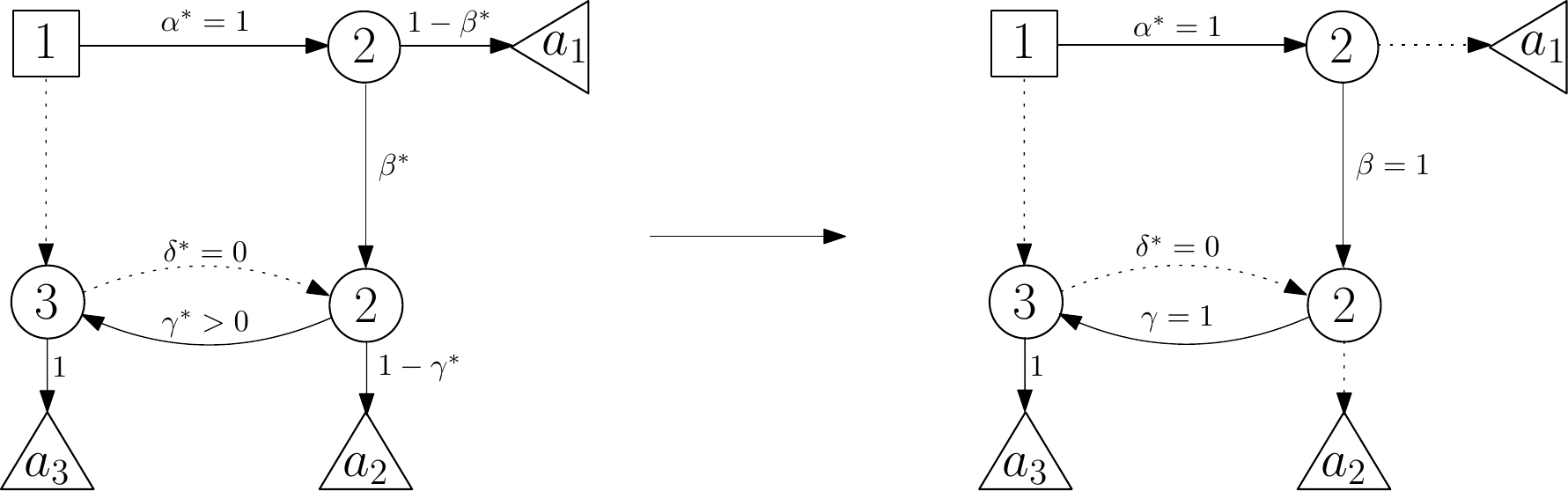}
\end{center}

This contradicts the fact that $s^*$ is an NE and completes the proof.

\subsection*{Acknowledgement}

The authors are grateful to the anonymous referees for helpful remarks.

\end{document}